\newcommand\reallywidehat[1]{%
\savestack{\tmpbox}{\stretchto{%
  \scaleto{%
      \scalerel*[\widthof{\ensuremath{#1}}]{\kern-.6pt\bigwedge\kern-.6pt}%
          {\rule[-\textheight/2]{1ex}{\textheight}}
            }{\textheight}%
            }{0.5ex}}%
            \stackon[1pt]{#1}{\tmpbox}%
            }
\newcommand{\wt}[1]{\widetilde{#1}}
\newcommand{\wh}[1]{\widehat{#1}}
\newcommand{\abs}[1]{\left|#1\right|}
\newcommand{\eps}{\varepsilon}
\newcommand{\Nbb}{\mathbb{N}}
\newcommand{\floor}[1]{\left\lfloor#1\right\rfloor}
\DeclarePairedDelimiter\brac{\lbrack}{\rbrack}
\DeclarePairedDelimiter\set{\lbrace}{\rbrace}
\DeclarePairedDelimiter\paren{\lparen}{\rparen}
\newcommand{\E}[2][]{\operatorname*{\mathbb{E}}_{#1 }\brac*{#2}}
\newcommand{\Pb}[2][]{\operatorname*{Pr}_{#1 }\brac*{#2}}
\newcommand{\bO}[1]{\operatorname*{O}\paren*{#1}}
\newcommand{\bOm}[1]{\operatorname*{\Omega}\paren*{#1}}
\newcommand{\bTh}[1]{\operatorname*{\Theta}\paren*{#1}}
\newcommand{\Xb}{\mathbf{X}}
\newcommand{\vb}{\mathbf{v}}
\newcommand{\Ac}{\mathcal{A}}
\newcommand{\Uc}{\mathcal{U}}
\newcommand{\Fc}{\mathcal{F}}
\newcommand{\Ic}{\mathcal{I}}
\DeclareMathOperator*{\ind}{{\normalfont\text{INDEX}}}
\newcommand{\e}{{\eps}}
\newcommand{\bool}{{\{0, 1\}}}
\newcommand{\R}{{\mathbb{R}}}
\newcommand{\expect}{{\mathbb E}}
\newcommand{\F}{{\mathcal{F}}}
\newcommand{\I}{{\mathcal I}}
\DeclareMathOperator{\poly}{poly}
\newtheorem{theorem}{Theorem}[section]
\newtheorem*{theorem*}{Theorem}
\newtheorem{lemma}[theorem]{Lemma}
\newtheorem{definition}[theorem]{Definition}
\newtheorem*{definition*}{Definition}
\newtheorem*{lemma*}{Lemma}
\newtheorem{corollary}[theorem]{Corollary}
\newtheorem*{corollary*}{Corollary}
\newtheorem{claim}[theorem]{Claim}
\newtheorem*{claim*}{Claim}
\newtheorem{remark}[theorem]{Remark}
\newcommand{\walksreset}{{\normalfont\textsc{SamplesWithReset}}}
\newcommand{\walks}{{\normalfont\textsc{SimulateWalks}}}
\newcommand{\walksresetH}{{\normalfont\textsc{SamplesWithResetHybrid}}}
\newcommand{\walksH}{{\normalfont\textsc{SimulateWalksHybrid}}}
\newcommand{\approxrp}{{\normalfont\textsc{ApproxRP}}}
\newcommand{\approxpagerank}{{\normalfont\textsc{ApproxPageRank}}}
\newcommand{\wft}{{\normalfont\textsc{WalkFromTemplate}}}
\newcommand{\wftH}{{{\normalfont\textsc{WalkFromTemplateHybrid}}}}
\newcommand{\vv}{\vec{\mathbf{v}}}
\newcommand{\ww}{\vec{\mathbf{w}}}
\newcommand{\nwalks}{b}
\declaretheorem[name=Lemma]{lem,numberwithin=section,sibling=thm}
\newenvironment{proofof}[1]{\noindent{\bf Proof of #1:}}{$\qed$\par}
	\gdef\xxxmark{%
		\expandafter\ifx\csname @mpargs\endcsname\relax 
		\expandafter\ifx\csname @captype\endcsname\relax 
		\marginpar{xxx}
		\else
		xxx 
		\fi
		\else
		xxx 
		\fi}
	\gdef\xxx{\@ifnextchar[\xxx@lab\xxx@nolab}
	\long\gdef\xxx@lab[#1]#2{{\bf [\xxxmark #2 ---{\sc #1}]}}
	\long\gdef\xxx@nolab#1{{\bf [\xxxmark #1]}}
\let\orgdescriptionlabel\descriptionlabel
\renewcommand*{\descriptionlabel}[1]{%
  \let\orglabel\label
  \let\label\@gobble
  \phantomsection
 \edef\@currentlabel{#1\unskip}%
  \let\label\orglabel
  \orgdescriptionlabel{#1}%
}
\title{Simulating Random Walks in Random Streams}
\date{}
\author{John Kallaugher\\UT Austin \and Michael Kapralov\\EPFL \and Eric Price\\UT Austin}
\begin{document}

\maketitle
\begin{abstract}
The random order graph streaming model has received significant attention
recently, with problems such as matching size estimation, component counting,
and the evaluation of bounded degree constant query testable properties shown
to admit surprisingly space efficient algorithms. 

The main result of this paper is a space efficient single pass random order
streaming algorithm for simulating nearly independent random walks that start
at uniformly random vertices. We show that the distribution of $k$-step walks
from $\nwalks$ vertices chosen uniformly at random can be approximated up to error
$\varepsilon$ per walk  using $(1/\varepsilon)^{O(k)} 2^{O(k^2)}\cdot \nwalks$ words of space
with a single pass over a randomly ordered stream of edges, solving an open
problem of Peng and Sohler [SODA~`18]. Applications of our result include the
estimation of the average return probability of the $k$-step walk (the trace of
the $k^\text{th}$ power of the random walk matrix) as well as the estimation of
PageRank. We complement our algorithm with a strong impossibility result for
directed graphs.
\end{abstract}

\thispagestyle{empty}
\setcounter{page}{0}

\newpage

\tableofcontents
\setcounter{page}{0}
\thispagestyle{empty}
\newpage

\section{Introduction}

The random order streaming model for computation on graphs has been the focus
of much attention recently, resulting in truly sublinear algorithms for several
fundamental graph problems, i.e. algorithms whose space complexity is sublinear
in the number of \emph{vertices} (as opposed to edges) in the input
graph~\cite{KapralovKS14,CormodeJMM17,MonemizadehMPS17,PS18,KapralovMNT20}.
This is in sharp contrast to adversarially ordered streams, where $\Omega(n)$
space is often needed to solve even the most basic computational problems on
graphs~\cite{FeigenbaumKMSZ04}. This brings several fundamental problems on
graph streams (matching size, number of connected components, constant query
testable properties in bounded degree graphs) into the same regime as basic
statistical queries such as heavy hitters, frequency moment estimation and
distinct elements~\cite{AlonMS96}---problems that can be solved using space
polylogarithmic in the length of the stream.

Sampling random walks has numerous applications in large graph
analysis
(e.g.,~\cite{SpielmanT13,AndersenCL06,AndersenP09,CharikarOP03}), so
it has received quite a bit of attention in the adversarial streaming
model~\cite{SarmaGP11,Jin19,KPS0Y21}.  However, while these results
are useful for dense graphs, they all require $\Omega(n)$ space.

We show that the random order model allows us to break this barrier.  For
random order streams we give an algorithm that generates $\nwalks$ walks that
are $\e$-approximate to $k$-step random walks from uniformly random starting
vertices\footnote{Previous work has considered this problem when the start
vertex is adversarially chosen and given to the algorithm before processing the
stream. Unfortunately $o(n)$ space is impossible in this setting, as if the
start vertex is in, say, a two-edge path, finding the second edge of the path
will be difficult in the 50\% of cases it arrives after the first. For a formal lower bounds see
Appendix~\ref{app:chosenvertexlb}.}, using $(\frac1{\e})^{O(k)}\cdot
2^{O(k^2)}\cdot \nwalks$ words of space, independent of the graph size $n$. This
solves an open problem of Peng and Sohler on estimating return probabilities of
random walks~(\cite{PS18}, page 23). 

The exponential dependence on $\poly(k)$ here seems likely to be necessary, at
least up to the power of $k$, as recent work~\cite{CKKP21} has shown that
finding a length-$\ell$ component in a graph where every component is of length
at most $\ell$ requires $\ell^{\bOm{\ell}}$ space in a model
close\footnote{This lower bound applies when edges are grouped into pairs, and
the pairs arrive in a uniformly random order.  It does not
necessarily imply lower bounds on fully random-order streaming, but
it seems unlikely that this particular structure would make the problem
dramatically harder.} to random-order streaming. Performing a $k = \bTh{\ell^2}$
random walk from a randomly chosen vertex would suffice for this, and so we
expect any such algorithm needs at least $k^{\bOm{\sqrt{k}}}$ space.

Our algorithm immediately implies sublinear algorithms for graph analytics
based on short random walks, such as return probability estimation or
PageRank.  Consider PageRank with a constant reset probability
$\alpha$.  For a ``topic'' $T \subset V$---think, ``news websites'' or
``websites about gardening''---we can view the total PageRank of $T$
as a measure of the importance of that topic to the graph.  Our
walk sampling algorithm lets us estimate the total PageRank of $T$ to
within $\eps$ using $O_{\alpha, \eps}(1)$ space.

Our algorithmic results are for undirected graphs, because directed
graphs are hard: we show that sampling walks from directed graphs (or
just estimating PageRank) requires $\Omega(n)$ space in random order
streams.

\paragraph{Our results.} We now state our results formally.  We will
need the definition of $\e$-closeness of distributions below:
\begin{definition}[Pointwise $\e$-closeness of
distributions]\label{def:eps-close} We say that a distribution $p\in
\R_+^{\mathcal U}$ is $\e$-close pointwise to a distribution $q\in
\R_+^{\mathcal U}$ if for every $u\in \mathcal U$ one has $$
p(u)\in [1-\e, 1+\e]\cdot q(u).
$$
\end{definition}

We now define the notion of an $\e$-approximate sample of a $k$-step random walk:
\begin{definition}[$\e$-approximate sample]\label{def:eps-sample}
Given $G=(V, E)$ and a vertex $u\in V$ we say that $(X_0, X_1,\ldots, X_k)$ is an {\em $\e$-approximate sample} of the $k$-step random walk started at $u$ if 
the distribution of $(X_0, X_1,\ldots, X_k)$ is $\e$-close pointwise to the distribution of the $k$-step walk started at $u$ (see Definition~\ref{def:eps-close}).
\end{definition}

\paragraph{Main result.} Our main result is an algorithm for generating nearly independent $\e$-approximate samples of the $k$-step walk in the input graph $G$ presented as a randomly ordered stream:

\begin{theorem}\label{thm:main}
  There exists a constant $c'>0$ such that for every $n$-vertex graph $G$, for
  every $\e\in (n^{-1/1000}, 1/2)$, $\nwalks \leq n^{1/100}$, and $1\leq k\leq
  \min\left\{\frac{c'\log n}{\log (1/\e)}, \sqrt{c'\log n}\right\} $, the
  following holds:

  The output of $\walks(k, \e, \nwalks)$ (Algorithm~\ref{alg:simulate-walks}
  below) is $(n^{-1/100}+2^{-\nwalks})$-close in TV distance to the
  distribution of $\nwalks$ independent $\e$-approximate samples of the
  $k$-step random walk in $G$ started at vertices chosen uniformly at random.
  The space complexity of $\walks(k, \e, \nwalks)$ is upper bounded by
  $(1/\e)^{O(k)}\cdot 2^{O(k^2)}\cdot \nwalks$.
\end{theorem}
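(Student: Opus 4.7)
The approach is to process the random-order stream in a single pass, maintaining for each of the $\nwalks$ walks a data structure that captures information about the vertices the walk might visit. The key properties of random ordering that I would exploit are: (i) for any vertex $v$, the positions of $v$'s incident edges form a uniformly random $\deg(v)$-subset of the stream, which enables estimating $\deg(v)$ and sampling uniformly random neighbors of $v$; and (ii) the first several edges incident to $v$ in the stream form a uniform random subset (without replacement) of $N(v)$.

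The algorithm would sample $\nwalks$ starting vertices uniformly at random up front, then iteratively extend each walk. The central difficulty is that after transitioning from $v$ to a neighbor $w$, the edges of $w$ may have already passed in the stream, so we cannot simply collect $v$'s neighbors, pick one, and then begin collecting the next vertex's neighbors. I would address this by maintaining a tree of speculative continuations per walk: at each node the algorithm accumulates enough sampled incident edges to eventually choose a uniform neighbor with pointwise multiplicative error $\eps/k$, and at the end of the stream we select a single root-to-depth-$k$ path weighted according to the collected degree estimates. The $(1/\eps)^{O(k)}$ factor in the space bound reflects the neighbor-sample count per node needed to drive per-step error down to $\eps/k$, so that telescoping over $k$ steps gives pointwise $(1+\eps)$-closeness. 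The $2^{O(k^2)}$ factor reflects the size of the speculative tree: there are at most $k^{O(k)}$ essentially distinct abstract walk structures of length $k$, and each branch needs additional slack for the random-order positional variation.

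The main obstacle is the pointwise-closeness requirement, which is much stronger than TV distance: even rare trajectories (whose probability can be as low as $2^{-\Omega(k)}$) must be sampled to multiplicative $1 \pm \eps$, so one cannot simply discard low-probability branches or use concentration to hide them. The natural proof strategy is to exhibit a coupling between the algorithm's output and an ideal process that has direct access to exact degrees (plausibly an ``ideal'' variant of $\walks$ whose correctness is easy to see), and to bound the coupling-failure probability by $n^{-1/100}$, using random-order concentration for the degree estimates and the neighbor-sampling accuracy. For the near-independence of $\nwalks$ walks (TV distance $\le 2^{-\nwalks}$ from a true product distribution), I would argue that with probability $\ge 1 - 2^{-\nwalks}$ the walks query disjoint portions of the neighborhoods, so that conditioned on the stream ordering the per-walk random choices are independent, and a standard coupling argument then yields the claimed joint TV closeness.
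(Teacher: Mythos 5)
There are several genuine gaps here, and the central one is the ``tree of speculative continuations.'' After you step from $v$ to a neighbor $w$, choosing a uniform neighbor of $w$ requires access to $\delta(w)$, most of which may already have passed in the stream; a speculative tree only helps if you were already tracking $w$'s incident edges \emph{before} you knew you would visit $w$. Tracking all possible continuations means tracking all vertices within distance $k$ of the start, i.e.\ a tree of size $d^k$ (up to $n^{\Omega(k)}$), not $2^{O(k^2)}$ --- the number of ``abstract walk structures'' does not bound this, because each structure still leaves up to $\deg(v)$ choices of which neighbor to move to. The paper resolves this differently: each attempt commits to a \emph{single} path and a single backtracking template, takes its $j$-th edge only from the $j$-th length-$\eta$ window of the stream, accepts that the attempt succeeds only with probability about $\eta^k/k!$, and repeats $s = \nwalks\cdot 100\,\eta^{-k}k!$ times. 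The $2^{O(k^2)}(1/\eps)^{O(k)}$ space is this repetition count, and the forced choice $\eta \le 2^{-\Theta(k)}$ comes from a subtlety you do not address: a vertex can be visited $\Theta(k)$ times in one walk, so a constant-factor error in a single degree estimate compounds to a $2^{\Theta(k)}$ multiplicative error in the walk probability. Relatedly, ``weighting by degree estimates'' is not enough for pointwise correctness: the probability of \emph{finding} a given walk depends on the realized number of incident edges in each window, so the debiasing must be a rejection-sampling step computed from those realized counts, and one must also handle walks that reuse an edge (which no stream-order-following procedure can produce without an explicit template mechanism).

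Your independence argument also has a gap. The claim that with probability $1-2^{-\nwalks}$ the walks query disjoint neighborhoods is false for high-degree vertices: in a graph with a hub, essentially all $\nwalks$ walks pass through it. The paper's disjointness/coupling argument is restricted to low-degree vertices ($d \le n^{1/4}$), and high-degree vertices require a separate hybrid argument showing that edge samples at such vertices are nearly independent of the stream because every window contains many of their edges. Finally, the $2^{-\nwalks}$ term in the theorem is not the independence error (that is absorbed into $n^{-1/100}$); it is the Chernoff failure probability that fewer than $\nwalks$ of the $s$ attempts succeed.
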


Using random walk sampling as a primitive, we give algorithms for two
important graph problems: computing the average return probability of
$k$-step random walks and estimating the PageRank of a subset of
nodes.

\paragraph{Return probability estimation.} For every integer $k\geq 1$ and
$u\in V$ let $p^k_u\in \R^V$ denote the distribution of the simple $k$-step
random walk started at $u$. The average return probability of $k$-step random
walks in $G$ is 
\begin{equation}
\text{rp}(G)=\frac1{n}\sum_{u\in V} p^k_u(u).
\end{equation}
We say that $\wh{\text{rp}}(G)$ estimates $\text{rp}(G)$ with precision $\e\in (0, 1)$ if 
\begin{equation}\label{eq:rp-approx}
|\text{rp}(G)-\wh{\text{rp}}(G)|\leq \e.
\end{equation}

\begin{remark}
Note that if the input graph $G$ consists of disjoint connected components with
mixing time bounded by $o(k)$, then $\text{rp}(G)$ is very close to the number
of connected components in $G$.  In particular, if every component in $G$ has
size at most $q$, the mixing times are
bounded by $q^{O(1)}$, so this gives another algorithm for approximately counting
the number of connected components in $G$ using space $2^{\text{poly}(q)}$, which is comparable to~\cite{PS18}.

In general, the average return probability can be viewed as a more robust
measure of connectivity than the number of components.
\end{remark}

Our algorithm for approximating the average return probability is \textsc{ApproxRP} (Algorithm~\ref{alg:approx-rp}  below).

\begin{algorithm}[H]
	\caption{\approxrp: approximate average $k$-step return probability over $u\in V$}  
	\label{alg:approx-rp} 

\begin{algorithmic}[1]
\Procedure{\approxrp($k, \e$)}{} \Comment{$k$ is the desired walk length,
$\e\in (0, 1)$ is a precision parameter}
\State $\nwalks \gets \frac{D}{\e^2}$ for a sufficiently large constant
$D>0$\label{line:s-def}
\State $(v^i)_{i\in \brac{\nwalks}}\gets \walks(k, \e, \nwalks)$ 
\State \Return $\frac{1}{\nwalks}\cdot |\{i:\in \brac{\nwalks}: v^i_k=v^i_0\}|$
\Comment{Empirical return probability}
\EndProcedure
\end{algorithmic}
\end{algorithm}

\begin{corollary}\label{cor:rp}
There exists a constant $c'>0$ such that for every graph $G=(V, E), |V|=n,
|E|=m$, for every $\e\in (n^{-1/1000}, 1/2)$  and $1\leq k\leq \min\left\{\frac{c'\log n}{\log (1/\e)}, \sqrt{c'\log n}\right\}
$ the following conditions hold.

Algorithm $\approxrp(\e, k)$ (Algorithm~\ref{alg:approx-rp}
below) computes an $\e$-approximation to the average return probability of a
$k$-step random walk in $G$ given as a random order
stream using space $(1/\e)^{O(k)}\cdot 2^{O(k^2)}$ with probability at least $9/10$
over the randomness of the stream and its internal randomness.
\end{corollary}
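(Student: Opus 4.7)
The plan is to reduce Corollary~\ref{cor:rp} to Theorem~\ref{thm:main} together with a Hoeffding bound. I would invoke \walks with accuracy parameter $\e/4$ in place of $\e$ (the constant factor is absorbed into the $(1/\e)^{O(k)}$ term in the final space bound), producing $\nwalks=D/\e^2$ samples $(v^i_0,\ldots,v^i_k)$ whose joint law is within TV distance $n^{-1/100}+2^{-\nwalks}$ of $\nwalks$ independent $(\e/4)$-approximate $k$-step walks from uniformly random starts. For $\e\ge n^{-1/1000}$ and $D$ a sufficiently large constant, this TV error is at most $1/20$, so by coupling I may pass to the ideal model at that additive failure probability.

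In the ideal model, pointwise $(\e/4)$-closeness of each walk's joint distribution to the exact distribution implies, after summing over trajectories with $v^i_k=v^i_0$,
$$\Pb{v^i_k=v^i_0}\in (1\pm\e/4)\cdot\text{rp}(G),$$
so the bias of the empirical return probability with respect to $\text{rp}(G)$ is at most $(\e/4)\cdot\text{rp}(G)\le\e/4$. Applying Hoeffding's inequality to the i.i.d.\ Bernoulli indicators $\1bb[v^i_k=v^i_0]$, the probability that the empirical mean deviates from its expectation by more than $\e/4$ is at most $2\exp(-2\nwalks(\e/4)^2)=2\exp(-D/8)$, which can be made at most $1/20$ by taking $D$ large. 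Combining the bias bound ($\le\e/4$) with the fluctuation bound ($\le\e/4$) gives total additive error $\le\e/2\le\e$, and union-bounding the coupling failure with the Hoeffding failure yields total failure probability $\le 1/10$, proving the claimed $9/10$ guarantee.

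The space bound follows immediately: Theorem~\ref{thm:main} runs in $(4/\e)^{O(k)}\cdot 2^{O(k^2)}\cdot\nwalks$ words of space, which simplifies to $(1/\e)^{O(k)}\cdot 2^{O(k^2)}$ after absorbing $\nwalks=\Theta(1/\e^2)$ into the $(1/\e)^{O(k)}$ factor. I do not anticipate any substantive obstacle; all the algorithmic work is inside Theorem~\ref{thm:main}, and the only remaining care is the bookkeeping of three small error sources: the TV distance between the algorithm's output and the ideal sampler, the pointwise multiplicative bias of each $(\e/4)$-approximate walk, and the empirical fluctuation across the $\nwalks$ independent Bernoulli return indicators.
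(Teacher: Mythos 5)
Your proposal is correct and follows essentially the same route as the paper's proof: invoke Theorem~\ref{thm:main} to pass (at small TV cost) to the ideal model of independent $\e$-approximate walks, observe that pointwise closeness makes each return indicator's mean within a $(1\pm O(\e))$ factor of $\rp(G)$, and apply a standard concentration bound to the $\nwalks=D/\e^2$ indicators (the paper uses Chebyshev where you use Hoeffding, an immaterial difference). Your extra care in running the sampler at accuracy $\e/4$ to keep the bias and fluctuation each below $\e/4$ is slightly cleaner bookkeeping than the paper's, which absorbs the $O(\e)$ bias into constants, but the argument is the same.
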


The proof of Corollary~\ref{cor:rp}  follows from Theorem~\ref{thm:main} by standard concentration inequalities and is presented in Appendix~\ref{app:pagerank}.

\paragraph{Estimating PageRank.} 
For a reset probability $\alpha\in (0, 1)$ the PageRank vector with reset probability $\alpha$, denoted by $p_\alpha\in \R^V$, satisfies
$$
p_\alpha=\alpha\cdot \frac{\mathbbm{1}}{n}+(1-\alpha) Mp_\alpha,
$$
where $M$ is the random walk transition matrix of $G$. We give an algorithm (\approxpagerank, Algorithm~\ref{alg:approx-pagerank} below) that, given a membership oracle for a subset $T\subseteq V$, computes an approximation $\wh{p}_\alpha(T)$ to $p_\alpha(T)=\sum_{u\in T} p_\alpha(u)$ such that 
\begin{equation}\label{eq:phat}
|\wh{p}_\alpha(T)-p_\alpha(T)|\leq \e.
\end{equation}

Our algorithm exploits the fact that PageRank with reset probability $\alpha$ is a mixture of distributions of random walks started with the uniform distribution over vertices of $G$  whose length follows the geometric distribution with parameter $\alpha$. Specifically,
\begin{equation*}
\begin{split}
p_{\alpha}&=(I-(1-\alpha) M)^{-1}\cdot \alpha\frac{\mathbbm{1}}{n}=\sum_{k\geq 0} \alpha (1-\alpha)^k M^k\cdot \frac{\mathbbm{1}}{n}\\
\end{split}
\end{equation*}
Therefore, an additive $\e$-approximation  to the PageRank $p_\alpha(T)$ of a set
$T$ as per~\eqref{eq:phat} can be obtained by truncating the sum above to its first
$O(\frac1{\alpha}\log(1/\e))$ terms and estimating them using \textsc{SimulateWalks}, our
algorithm for generating independent samples of random walks. This is exactly what
\approxpagerank{} (Algorithm~\ref{alg:approx-pagerank} below) does.

\begin{algorithm}[H]
	\caption{\approxpagerank: approximate PageRank with reset probability $\alpha\in (0, 1)$ on target set $T$ up to $\e$ additive error.}  
	\label{alg:approx-pagerank} 

\begin{algorithmic}[1]
\Procedure{\approxpagerank($\alpha, T, \e$)}{} \Comment{Approximate PageRank of
$T\subseteq V$}
\State\Comment{$\alpha$ is the reset probability, $\e$ is the additive precision}
\State \Comment{$T$ is given by a membership oracle}
\State $\nwalks \gets \frac{D}{\e^2}$ for a sufficiently large constant $D>0$
\State $(v^i)_{i\in \brac{\nwalks}}\gets \walks(\lceil\frac{2}{\alpha}\log(1/\e)\rceil,
\e, \nwalks\cdot \lceil \frac{2}{\alpha}\cdot \log(1/\e)\rceil)$ 
\State \Comment{Increase the number of sampled walks to account for different lengths}
\For{$j=0$ to $\lceil\frac{2}{\alpha}\log(1/\e)\rceil$}
\State $W(j)\gets$ walks $v^i$ with $i$ between $\nwalks \cdot j+1$ and
$\nwalks \cdot j$ truncated to $j$ steps.
\State \Comment{$W(j)$ are nearly independent collections of $s$ walks of
length $j$}
\EndFor
\State $\wh{p}\gets 0$
\For{$i=1$ to $\nwalks$}
\State $J\gets \text{Geom}(\alpha)$ \Comment{$J=j$ with probability $\alpha
(1-\alpha)^j$ for every integer $j\geq 0$}
\State {\bf If~}$J> \lceil\frac{2}{\alpha}\log(1/\e)\rceil$ {\bf ~then~continue}
\If {$i^{\text{th}}$ walk in $W(J)$ ends in $T$}\Comment{Use membership oracle for $T$}
\State $\wh{p}\gets \wh{p}+\frac{1}{\nwalks}$  
\EndIf
\EndFor
\State \Return $\wh{p}$
\EndProcedure
\end{algorithmic}
\end{algorithm}

\begin{corollary}\label{cor:pagerank}
There exists a constant $c'>0$ such that for every graph $G=(V, E), |V|=n, |E|=m$, for every $\alpha\in (0, 1)$, every $\e\in \left(2^{-o(\sqrt{\log n})}, 1/2\right)$  such that $\frac{1}{\alpha}\leq \frac{\sqrt{\log n}}{4\log (1/\e)}$ the following conditions hold.

For every $T\subseteq V$ \approxpagerank{} (Algorithm~\ref{alg:approx-pagerank})  approximates $p_\alpha(T)$ for constant $\alpha\in (0, 1)$ up to additive error $\e$ with probability at least $9/10$ using $(1/\e)^{O(\frac1{\alpha} \log(1/\e))}\cdot 2^{O(\frac1{\alpha^2}\log^2(1/\e))}$ space given a randomly ordered stream of edges of $G$, assuming a membership oracle for the target set $T$.
\end{corollary}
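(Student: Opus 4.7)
Set $K := \lceil (2/\alpha)\log(1/\e)\rceil$ and invoke \walks{} with precision $\e' := \e/4$. The plan is to treat \walks{} as a black box that, up to small TV error, delivers a pool of nearly independent $\e'$-approximate $K$-step walks from uniformly random starts, and then reduce PageRank estimation to averaging the indicators that such walks (truncated to a geometric length) end in $T$. First, since $p_\alpha = \sum_{j \geq 0} \alpha (1-\alpha)^j M^j \mathbbm{1}/n$ and each endpoint probability $P_j(T)$ lies in $[0,1]$, truncating the series past $j = K$ contributes at most $(1-\alpha)^{K+1} \leq e^{-\alpha K} \leq \e^2$ to $p_\alpha(T)$.

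I would then analyze three estimators in sequence. The \emph{ideal} estimator draws, for each $i \in [\nwalks]$, an independent $J_i \sim \mathrm{Geom}(\alpha)$ and an independent truly $J_i$-step walk from a uniform start, using fresh walks from disjoint pools indexed by $j$, and averages the indicator that the walk ends in $T$. Its expectation equals the truncated PageRank $p_\alpha^{\leq K}(T)$, which is within $\e^2$ of $p_\alpha(T)$; a Hoeffding bound on $\nwalks = D/\e^2$ bounded indicators then pins it within $\e/4$ of its mean with probability $\geq 19/20$. The \emph{semi-ideal} estimator replaces each walk in the pools by an $\e'$-approximate walk. Pointwise $\e'$-closeness (Definition~\ref{def:eps-sample}), after marginalizing to the endpoint and summing over $T$, yields that the endpoint-in-$T$ probability lies in $(1 \pm \e')\cdot P_j(T)$, so the expectation shifts by at most $\e' \cdot p_\alpha(T) \leq \e/4$. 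Finally, the \emph{actual} estimator is coupled to the semi-ideal one via Theorem~\ref{thm:main}: the $\nwalks K$ walks returned by \walks{} are within $n^{-1/100} + 2^{-\nwalks K}$ TV distance of $\nwalks K$ truly independent $\e'$-approximate walks, and since Algorithm~\ref{alg:approx-pagerank} partitions them into the disjoint pools $W(0), \ldots, W(K)$ and uses the $i$-th walk of $W(J_i)$, the joint distribution of the $\nwalks$ indicators inherits the same TV bound. A union bound gives total error $\leq \e^2 + \e/4 + \e/4 \leq \e$ with success probability $\geq 9/10$.

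The remaining bookkeeping verifies the hypotheses of Theorem~\ref{thm:main} and the space bound. Choosing the constant $c'$ of the corollary small enough relative to the constant of Theorem~\ref{thm:main}, the assumption $1/\alpha \leq \sqrt{\log n}/(4\log(1/\e))$ gives $K \leq \sqrt{c'\log n}$, and combined with $\log(1/\e) = o(\sqrt{\log n})$ (from $\e \geq 2^{-o(\sqrt{\log n})}$) also $K \leq c'\log n/\log(1/\e)$; the same hypothesis on $\e$ gives $\e \geq n^{-1/1000}$ and $\nwalks K \leq 2^{o(\sqrt{\log n})} \leq n^{1/100}$ for $n$ large enough. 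The total space is dominated by the \walks{} call, which uses $(1/\e')^{O(K)}\cdot 2^{O(K^2)}\cdot \nwalks K = (1/\e)^{O(\log(1/\e)/\alpha)}\cdot 2^{O(\log^2(1/\e)/\alpha^2)}$ words, absorbing the $\nwalks K$ factor into the second exponential. The main obstacle here is the coupling step: one must confirm that Algorithm~\ref{alg:approx-pagerank}'s disjoint-pool indexing lines up with the independence structure of the \walks{} output, so that conditioning on the $J_i$'s does not reintroduce correlations among the $\nwalks$ indicators. Everything else is standard Hoeffding/TV bookkeeping.
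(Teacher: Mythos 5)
Your proposal is correct and follows essentially the same route as the paper's proof: truncate the PageRank series at $K=\lceil\tfrac{2}{\alpha}\log(1/\e)\rceil$ with tail error $\leq \e^2$, invoke \walks{} via Theorem~\ref{thm:main}, apply a concentration bound to the $\nwalks=D/\e^2$ indicators, and fold in the TV-distance-to-independence at the end. Your explicit three-estimator hybrid (and the use of Hoeffding rather than Chebyshev) is just a more carefully itemized version of the paper's argument, which handles the $\e$-approximation and independence errors more tersely.
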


The proof of Corollary~\ref{cor:pagerank} follows from Theorem~\ref{thm:main} by standard concentration inequalities and is presented in Appendix~\ref{app:pagerank}.

\paragraph{Lower bounds for directed graphs.} PageRank was first studied for
\emph{directed} graphs, and so it is natural to ask if it is possible to extend
these algorithms to that setting. We show that it is not, and in fact both
sampling from the random walk distribution and approximating the PageRank of a
vertex set in a directed graph require $\bOm{n}$ bit of storage. This holds
even if we restrict to approximating the distribution of very short random
walks.
\begin{restatable}{theorem}{dgrwlb}
\label{thm:dgrwlb}
For any constant $\varepsilon < 1/4$, the following holds for all $k \ge 3$ and
all $n$: there is a family of directed graphs with no more than $n$ vertices
and edges such that any random order streaming algorithm that
$\varepsilon$-approximates the distribution of length-$k$ random walks on
graphs drawn from the family uses $\bOm{n}$ bits of space, with a constant
factor depending only on $\varepsilon$.
\end{restatable}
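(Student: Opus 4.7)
The strategy is a one-way communication reduction from the INDEX problem on $N=\lfloor(n-2)/2\rfloor$ bits. For each $x\in\{0,1\}^N$ define a directed graph $G_x$ on vertex set $\{a_i,b_i:i\in[N]\}\cup\{s_0,s_1\}$, with arcs $a_i\to b_i$ and $b_i\to s_{x_i}$ for every $i\in[N]$, together with self-loops $s_0\to s_0$ and $s_1\to s_1$. The graph has at most $n$ vertices and at most $n$ edges; every vertex has out-degree exactly one, so the length-$k$ walk from any vertex is completely deterministic, and for $k\ge 3$ the walk from $a_i$ passes through $b_i$, enters $s_{x_i}$ at step $2$, and then remains there, so its final coordinate equals $s_{x_i}$. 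Different coordinates of $x$ are encoded by edge-disjoint gadgets, so a single bit flip in $x$ changes exactly the walks emanating from $a_i$ and $b_i$.

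Suppose $\mathcal{A}$ is a random-order streaming algorithm that $\varepsilon$-approximates the length-$k$ walk distribution using $s$ bits of memory. In the INDEX protocol, Alice (holding $x$) constructs $G_x$, draws a uniformly random permutation of its edges, simulates $\mathcal{A}$ on the resulting stream, and sends the $s$-bit final memory state $M$ to Bob (holding $j\in[N]$). Since the output stage of a streaming algorithm is a function of the final state and fresh coins, Bob can use $M$ to draw i.i.d.\ samples from $\mathcal{A}$'s output distribution as many times as he likes. The pointwise $\varepsilon$-closeness of Definition~\ref{def:eps-close} crucially forces $\mathcal{A}$'s output distribution to share its support with the true walk distribution: if $q(w)=0$ then $p(w)\in[(1-\varepsilon)\cdot 0,(1+\varepsilon)\cdot 0]=\{0\}$. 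In particular, any sample that starts at $a_j$ must end deterministically at $s_{x_j}$, so from such a sample Bob reads off $x_j$ without error.

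Bob's decoding is therefore straightforward: draw $T=\Theta(n)$ samples, find one whose first coordinate is $a_j$, and output its last coordinate. A single sample starts at $a_j$ with probability at least $(1-\varepsilon)/|V|=\Theta(1/n)$ under $\mathcal{A}$'s approximation, so $T=\Theta(n)$ draws suffice to observe such a sample with probability at least $2/3$. The protocol thus solves INDEX on $N$ bits with constant success probability, and the standard $\Omega(N)$ one-way randomized communication lower bound for INDEX yields $s=\Omega(n)$ as required, with a constant depending only on $\varepsilon$. The main subtle point is the repeated-sampling step from $M$, for which I rely on the standard convention that a streaming sampler emits i.i.d.\ draws from a distribution fully determined by its memory; should the model instead mandate a fixed output bundle of $\nwalks$ walks, the same argument with $\nwalks=T=\Theta(n)$ still forces $s=\Omega(n)$ (the output itself must now be at least this large). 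The random ordering of the stream plays no adversarial role in the reduction: Alice simply permutes the edges using her own randomness, and the INDEX lower bound is insensitive to such coins.
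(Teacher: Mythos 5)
There is a fatal gap: the family $\{G_x\}$ you construct is not actually hard. In $G_x$ the first edge $a_i\to b_i$ of every walk is \emph{input-independent}, so an algorithm can pick its starting vertex $v$ in advance and already knows exactly which single edge it must watch for: if $v\in\{a_i,b_i\}$, the walk is determined by the one bit ``does the edge leaving $b_i$ go to $s_0$ or $s_1$,'' which the algorithm records in $O(\log n)$ bits whenever that edge happens to arrive. This $O(\log n)$-space algorithm samples the walk distribution \emph{exactly}, so no reduction from this family can give $\Omega(n)$. The ingredient you are missing is the one the paper builds its instance around: a high in-degree vertex $b$ whose unique out-edge $b\to c_I$ encodes Bob's index and thereby determines \emph{which} of Alice's $n$ edges $c_i\to d_{x_i}$ lies on (almost) every walk. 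Since $b\to c_I$ arrives at a uniformly random position, about half of Alice's edges arrive before it and cannot be recognized as relevant without storing essentially all of them — that indirection, absent from your gadgets, is the entire source of hardness.

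The reduction itself also breaks at the resampling step. The approximation guarantee concerns the output distribution averaged over the stream order and the algorithm's internal coins; conditioned on a fixed final memory state $M$, the conditional output distribution $p_M$ may be a point mass on a single walk (the $O(\log n)$ algorithm above is exactly of this form). So Bob drawing $T=\Theta(n)$ samples from $p_M$ does not simulate $T$ independent runs — he just resamples the same degenerate conditional — and with probability $1-O(1/n)$ he never sees a walk starting at $a_j$; getting $T$ genuinely independent samples would require $T$ independent executions and hence $T\cdot s$ bits of communication, which kills the bound. The paper sidesteps both problems at once: it adds $\beta n$ parallel sources $a_i$ all feeding into $b$, so a \emph{single} output sample traverses the hard path $b\to c_I\to d_{x_I}$ with probability $1-O(1/\beta)$, and it handles the random-order requirement (where both players' inputs shape the graph) by letting Bob insert random guesses $y$ for Alice's edges that arrive after his, which keeps the stream exactly uniform and still solves INDEX with probability $3/4-\varepsilon-O(1/\beta)>1/2$, enough since INDEX on uniform inputs is $\Omega(n)$-hard even at success probability $1/2+\Omega(1)$. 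A correct proof needs both of these ideas (or substitutes for them); your construction and protocol as written establish no lower bound.
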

\begin{restatable}{theorem}{dgprlb}
\label{thm:dgprlb}
Let $\alpha$ be a given (constant) reset probability for PageRank. For any
constants $\varepsilon < (1 - \alpha)^3 - \frac{1}{2}$, $\delta < 1/4$, the
following holds for all $n$: there is a family of directed graphs with no more
than $n$ vertices and edges such that any random order streaming algorithm that
returns a $\varepsilon$ additive approximation to the PageRank of vertex sets
in these graphs with probability $1 - \delta$ uses  $\bOm{n}$ bits of space,
with a constant factor depending only on $\alpha$, $\varepsilon$, and $\delta$.
\end{restatable}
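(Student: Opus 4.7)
The plan is to reduce from Theorem~\ref{thm:dgrwlb} with $k=3$, which already supplies a family of directed graphs on which any random-order streaming algorithm approximating the distribution of length-$3$ random walks from a uniformly random start (to constant error $<1/4$) must use $\Omega(n)$ bits. The observation driving the reduction is that PageRank with reset probability $\alpha$ decomposes as a geometric mixture over walk lengths:
\[
p_\alpha(T) \;=\; \alpha \sum_{k \ge 0}(1-\alpha)^k\,\Pr_{u\sim V}\brac{X_k^u \in T},
\]
where $X_k^u$ is the length-$k$ random walk from $u$. If the hard instance from Theorem~\ref{thm:dgrwlb} can be arranged (or lightly modified by appending absorbing self-loops on the length-$3$ reachable set) so that every vertex reaches a self-looping sink within three steps, then $\Pr_u\brac{X_k^u \in T} = \Pr_u\brac{X_3^u \in T}$ for all $k \ge 3$, and the mixture collapses to
\[
p_\alpha(T) \;=\; C(T) + (1-\alpha)^3 \cdot \Pr_u\brac{X_3^u \in T},
\]
where $C(T) = \alpha\sum_{k=0}^{2}(1-\alpha)^k \Pr_u\brac{X_k^u \in T}$ depends only on the short-walk structure of the graph, independent of the ``hard'' bits being encoded.

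I would then design the yes/no instances so that the short-walk and length-$3$ contributions point in opposite directions: the yes case satisfies $\Pr_u\brac{X_k^u \in T} \approx 0$ for $k<3$ but $\Pr_u\brac{X_3^u \in T} \approx 1$, giving $p_\alpha^{\text{yes}}(T) \approx (1-\alpha)^3$, while the no case satisfies $\Pr_u\brac{X_k^u \in T} \approx 1$ for $k<3$ but $\Pr_u\brac{X_3^u \in T} \approx 0$, giving $p_\alpha^{\text{no}}(T) \approx 1-(1-\alpha)^3$. The resulting gap of $2(1-\alpha)^3 - 1$ is positive exactly when $\alpha < 1 - 2^{-1/3}$, and any algorithm with additive error $\varepsilon$ strictly less than the half-gap $(1-\alpha)^3 - 1/2$ can distinguish yes from no, hence solves the length-$3$ walk-distribution problem on the underlying hard family---requiring $\Omega(n)$ bits by Theorem~\ref{thm:dgrwlb}. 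Success probability $1-\delta$ for the PageRank algorithm transfers directly to the walk problem, which is precisely the regime where the cited lower bound applies for $\delta<1/4$.

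The main obstacle is designing the graph so that walks of length $<3$ behave oppositely to walks of length $3$ across the yes/no instances---ensuring in the yes case that walks are ``delayed'' from entering $T$ until step $3$ and in the no case that walks are ``evicted'' from $T$ by step $3$. Concretely, this likely requires a layered, bipartite-style structure in which the $T$-vertices feed into transient states that then route into or away from the length-$3$ sinks according to the encoded bits. One must then verify that (i) this routing can be carried out within the budget of $n$ vertices and edges; (ii) the self-loops and auxiliary routing vertices are deterministic given the underlying hard instance of Theorem~\ref{thm:dgrwlb}, so that their random interleaving into the stream can be simulated by the algorithm without additional space; and (iii) the reduction preserves the random-order hardness, which follows because these deterministic additions reveal no information about Alice's input and the remaining randomness of the stream is exactly that of Theorem~\ref{thm:dgrwlb}'s hard family.
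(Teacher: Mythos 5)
There is a genuine gap, and it is structural. You propose to reduce \emph{from} Theorem~\ref{thm:dgrwlb}: ``any algorithm with additive error $\varepsilon < (1-\alpha)^3 - 1/2$ can distinguish yes from no, hence solves the length-$3$ walk-distribution problem \ldots requiring $\Omega(n)$ bits by Theorem~\ref{thm:dgrwlb}.'' This does not typecheck. Theorem~\ref{thm:dgrwlb} lower-bounds algorithms that \emph{output a sample} from a distribution close to the length-$k$ walk distribution; a PageRank estimator outputs a single real number, and distinguishing two promise cases of $\Pr_u\brac{X_3^u \in T}$ is not the same task as producing an approximate walk sample. The correct move --- and what the paper does --- is to bypass Theorem~\ref{thm:dgrwlb} entirely and reduce from the underlying communication problem $\ind_n$ on the same graph family (Section~\ref{sec:harddigraphdist}): one shows directly that a $\varepsilon$-additive estimate of $p_\alpha(\set{d_0,e_0})$ lets Bob decide $x_I$ by thresholding at $1/2$ (Lemmas~\ref{lm:prweightcr} and~\ref{lm:prtoprot}). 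Your geometric-mixture decomposition of $p_\alpha$ is correct and yields the same quantitative bound as the paper's stationary-chain calculation (mass $\ge (1-6/\beta)(1-\alpha)^3$ on the correct sink, hence the gap $2(1-\alpha)^3-1$ and the constraint $\alpha < 1-2^{-1/3}$), but the mixture alone does not license an appeal to the sampling lower bound.

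The second problem is that the construction you flag as ``the main obstacle'' --- arranging that walks of length $<3$ land in $T$ in the no case but avoid $T$ until step $3$ in the yes case --- is both unresolved in your writeup and unnecessary. The paper uses two \emph{complementary} absorbing sink sets $\set{d_0,e_0}$ and $\set{d_1,e_1}$: when $x_I=0$ essentially all of the $(1-\alpha)^3$ mass lands on $\set{d_0,e_0}$, and when $x_I=1$ it lands on $\set{d_1,e_1}$, so $p_\alpha(\set{d_0,e_0})$ is small in the no case simply because the mass is elsewhere, with no need for short walks to ``point in the opposite direction.'' Your proposed layered routing scheme, in which the no case has $\Pr_u\brac{X_k^u\in T}\approx 1$ for $k<3$ yet $\approx 0$ at $k=3$, is actually in tension with $T$ being absorbing (you append self-loops on the reachable set, so a walk in $T$ at step $k<3$ stays in $T$). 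As written, the proposal neither completes the construction nor correctly invokes a lower bound that applies to it; to repair it you should reduce from $\ind_n$ directly and adopt the two-sink design, at which point both difficulties disappear.
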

Note that this second lower bound applies only when $(1 - \alpha)^3 > 1/2$,
i.e.\ $\alpha$ must not be much greater than $0.2$. In applications $\alpha$ is
typically $0.15$, so this is a reasonable assumption.

\subsection{Algorithmic Techniques}
In what follows we discuss the main challenges involved in obtaining
Theorem~\ref{thm:main} and the key ideas behind our approach.  A natural
approach would be to sample a large collection of vertices in the graph
uniformly at random, and to then try simulating $k$-step walks from them using
the random order of the stream.  Most of these simulations will fail, but our
hope is that with a reasonably large (specifically,
$\varepsilon^{O(k)}2^{-O(k^2)}$) probability $\tau$, we will successfully
output a nearly-uniform random walk from each start vertex, and that, when we
don't succeed, we will know that we failed. 

How could we simulate a walk from a given vertex $v$ using the random
stream?  One idea might be: starting with $v$, repeatedly take the
next edge incident on the current vertex, and stop after $k$ steps (or
output $\bot$ if the stream terminates before the $k^{\text{th}}$ step).  The
intuition is that a random walk has a $1/k!$ chance of appearing in
the stream in order, but this approach has two problems.  First, it
never traverses the same edge twice, while a random walk has a good
chance of doing so whenever it encounters a low-degree vertex.
Second, even for walks where every edge is distinct, the probability
that it outputs a given path is not proportional to the probability of
that path arising as a random walk. For example, when starting at the
endpoint of a path, a length-2 walk occurs with probability $1/2$ and
is sampled with probability $1/2$ (if the adjacent edge precedes the
next edge out); but when starting in the middle of a path, the
\emph{first} adjacent edge is more likely to precede the next edge
out, increasing the probability of sampling a length-2 walk to
$2/3$.

\paragraph{Repeated edges.} We fix the first issue by associating each walk
$(e_1, e_2,\ldots, e_k)$ with a template $\pi$, a sequence of numbers
that encodes the amount of `backtracking' that the walk
$(e_1,e_2,\ldots, e_k)$ does. Formally, we define:
\begin{definition}[Walk template]\label{def:template}
A {\em $k$-walk template} is a tuple $\pi\in \Pi_k$, where $\Pi_k:=[1]\times [2]\times [3]\times \ldots \times [k]$. 
\end{definition}
\begin{definition}[A walk conforming with a template]\label{def:conforms}
We say that a walk $e=(e_1,e_2,\ldots, e_k)\in E^k$ {\em conforms with a
template $\pi\in \Pi_k$}  if for every $j\in [k]$ one has 
$$
\pi_j=\min\{i\in [k]: e_i=e_j\}.
$$
Similarly, for $\ell\in [k]$ we say that a walk $e=(e_1,e_2,\ldots, e_\ell)\in E^\ell$ {\em conforms with a
template $\pi\in \Pi_k$}  if for every $j\in [\ell]$ one has 
$$
\pi_j=\min\{i\in [\ell]: e_i=e_j\}.
$$
\end{definition}
Note that a walk $(e_1, e_2,\ldots, e_k)$
conforms with a template $\pi\in \Pi_k$ if and only if for every $j\in [k]$,
$\pi_j$ is the first time the edge $e_j$ appears in the walk.

Because every walk $(e_1,e_2,\ldots, e_k)$ conforms with exactly one
template, our random walk generation procedure can proceed by first
sampling a template uniformly at random, and then generating a walk
that conforms with that template.

\paragraph{Debiasing the estimate.} To address the second issue with
the naive approach---that the probability we find a given walk is not
proportional to the random walk probability---we modify the walk
procedure slightly to not always follow the next edge out of each
vertex.  Instead, we choose our first edge out of the start vertex $v$
uniformly at random from the edges incident to $v$ in the first $\eta$
fraction of the stream, for a small parameter $\eta$; our second edge
is chosen uniformly from the edges incident to this vertex in the second $\eta$
fraction of the stream, and so on.  But this is subject to conforming to the
template---in step $j$, if $\pi_j \neq j$, we ignore the $j^{\text{th}}$ $\eta$ fraction
of the stream and instead reuse the $\pi_j^{\text{th}}$ edge we've already taken.

With this approach, we can correct differences in the probability of finding
each walk. The probability that we find a given walk is (i) $1/k!$, the
probability that we sample the right template, times (ii) an $\eta$ factor for
each distinct edge in the walk, the probability that the stream is such that it
is \emph{possible} for our algorithm's random choices to find this walk, times
(iii) the probability that our algorithm makes the correct choices to find the
walk.  This last probability depends on the stream, being the product over
steps of the inverse number of edges incident to the previous vertex in the
appropriate $\eta$ fraction of the stream.  The key is that this probability
$p$ is known after we see the stream; so if we knew the \emph{true} random walk
probability $q$ for this walk, we could rejection sample with probability
proportional to $q/p$ to output walks under the correct distribution.

So how can we estimate the correct probability $q$ of a given random walk with
small expected error?  The random walk probability $q$ is
$\prod_{j=1}^k \frac{1}{d_{j-1}}$, where $d_{j-1}$ is the degree of
the $(j-1)^{\text{th}}$ vertex in the walk (with $d_0$ being the starting degree).
We can estimate each $d_j$ by watching the stream after we finish the
sampling procedure; this will contain a $(1 - k \eta)$ fraction of the stream.
One might expect this to introduce an error of about $(1-k \eta)^k \approx
e^{-k^2 \eta}$ in our estimate of $q$, but in fact the error can be much larger
because a vertex may be visited as many as $\bTh{k}$ times in a walk. 

For a constant-degree vertex, there is an $\eta k$ chance that we will miss at
least one of its edges, in which case our estimate will be off by a constant
factor, which could in turn lead to a $2^{\Theta(k)}$ relative error in $q$ if
the vertex appears $\bTh{k}$ times in the walk, for $\eta 2^{\Theta(k)}$
expected relative error.  For this reason we need to set $\eta <
2^{-\Theta(k)}$, which leads to the final $2^{-O(k^2)}$ term in $\tau$, and
thus the $2^{\bO{k^2}}$ term in our space complexity.

\paragraph{Repetition and near-independence.}
The above argument leads to an $O(k)$-space algorithm that, with probability
$2^{-O(k^2)}$, outputs a nearly uniform random walk.  To make this useful, we
need to repeat it at least $s = 2^{O(k^2)}$ times so that we may actually find
walks.  The challenge here is that these are not independent repetitions: the
output of the algorithm depends on the random order of the stream, which is
shared by each copy of the algorithm.

Fortunately, the repetitions are \emph{nearly} independent.  Knowing the path
taken in a given attempt to sample a walk tells us something about the arrival
times of the other edges incident to the vertices visited in that walk, but it
is independent of edges not incident to vertices on the path.  If the graph had
no high degree vertices---say, the maximum degree were $n^{1/4}$---this would
be sufficient: the probability that any given walk visits a degree-$d$ vertex
is at most $kd/n < n^{-.7}$, so if $s < n^{.01}$ we will probably never visit
two adjacent vertices.

For high-degree vertices we need a different analysis: a high-degree vertex $v$
will with high probability have many edges as possibilities in each stage, so
knowing the behavior of $s$ other walks only has a small effect on which edges
are likely to be followed after visiting $v$.  Formally, we introduce a hybrid
algorithm where the behavior on high degree vertices is independent of the
stream, show that the distribution of the output of the original algorithm is
close to that of the hybrid algorithm, then use the above argument for low
degree vertices.

\subsection{Lower Bound Techniques}
Our lower bound is based on the following property of directed graphs: if a
vertex has high \emph{in}-degree but low \emph{out}-degree, that vertex can
cause the vast majority of random walks in a graph to be ``channeled'' into one
path. If this path has multiple edges any algorithm that estimates the random
walk distribution or the PageRank vector will need to observe all of them,
which is inherently difficult as later edges in the path will not be recognized
as significant unless they arrive after all the earlier edges. This is depicted
in Figure~\ref{fig:hardgraph}.

\begin{figure}
    \centering
    \tikzset{basevertex/.style={shape=circle, line width=0.5,
    minimum size=4pt, inner sep=0pt, draw}}
    \tikzset{defaultvertex/.style={basevertex, fill=blue!70}}
    \begin{tikzpicture}[%
            VertexStyle/.style={defaultvertex},
            fat arrow/.style={single arrow,
                thick,draw=blue!70,fill=blue!30,
            minimum height=10mm},
        scale = 1]

\begin{scope}[thick,decoration={
    markings,
        mark=at position 0.5 with {\arrow{>}}}
            ] 
        \node[style={defaultvertex}](u) at (4,{0.4 * 3.5}) {};
        \node at (4,{0.4 * 3.5 - 0.3}) {$u$};
        \foreach \i in {0,...,9} {
            \node[style={defaultvertex}](in) at (0,{0.4 * \i}) {};
            \draw[postaction=decorate] (in) -- (u);
        }
        \node[style={defaultvertex}](v) at (8, {0.4* 3.5}) {};
        \node at (8,{0.4 * 3.5 - 0.3}) {$v$};
        \draw[postaction=decorate] (u) -- (v);
        \node[style={defaultvertex}](w) at (12, {0.4* 3.5}) {};
        \node at (12,{0.4 * 3.5 - 0.3}) {$w$};
        \draw[postaction=decorate] (v) -- (w);
        \foreach \i in {-0.5, 1.5, 5.5, 7.5} {
            \node[style={defaultvertex}] (v) at (8, {0.4 * \i}) {};
            \node[style={defaultvertex}] (w) at (12, {0.4 * \i}) {};
            \draw[postaction=decorate] (v) -- (w);
        }
\end{scope}
    \end{tikzpicture}
    \caption{A directed graph that makes it hard to find walks. Most walks in
    the graph start at one of the vertices on the left and then go into $u$,
    then $v$, then $w$. But if $vw$ arrives before $uv$ it is impossible to
    know that it should be part of all of these walks.}
    \label{fig:hardgraph}
\end{figure}
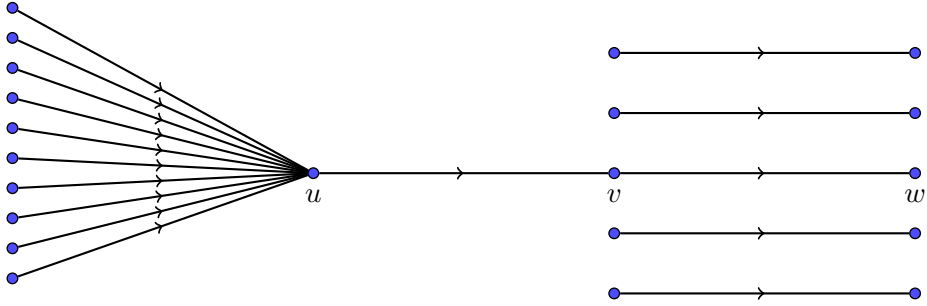

Formally, we give a method for encoding an instance of the Indexing problem in
a graph stream, similar to techniques used in~\cite{CCM16}. In the Indexing
problem, Alice has an $n$ bit string $x$ and Bob an index $I$.  Alice must send
Bob a message that allows him to guess the value of $x_I$. It is known that
Alice must send $\Omega(n)$ bits if she wants to succeed at this task.

Ordinarily it is difficult to encode communication problems as random order
graph streams, as the fact that the edges may arrive in any order makes it
difficult to assign parts of the graph to different players. We evade this
difficulty by making use of the fact that the indexing problem is hard even if
the players are guaranteed a uniform distribution on their inputs \emph{and}
only have to succeed with probability $1/2 + \varepsilon$ for some constant
$\varepsilon$. 

In our method, Bob encodes his index as a single edge from a high in-degree
vertex, and Alice encodes her bits as $n$ edges that it might point to, with
each pointing to a vertex representing $0$ or $1$. Therefore, almost all random
walks in the graph end at a vertex representing $x_I$, and a large fraction of
the PageRank vector's weight will be on this vertex.

As the edges are required to arrive in a uniformly random order, sometimes
Alice's edges may arrive after Bob's edge, in which case Bob is responsible for
inserting them in the stream.  them. In that case he simply guesses what they
should be. This means that half the time the graph will encode a random answer
rather than a solution to Indexing, but this is still sufficient for him to
succeed more than half the time. This encoding is illustrated in
Figure~\ref{fig:roindexencoding}.

\begin{figure}
    \centering
    \begin{subfigure}[t]{\textwidth}
        \centering
        \tikzset{basevertex/.style={shape=circle, line width=0.5,
        minimum size=4pt, inner sep=0pt, draw}}
        \tikzset{defaultvertex/.style={basevertex, fill=blue!70}}
        \begin{tikzpicture}[%
                VertexStyle/.style={defaultvertex},
                fat arrow/.style={single arrow,
                    thick,draw=blue!70,fill=blue!30,
                minimum height=10mm},
            scale = 1]

            \begin{scope}[thick,decoration={
                    markings,
                    mark=at position 0.57 with {\arrow{>}}}
                ] 
                \node[style={defaultvertex}](u) at (4,{0.4 * 3.5}) {};
                \foreach \i in {0,...,9} {
                    \node[style={defaultvertex}](in) at (0,{0.4 * \i}) {};
                    \draw[postaction=decorate] (in) -- (u);
                }
                \foreach \i in {0,...,9} {
                    \node[style={defaultvertex}](v\i) at (8, {0.4* \i}) {};
                }
                \node at (10.3, 0.5) {$x$};
                \draw[postaction=decorate] (u) -- (v3);
                \node at (6.17, 0.8) {$I$};
                \node[style={defaultvertex}](d0) at (12, {0.4* 3}) {};
                \node[style={defaultvertex}](d1) at (12, {0.4* 5}) {};
                \foreach \i in {1, 3, 4, 6, 7} {
                    \draw[postaction=decorate](v\i) -- (d0);
                }
                \foreach \i in {0, 2, 5, 8, 9} {
                    \draw[postaction=decorate](v\i) -- (d1);
                }
                \node[style={defaultvertex}](e0) at (13, {0.4* 3}) {};
                \node[style={defaultvertex}](e1) at (13, {0.4* 5}) {};
                \draw[postaction=decorate] (d0) to [out=-20, in=-160] (e0);
                \draw[postaction=decorate] (e0) to [out=160, in=20] (d0);
                \draw[postaction=decorate] (d1) to [out=-20, in=-160] (e1);
                \draw[postaction=decorate] (e1) to [out=160, in=20] (d1);
                \node at (13.5, {0.4* 3}) {1};
                \node at (13.5, {0.4* 5}) {0};
            \end{scope}
        \end{tikzpicture}
        \caption{Encoding an instance of Indexing in a directed graph. The high
            in-degree vertex will have one outgoing edge, encoding Bob's index $I$, which
            will point to one of $n$ vertices, each of which encodes one of Alice's bits $x$
        by pointing to one of two corresponding loops.}
    \end{subfigure}
    \begin{subfigure}[t]{\textwidth}
        \centering
        \tikzset{basevertex/.style={shape=circle, line width=0.5,
        minimum size=4pt, inner sep=0pt, draw}}
        \tikzset{defaultvertex/.style={basevertex, fill=blue!70}}
        \begin{tikzpicture}[%
                VertexStyle/.style={defaultvertex},
                fat arrow/.style={single arrow,
                    thick,draw=blue!70,fill=blue!30,
                minimum height=10mm},
            scale = 1]

            \begin{scope}[thick,decoration={
                    markings,
                    mark=at position 0.57 with {\arrow{>}}}
                ] 
                \foreach \i in {0, 1, 5, 6, 7, 8} {
                    \node[style={defaultvertex}](v\i) at (8, {0.4* \i}) {};
                }
                \node[style={defaultvertex}](d0) at (12, {0.4* 3}) {};
                \node[style={defaultvertex}](d1) at (12, {0.4* 5}) {};
                \foreach \i in {1, 6, 7} {
                    \draw[postaction=decorate](v\i) -- (d0);
                }
                \foreach \i in {0,  5, 8} {
                    \draw[postaction=decorate](v\i) -- (d1);
                }
                \node at (10.3, -0.5) {$x_{\set{0, 1, 5, 6, 7, 8}}$};
                \node[style={defaultvertex}](u) at (13,{0.4 * 3.5}) {};
                \node[style={defaultvertex}](v3) at (17, {0.4* 3}) {};
                \draw[postaction=decorate] (u) -- (v3);
                \node at (15.17, -0.5) {$I$};
                \foreach \i in {2,3,4,9} {
                    \node[style={defaultvertex}](v\i) at (18, {0.4* \i}) {};
                }
                \node[style={defaultvertex}](d0) at (22, {0.4* 3}) {};
                \node[style={defaultvertex}](d1) at (22, {0.4* 5}) {};
                \foreach \i in {2} {
                    \draw[postaction=decorate](v\i) -- (d0);
                }
                \foreach \i in {3,4,9} {
                    \draw[postaction=decorate](v\i) -- (d1);
                }
                \node at (20.3, -0.5) {\color{red} $y_{\set{2,3,4,9}}$};
            \end{scope}
        \end{tikzpicture}
        \caption{Converting the encoding into a random-order stream. The edges
        that do not depend on Alice or Bob's input can be inserted using shared
        randomness, and so are ignored here.
        Those of Alice's edges that should arrive \emph{before} Bob's edge are
        encoded as normal, but those that arrive after are instead inserted by
        Bob by chosing a random guess $y$ for Alice's input. Half of the time
        the edge encoding $x_I$ arrives before the edge encoding $I$, and the
        other half of the time there is still a $0.5$ chance of Bob guessing
        correctly, so the encoding is ``correct'' with probability $0.75$.}
    \end{subfigure} 
    \caption{Encoding an instance of Indexing as a random-order graph stream.}
    \label{fig:roindexencoding}
\end{figure}
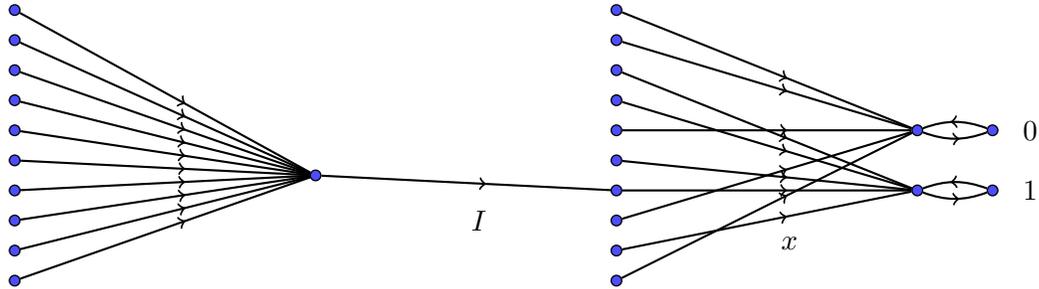
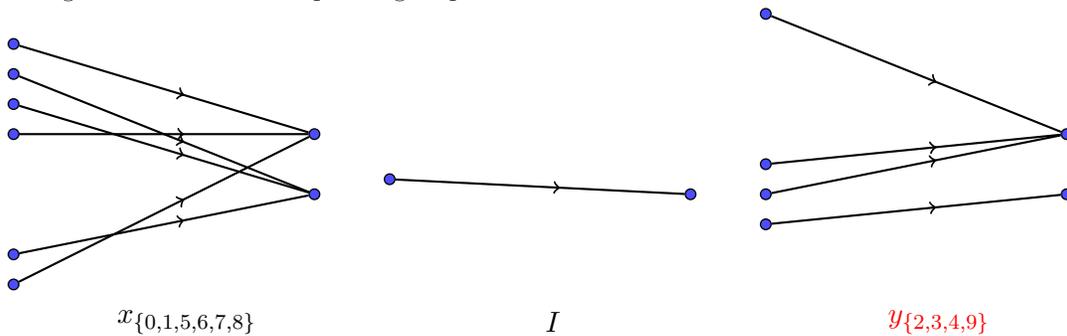

\subsection{Related Work}

For adversarial streams the problem of generating a $k$-step walk out of a given starting vertex was first considered in a paper of~\cite{SarmaGP11}, where it is shown how to generate $n$ walks of length $k$ using $\wt{O}(\sqrt{k})$ passes over the stream and $\wt{O}(n)$ space. The work of~\cite{Jin19} gives a single pass algorithm with space complexity $\wt{O}(n\cdot \sqrt{k})$ undirected graphs, and shows that this is best possible. Another recent work~\cite{KPS0Y21} gives two-pass algorithms for generating walks of length $k$ in general (even directed) graphs using $\wt{O}(n\cdot \sqrt{k})$ space, which they also show it essentially best possible. 



\newcommand{\Ev}{\mathcal{E}}

\section{Basic Definitions and Claims}
\paragraph{Basic notation.} For any integer $a\geq 1$ we write $[a]=\{1,
2,\ldots, a\}$. For any set $S$ we use $\Uc\paren{S}$ to denote the uniform
distribution on $S$. For a pair of (discrete) random variables $X$ and $Y$ we
let $d_{TV}(X, Y)$ denote the total variation distance between $X$ and $Y$,
which equals one half of the $\ell_1$ distance between their distributions. For
a vertex $v\in V$ we write $d(v)$ to denote the degree of $v$ in $G$, and
$\delta(v)$ to denote the set of its incident edges. For an integer $k\geq 0$
and two vertices $u, v\in V$ we write $p^k_v(u)$ to denote the probability that
the simple random walk started at $v$ reaches $u$ after $k$ steps. We assume
for simplicity that the graph does not have isolated vertices (all our
algorithms can be easily adapted to handle isolated vertices, so this is
without loss of generality).

\subsection{Stream Model}\label{sec:stream-model}
We assume that we receive the edges of the graph in a uniformly random order.
We think of this order as being generated by assigning each edge $f \in E$ a
``timestamp'' $t_f \in \brac{0,1}$ uniformly at random. The edges are then
presented to the algorithm in ascending order of timestamp.

Typically a random-order stream does not come with such timestamps. However, as
we show in Appendix~\ref{app:streamstamps}, an algorithm can generate
timestamps distributed appropriately using just $\bO{\log n}$ extra space, for
any desired $1/\poly(n)$ accuracy (this will suffice, as we can set the accuracy
to e.g.~$1/n^{100}$ and with very high probability the output of the algorithm
will not be influenced by changing any number of timestamps by that much). 

We assume knowledge of $m$, the number of edges in the graph. This assumption, however, can be removed by `guessing' the right value of $m$ (by running several copies of the algorithm in parallel), at the expense of a mild loss in the space complexity.

\subsection{Algorithm}
Our walk sampling algorithm (\walksreset,
Algorithm~\ref{alg:walks-reset} below) is quite simple: it samples
a large enough set of nodes uniformly at random, together with
independent uniformly random templates for every sampled node. It then
runs our random walk generation procedure (\wft)
from each such node, and outputs a sample of the runs that do not
return FAIL (i.e., those invocations of \wft{} that
terminate with a valid walk).

\begin{algorithm}[H]
	\caption{\walksreset: simulate $s$ samples of $k$-step walk started from uniformly random vertices, with reset (i.e., allowing walks to fail)}  
	\label{alg:walks-reset} 

\begin{algorithmic}[1]
\Procedure{\walksreset($k, \e, s$)}{} 
\State \Comment{$k$ is the desired walk length, $\e\in (0, 1)$ is a precision parameter}
\State $\eta \gets \e^{8}\cdot 2^{-Ck}$\Comment{For a large enough constant $C>0$} \label{line:eta-def}
\State $v^i_0\gets $ independent uniform sample from $V$ for $i\in [s]$
\For{$i\in [s]$}
\State Choose $\pi^i\sim \mathcal{U}(\Pi_k)$ \Comment{$\Pi_k$ is the set of walk templates of length $k$, see Definition~\ref{def:template}}
\State $v^i\gets \wft(v^i_0, \pi^i, k, \eta)$  \Comment{Run in parallel on the same stream}
\EndFor
\State \Return $(v^i)_{i\in [s]}$
\EndProcedure
\end{algorithmic}
\end{algorithm}

\begin{algorithm}[H]
	\caption{\walks: simulate $a$ samples of $k$-step walk started from uniformly random vertices}  
	\label{alg:simulate-walks} 

\begin{algorithmic}[1]
\Procedure{\walks($\e, k, \nwalks$)}{} 
\State \Comment{$k$ is the desired walk length, $\e\in (0, 1)$ is a precision parameter}
\State $s\gets \nwalks\cdot 100\eta^{-k}\cdot k!$ \Comment{Increase number of starting nodes to account for failed walks}
\State $(v^i)_{i\in [s]}\gets \walksreset(k, \e, s)$
\If{at least $\nwalks$ of the walks $(v^i)_{i\in [s]}$ succeeded}
\State \Return the first $\nwalks$ successful walks 
\Else
\State \Return FAIL
\EndIf
\EndProcedure
\end{algorithmic}
\end{algorithm}

\paragraph{Overview of random walk generation
({\normalfont\wft}).} Our main random walk generation
procedure is \wft{} (Algorithm~\ref{alg:walk-from-template}
below). The procedure gets as input a starting vertex, a template $\pi$, the
target length of the walk and a parameter $\eta$ that corresponds to the
fraction of the stream that is used to generate a single step of the walk.
Setting $\eta$ small lets us limit various correlations, but hurts the
performance, since the probability of \wft{} terminating
with a valid walk (as opposed to outputting FAIL) is about $\eta^k$. 

The procedure \wft{} itself is natural: it partitions the first $k\cdot \eta$ fraction of the stream
into intervals of length $\eta$, and for every $j\in [k]$ either it uses the
$j^{\text{th}}$ interval to sample a new edge (if the template $\pi$ prescribes this,
i.e. satisfies $\pi_j=j$; see line~\ref{line:hj} of
Algorithm~\ref{alg:walk-from-template}), or it takes the corresponding previously traversed edge if the template $\pi$ prescribes this, i.e.
satisfies $\pi_j<j$, and the corresponding edge is in the neighborhood of the current vertex (see line~\ref{line:back-edge} of
Algorithm~\ref{alg:walk-from-template}). The sampling is done using reservoir
sampling, and therefore is space efficient. 

After $k$ phases, having constructed a candidate walk $(f_1, f_2, \ldots,
f_k)$, the algorithm first uses the empirical batches to infer a partition of
the candidate walk into batches, and then uses this information to perform rejection
sampling. The goal of rejection sampling is to reduce the probability of
picking up a walk to be proportional to the product of the inverse degrees of the
first $k$ vertices in the walk, i.e. to the correct probability.  To achieve
this \wft{} maintains the probability $p_k$ of having
collected the walk $(f_1,\ldots, f_k)$. The algorithm then uses the remainder
of the stream to compute estimates $(\wh{d}_{j})_{j = 0}^{k-1}$, of the degrees
of the vertices on the candidate walk, and keeps the candidate walk with
probability proportional to $p_k^{-1}\prod_{j=1}^{k-1}(1/\wh{d}_{j})$ (see
line~\ref{line:rejection-sampling} of Algorithm~\ref{alg:walk-from-template}).

The algorithm is formally described as Algorithm~\ref{alg:walk-from-template}
below. For a stream $\sigma$ and parameters $\alpha, \beta\in [0, 1]$ we define
$$
\sigma[\alpha, \beta)=\left\{e\in E: t_e\in [\alpha, \beta)\right\}.
$$
Recall that we think of every edge $e\in E$ as being assigned an independent uniformly random timestamp $t_e\in [0, 1]$, and the edges being presented in increasing order of these timestamps (see Section~\ref{sec:stream-model} for more discussion of this assumption).

\begin{algorithm}
	\caption{\wft: generate a random walk from starting vertex $u_0$ that conforms with a template $\pi$}  
	\label{alg:walk-from-template} 

\begin{algorithmic}[1]
\Procedure{\wft($u_0, \pi, k, \eta$)}{} \Comment{$k$ is the desired walk length, $u_0$ is the starting vertex}
\For{$j=1$ to $k$}
\State $u\gets u_{j-1}$
\If{$\pi_j=j$}
\State $H_j\gets $ edges in $\delta(u)\cap \sigma[\eta \cdot (j-1), \eta \cdot j)$\label{line:hj}
\State {\bf If~}{$H_j=\emptyset$} {\bf~then~} \Return FAIL 
\State $f_j \gets \mathcal{U}(H_j)$ \Comment{Implemented using reservoir
sampling}
\State $\gamma_j \gets \frac1{|H_j|}\cdot \eta$
\Else
\State {\bf If~}{$f_{\pi_j}\not \in \delta(u)$} {\bf~then~} \Return FAIL
\State $f_j\gets f_{\pi_j}$\label{line:back-edge}
\State $\gamma_j \gets 1$
\EndIf
\State $u_j\gets $ endpoint of $f_j$ other than $u$ 
\EndFor
\For{$j\in [k]$} \Comment{Compute degree estimates for vertices on the walk}
\State $\wh{d}_{j-1}\gets$ degree of $u_{j-1}$ in $\{f_1, \ldots, f_k\}\cup \sigma[\eta \cdot k, 1]$ \label{line:deg-est}
\EndFor
\State $\alpha \gets \prod_{j\in [k]} \min(\frac{\eta }{\gamma_j \wh{d}_{j-1}}, 1)$ \Comment{Done in postprocessing}\label{line:out}
\State \Return $(u_0, \ldots, u_k)$ with probability $\alpha$ and \Return FAIL otherwise \label{line:rejection-sampling}
\EndProcedure
\end{algorithmic}
\end{algorithm}

In what follows we prove that, if we set our parameters appropriately, each
walk is output with almost the correct probability.

We assume that parameters $n, k$ and $\eta$ satisfy the following:

\begin{enumerate}[label=\textbf{(P\arabic*)}]
\item $k\leq c\log n/\log\log n$ for a small constant $c>0$ \label{p1}
\item $\eta\in (n^{-1/100},  2^{-Ck})$ for a sufficiently large constant $C\geq 8$ \label{p2}
\end{enumerate}

\section{Analysis of \wft}

\begin{restatable}{lemma}{walkdistribution}\label{lm:walk-distribution}

For every integer $k\geq 1$ and real number $\eta$ satisfying~\ref{p1}
and~\ref{p2}, and every $\pi\in \Pi_k$, the following holds:

For every $v\in V$ and every length $k$ walk $\mathbf{v}=(v_0, v_1,\ldots, v_{k-1}, v_k)$ from $v_0 = v$, an invocation of \wft$(v, \pi, k, \eta)$ (Algorithm~\ref{alg:walk-from-template}) outputs $\mathbf{v}$ with probability 
$$
p\in \left[ 1, 1+\eta^{1/7} \right]\cdot \eta^k\cdot\prod_{j\in
[k]} \frac1{d(v_{j-1})},
$$
if $\mathbf{v}$ conforms with $\pi$ and with probability $0$ otherwise.
\end{restatable}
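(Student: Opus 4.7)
The plan is to directly compute the probability that \wft$(v,\pi,k,\eta)$ outputs a given walk, by decomposing it into the probability that the main loop constructs the walk as a candidate and the rejection-sampling probability $\alpha$. First I would observe that the algorithm can only output walks conforming with $\pi$: when $\pi_j=j$ the fresh edge $f_j$ lies in $\sigma[\eta(j-1),\eta j)$, disjoint from earlier sampling windows, so $f_j\neq f_i$ for all $i<j$; when $\pi_j<j$ the algorithm forces $f_j=f_{\pi_j}$. An easy induction on $j$ then yields the second claim of the lemma (probability $0$ for non-conforming walks).

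Now consider a walk $\mathbf v=(v_0,\ldots,v_k)$ conforming with $\pi$, set $e_j=(v_{j-1},v_j)$, $S=\{j:\pi_j=j\}$, and let $\mathcal E$ denote the event that $t_{e_j}\in[\eta(j-1),\eta j)$ for every $j\in S$. The $\{e_j\}_{j\in S}$ are the distinct edges of $\mathbf v$, so these time-window events are independent and $\Pb{\mathcal E}=\eta^{|S|}$. On $\mathcal E$, the back-edge check at line~\ref{line:back-edge} is automatic for $j\notin S$ (since $e_{\pi_j}\in\delta(v_{j-1})$), while for $j\in S$ the reservoir sample returns $e_j$ with conditional probability $1/|H_j|$. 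Therefore
$$P=\E[T]{\mathbf 1[\mathcal E]\cdot\prod_{j\in S}\tfrac{1}{|H_j|}\cdot\alpha(T)},\qquad \alpha(T)=\prod_{j=1}^k\min\!\paren*{\tfrac{\eta}{\gamma_j\wh d_{j-1}},\,1}.$$

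For the lower bound, I would show the pointwise inequality $\alpha(T)\ge\alpha_0(T):=\prod_{j\in S}\frac{|H_j|}{d(v_{j-1})}\prod_{j\notin S}\frac{\eta}{d(v_{j-1})}$, which follows from $\wh d_{j-1}\le d(v_{j-1})$ together with $|H_j|\le d(v_{j-1})$ for $j\in S$. Substituting $\alpha_0$ simplifies the integrand telescopically to the non-random quantity $\eta^{k-|S|}\prod_j 1/d(v_{j-1})$; multiplying by $\Pb{\mathcal E}=\eta^{|S|}$ gives $P\ge \eta^k\prod_j 1/d(v_{j-1})$. For the upper bound, the trivial $\alpha(T)\le\prod_j \eta/(\gamma_j\wh d_{j-1})$ reduces the claim to
$$\E[T]{\prod_{j=1}^k \tfrac{d(v_{j-1})}{\wh d_{j-1}}\;\Big|\;\mathcal E}\;\le\;1+\eta^{1/7}.$$

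The main obstacle is this last inequality. I would use the representation $\wh d_v=d(v)-Y_v$, where $Y_v$ counts non-walk edges incident to $v$ with timestamp in $[0,\eta k)$. Conditional on $\mathcal E$, the non-walk edge timestamps remain independent uniform on $[0,1]$, so $Y_v\sim\mathrm{Bin}(m_v,\eta k)$ with $m_v\le d(v)$. Since any two distinct vertices share at most one non-walk edge, the $Y_v$'s are nearly independent across the $\le k$ distinct vertices of the walk, so the expectation factors up to lower-order terms and it suffices to bound $\E{(d(v)/\wh d_v)^{\ell_v}}$ for each vertex separately (where $\ell_v$ is the multiplicity of $v$ in $v_0,\ldots,v_{k-1}$). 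For vertices of large enough degree, Chernoff concentrates $Y_v$ around $\eta k m_v$, giving $\wh d_v\ge d(v)(1-O(\eta k))$ except on an event whose tiny probability dominates the crude worst-case bound $d(v)^{\ell_v}$. For small-degree vertices, $Y_v$ lives on few trials and is controlled by direct enumeration. In both regimes the per-vertex moment is $1+O(\eta k^2)$, so the overall product is $(1+O(\eta k^2))^k=1+O(\eta k^3)$, which the assumption $\eta<2^{-Ck}$ from~\ref{p2} makes much smaller than $\eta^{1/7}$, completing the proof.
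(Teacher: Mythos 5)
Your overall decomposition is the same as the paper's: separate the probability into (i) the event that the fresh edges land in their $\eta$-windows, (ii) the reservoir-sampling factor $\prod_{j\in S}1/|H_j|$, and (iii) the rejection probability $\alpha$; telescope the lower bound using $\widehat d_{j-1}\le d(v_{j-1})$ and $|H_j|\le d(v_{j-1})$; and reduce the upper bound to showing $\mathbb{E}\bigl[\prod_j d(v_{j-1})/\widehat d_{j-1}\mid\mathcal E\bigr]\le 1+\eta^{1/7}$. All of that is correct and matches the paper. The problem is in how you finish this last expectation bound, and it is not a cosmetic issue --- it is exactly the subtle point of this lemma.

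First, your claim that ``the per-vertex moment is $1+O(\eta k^2)$ in both regimes'' is false for low-degree vertices visited many times. Take a degree-$2$ vertex $v$ with one walk edge that appears $\ell_v=\Theta(k)$ times among $v_0,\dots,v_{k-1}$ (e.g.\ a walk that repeatedly backtracks along a path). With probability about $\eta k$ the single non-walk edge has timestamp below $\eta k$, in which case $\widehat d_v=1$ and $(d(v)/\widehat d_v)^{\ell_v}=2^{\ell_v}$, so
\[
\mathbb{E}\left[\left(\tfrac{d(v)}{\widehat d_v}\right)^{\ell_v}\right]\;\approx\;1+\eta k\,(2^{\Theta(k)}-1),
\]
which is $1+\eta\,2^{\Theta(k)}$, not $1+O(\eta k^2)$. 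This exponential-in-$k$ loss is precisely why the paper requires $\eta<2^{-Ck}$ in~\ref{p2} and why the final space bound carries a $2^{O(k^2)}$ factor; if your $1+O(\eta k^2)$ bound were correct, a polynomially small $\eta$ would suffice and the theorem's parameters could be dramatically improved. The paper handles this by splitting each factor as $d(v_i)/\widehat d(v_i)\le 1+Z_i+d(v_i)I_i$, where $I_i$ is the rare event $Y_i\ge d(v_i)/2$ controlled by a Bennett-type tail bound $(3k\eta)^{d(v_i)/5}$, so that the crude $d(v_i)^{\text{mult}}$ blow-up is always paid against an exponentially small probability in $d(v_i)$.

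Second, your appeal to near-independence of the $Y_v$'s (``the expectation factors up to lower-order terms'') is unsubstantiated: the factors $d(v)/\widehat d_v$ all exceed $1$ and are positively correlated in the worst case (and for a repeated vertex the same factor appears multiple times, i.e.\ perfect correlation), so the expectation of the product does not simply factor. You would need, e.g., H\"older's inequality to reduce to single-vertex moments, at a further loss in the exponent. The paper avoids this entirely: its Lemma~\ref{lm:prod} bounds $\mathbb{E}\bigl[\prod_i(1+Z_i+q_iE_i)\bigr]$ for \emph{arbitrarily correlated} variables by expanding into $3^k$ monomials and bounding each by $\wt\eta^{1/5}$, which is why no independence across vertices is ever needed. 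Your argument is repairable along these lines, but as written both the factorization step and the per-vertex bound are gaps.
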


The following corollary is an immediate consequence of Lemma~\ref{lm:walk-distribution}:
\begin{corollary}\label{cor:walk-distribution}
For every integer $k\geq 1$ and real number $\eta$ satisfying~\ref{p1}
and~\ref{p2},  the following holds for $\pi$ sampled from $\Uc(\Pi_k)$:

For every $v\in V$ and every length $k$ walk $\mathbf{v}=(v_0, v_1,\ldots,
v_{k-1}, v_k)$ from $v_0 = v$,  an invocation of \wft$(v,
\pi, k, \eta)$ (Algorithm~\ref{alg:walk-from-template}) outputs $\mathbf{v}$
with probability
$$
p\in \left[ 1, 1+\bO{\eta^{1/7}} \right]\cdot \frac1{k!}\cdot
\eta^k\cdot\prod_{j\in [k]} \frac1{d(v_{j-1})}.
$$
\end{corollary}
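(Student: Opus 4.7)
The plan is to derive Corollary~\ref{cor:walk-distribution} directly from Lemma~\ref{lm:walk-distribution} by conditioning on the random template $\pi$ and invoking the law of total probability. The key observation is that by Definition~\ref{def:conforms}, the template a given walk $\mathbf{v}$ conforms to is uniquely determined: the entries are forced to be $\pi^*_j = \min\{i \in [k] : e_i = e_j\}$, where $(e_1,\ldots,e_k)$ is the edge sequence induced by $\mathbf{v}$. Hence there is exactly one $\pi^* \in \Pi_k$ with which $\mathbf{v}$ conforms, and for any other $\pi \neq \pi^*$ Lemma~\ref{lm:walk-distribution} says that \wft$(v,\pi,k,\eta)$ outputs $\mathbf{v}$ with probability $0$.

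Next I would compute the probability under a uniformly random $\pi \sim \Uc(\Pi_k)$. Since $|\Pi_k| = 1 \cdot 2 \cdots k = k!$, the probability that $\pi = \pi^*$ is exactly $1/k!$. Conditioned on this event, Lemma~\ref{lm:walk-distribution} gives that the output probability lies in
$$
\left[1, 1 + \eta^{1/7}\right] \cdot \eta^k \cdot \prod_{j \in [k]} \frac{1}{d(v_{j-1})}.
$$
Combining via the total probability formula yields
$$
p \in \left[1, 1 + \eta^{1/7}\right] \cdot \frac{1}{k!} \cdot \eta^k \cdot \prod_{j \in [k]} \frac{1}{d(v_{j-1})},
$$
which is contained in the interval $[1, 1 + \bO{\eta^{1/7}}] \cdot \frac{1}{k!} \cdot \eta^k \cdot \prod_{j \in [k]} \frac{1}{d(v_{j-1})}$ claimed in the corollary.

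There is no real obstacle here since the argument is just a one-line application of the law of total probability. The only point that merits a brief verification is the uniqueness of the template $\pi^*$, which is immediate from the defining formula $\pi_j = \min\{i : e_i = e_j\}$ together with the fact that the tuple $(\pi_1,\ldots,\pi_k)$ so defined lies in $[1]\times[2]\times\cdots\times[k]$ (since $\pi_j \leq j$ always). All quantitative work has already been done inside Lemma~\ref{lm:walk-distribution}.
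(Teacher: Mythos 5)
Your proposal is correct and follows essentially the same route as the paper: identify the unique template $\pi^*\in\Pi_k$ that $\mathbf{v}$ conforms with, note that $\Pr[\pi=\pi^*]=1/|\Pi_k|=1/k!$, and multiply by the conditional probability from Lemma~\ref{lm:walk-distribution} (the paper phrases its proof slightly more generally in terms of extensions of a template for a length-$\ell$ prefix, but for the length-$k$ walk in the statement this reduces exactly to your argument).
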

\begin{proof}
Per Definition~\ref{def:conforms} there exists a unique template $\pi'\in \Pi_\ell$  that
 $\mathbf{v}$ conforms with. The walk $\mathbf{v}$ conforms with any extension $\pi$ of $\pi'$ to a template in $\Pi_k$, and no other template. For every such $\pi$ the corresponding invocation of \wft$(v_0,
\pi, k, \eta)$  constructs $\mathbf{v}$ at the end of the first $\ell$ iterations of its main loop with probability
\begin{equation}\label{eq:p}
p\in \left[ 1, 1+\bO{\eta^{1/7}} \right]\cdot \eta^\ell\cdot\prod_{j\in
[\ell]} \frac1{d(v_{j-1})}
\end{equation}
by Lemma~\ref{lm:walk-distribution}, and with probability zero for other $\pi$. 
Thus, the result follows since $\pi$ is selected uniformly at random from $\Pi_k$. 
\end{proof}

\subsection{Useful Technical Results}
The following are results that will be useful in our analysis. Proofs are
deferred to Appendix~\ref{app:technicalproofs}.

Recall that for integer $k\geq 0$ and two vertices $u, v\in V$ we write
$p^k_v(u)$ to denote the probability that the simple random walk started at $v$
reaches $u$ after $k$ steps.
\begin{restatable}{claim}{average}
\label{cl:average}
Let $v\in V$ be chosen uniformly at random in a graph with no isolated vertices. Then for every $u\in V$ and $k\geq 0$ one has $\expect_{v\sim \mathcal{U}(V)} [p^k_v(u)]\leq d(u)/n$.
\end{restatable}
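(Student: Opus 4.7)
The plan is to use the reversibility of the simple random walk on an undirected graph. The simple random walk has stationary distribution $\pi(v) = d(v)/(2m)$, and the detailed balance condition $\pi(v) M(v,u) = \pi(u) M(u,v)$ extends to powers of the transition matrix, giving
$$
d(v) \cdot p^k_v(u) = d(u) \cdot p^k_u(v)
$$
for every $u, v \in V$ and every $k \geq 0$.

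Using this identity, I would rewrite the expectation as
$$
\E[v\sim \Uc(V)]{p^k_v(u)} = \frac1n \sum_{v \in V} p^k_v(u) = \frac{d(u)}{n} \sum_{v\in V} \frac{p^k_u(v)}{d(v)}.
$$
The claim then follows if I can show the remaining sum is at most $1$. Since $G$ has no isolated vertices, $d(v) \geq 1$ for every $v$, so $1/d(v) \leq 1$, and
$$
\sum_{v \in V} \frac{p^k_u(v)}{d(v)} \leq \sum_{v \in V} p^k_u(v) = 1,
$$
where the final equality holds because $p^k_u$ is a probability distribution on $V$.

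There is no real obstacle here, just the need to verify reversibility extends to $M^k$. I would either cite standard facts about reversible chains or verify it directly: detailed balance gives $DM = M^T D$, hence $D M^k = (M^T)^k D$, which reads entrywise as $d(v) p^k_v(u) = d(u) p^k_u(v)$. That is the only slightly nontrivial step; the rest is substitution and the trivial bound $1/d(v) \leq 1$.
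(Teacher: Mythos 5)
Your proof is correct, but it takes a different route from the paper's. The paper proves the claim by induction on $k$: the base case $k=0$ gives $1/n \leq d(u)/n$ (using that $u$ is not isolated), and the inductive step conditions on the last step of the walk, writing $\E[v\sim\Uc(V)]{p^{k+1}_v(u)} = \sum_{w \sim u} \frac{1}{d(w)}\E[v\sim\Uc(V)]{p^k_v(w)} \leq \sum_{w\sim u}\frac{1}{d(w)}\cdot\frac{d(w)}{n} = \frac{d(u)}{n}$. You instead invoke the global time-reversal identity $d(v)\,p^k_v(u) = d(u)\,p^k_u(v)$ for the reversible chain and then bound $\sum_v p^k_u(v)/d(v) \leq \sum_v p^k_u(v) = 1$ using $d(v)\geq 1$. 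Both arguments are elementary and rely on the no-isolated-vertices hypothesis in the same essential way (a degree is at least $1$ somewhere); the induction avoids any appeal to reversibility and so is entirely self-contained, while your identity is arguably more illuminating --- it makes explicit that the quantity being bounded is $\frac{d(u)}{n}$ times a reverse-walk average of $1/d(v)$, and it yields the claim in one line once the identity $DM^k=(M^k)^TD$ is verified, exactly as you do. Your proof does depend on the graph being undirected (reversibility fails for digraphs), but so does the paper's, and the claim is only used in the undirected setting.
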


The following is a consquence of Bennett's inequality.
\begin{restatable}{lemma}{concentration}
\label{lm:concentration}
Let $Y=\sum_i \alpha_i X_i$,  where $\alpha_i\in \{0, 1\}$ and $X_i\sim
\text{Ber}(\eta)$ are independent for some $\eta\in (0, 1/50)$. Then for every
$d\geq \sum_i \alpha_i$
$$
\Pr[Y\geq d/2]\leq (3\eta)^{d/5}.
$$
\end{restatable}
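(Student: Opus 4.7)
\textbf{Proof plan for Lemma~\ref{lm:concentration}.}

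The plan is to reduce the bound to a sum of i.i.d.\ Bernoullis and then apply a one-sided concentration inequality, tuning the constants so that the hypothesis $\eta < 1/50$ is comfortably sufficient. Let $N = \sum_i \alpha_i \leq d$. Since only the terms with $\alpha_i = 1$ contribute, $Y$ is distributed as a $\text{Bin}(N, \eta)$ variable with mean $N\eta \leq d\eta$, and I only need to bound $\Pr[Y \geq d/2]$ for this binomial.

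The cleanest route is a direct union bound. The event $\{Y \geq d/2\}$ says there is a subset $S \subseteq \{i : \alpha_i = 1\}$ of size $\lceil d/2 \rceil$ such that $X_i = 1$ for all $i \in S$; there are $\binom{N}{\lceil d/2\rceil}$ such subsets, and each configuration has probability $\eta^{\lceil d/2\rceil}$, so
\[
\Pr[Y \geq d/2] \;\leq\; \binom{N}{\lceil d/2 \rceil}\, \eta^{\lceil d/2 \rceil} \;\leq\; \binom{d}{\lceil d/2 \rceil}\, \eta^{d/2} \;\leq\; 2^d \eta^{d/2},
\]
where I used $N \leq d$ and $\binom{d}{\lceil d/2\rceil} \leq 2^d$. (If one prefers Bennett's inequality, as advertised in the statement, applying it to $\mathrm{Bin}(N,\eta)$ with $t = d/2 - N\eta$ and $\sigma^2 \leq N\eta$ produces a qualitatively identical bound $(C\eta)^{d/2}$ for a small constant $C$; I use the combinatorial form because it avoids case analysis when $d/2 < N\eta$.)

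It remains to check that $2^d \eta^{d/2} \leq (3\eta)^{d/5}$ under the assumption $\eta < 1/50$. Dividing,
\[
\frac{2^d \eta^{d/2}}{(3\eta)^{d/5}} \;=\; \left(\frac{32\,\eta^{3/2}}{3}\right)^{d/5},
\]
and for $\eta \leq 1/50$ one has $32\eta^{3/2}/3 \leq (32/3)\cdot 50^{-3/2} < 1$, so the whole quantity is at most $1$. This gives $\Pr[Y \geq d/2] \leq (3\eta)^{d/5}$ as claimed.

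The only ``obstacle'' is numerical: the constants $2$ in $\binom{d}{d/2}$ and $3$ in $(3\eta)^{d/5}$ are tight enough that one has to be careful, but the gap between $\eta < 1/50$ and the threshold $\eta < (3/32)^{2/3} \approx 0.2$ is comfortable, so no further refinement is needed.
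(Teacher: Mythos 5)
Your proof is correct, and it takes a genuinely different route from the paper. The paper proves this lemma via Bennett's inequality (Theorem~\ref{thm:bennett}), which requires two auxiliary analytic facts established in the appendix: that $v\,h(1/v)$ is decreasing in $v$ (Claim~\ref{cl:h-mon}, used to replace the variance proxy $v$ by its upper bound $\eta d$) and that $h(u)\geq \tfrac12 u\ln u$ (Claim~\ref{cl:h-lb}, used to extract the $(3\eta)^{d/5}$ form). Your union bound over subsets $S$ of size $\lceil d/2\rceil$ replaces all of this machinery with the elementary estimate $\binom{N}{\lceil d/2\rceil}\eta^{\lceil d/2\rceil}\leq 2^d\eta^{d/2}$, and the final numerical check $(32\eta^{3/2}/3)^{d/5}\leq 1$ for $\eta\leq 1/50$ is right. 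Two small points worth noting: first, the containment is an implication rather than an equivalence ($Y\geq d/2$ \emph{implies} the existence of such an $S$), which is all a union bound needs; second, your intermediate bound $2^d\eta^{d/2}$ is actually \emph{stronger} than the stated $(3\eta)^{d/5}$ in its dependence on $\eta$ (exponent $d/2$ versus $d/5$), so your argument would support a sharper lemma if one were needed. The trade-off is that the Bennett route generalizes more readily to non-Bernoulli or non-identically-distributed summands, whereas the counting argument is specific to indicator sums; for this lemma the counting argument is simpler and entirely adequate.
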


\begin{restatable}{lemma}{productlm}
\label{lm:prod}
 Let $E_1, \dotsc, E_k, Z_1, \dotsc, Z_k$ be arbitrarily correlated random
 variables. Let  $\wt{\eta}\in (0, e^{-5k})$ and the positive integers
 $(q_i)_{i=1}^k$ be such that, for all $i \in \brac{k}$, $E_i \in \{0, 1\}$,
 $\expect[E_i] \leq \wt{\eta}^{q_i/5}$, $Z_i \in [0, 1]$, and $\expect[Z_i]
 \leq \wt{\eta}$. Then
 \[
    \expect\left[\prod_{i=1}^k (1 + Z_i + q_i E_i )\right] \leq 1 + 3^k\wt{\eta}^{1/5}.
  \]
\end{restatable}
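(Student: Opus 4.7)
The plan is to fully expand the product $\prod_{i=1}^k (1+Z_i+q_i E_i)$ by tagging each factor with its chosen summand. For disjoint $A,B\subseteq[k]$, let $A$ index factors that contribute $Z_i$, let $B$ index factors that contribute $q_i E_i$, and let the remaining positions contribute $1$. This produces exactly $3^k$ monomials, and by linearity
\[
\expect\left[\prod_{i=1}^{k}(1+Z_i+q_i E_i)\right] = \sum_{\substack{A,B\subseteq[k]\\ A\cap B=\emptyset}}\left(\prod_{i\in B}q_i\right)\expect\left[\prod_{i\in A}Z_i\prod_{i\in B}E_i\right].
\]
The $(\emptyset,\emptyset)$ term contributes $1$, so it suffices to show that every other term has expectation at most $\wt{\eta}^{1/5}$; summing the fewer than $3^k$ remaining terms then yields the claim.

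To bound a single nontrivial term I would split on whether $B$ is empty. If $B=\emptyset$ but $A\neq\emptyset$, I pick any $j\in A$; since $Z_i\in[0,1]$, the product $\prod_{i\in A}Z_i\leq Z_j$, so $\expect[\prod_{i\in A}Z_i]\leq \expect[Z_j]\leq \wt{\eta}\leq\wt{\eta}^{1/5}$ (the last step using $\wt{\eta}<1$). If $B\neq\emptyset$, I discard the $Z_i$'s using $Z_i\leq 1$ and use $E_i\in\{0,1\}$ so that $\prod_{i\in B}E_i\leq E_j$ for every $j\in B$. Letting $j^*\in\argmax_{j\in B}q_j$ and $q^*:=q_{j^*}$, we get $\expect[\prod_{i\in B}E_i]\leq \expect[E_{j^*}]\leq \wt{\eta}^{q^*/5}$, while the coefficient satisfies $\prod_{i\in B}q_i\leq (q^*)^{|B|}\leq (q^*)^k$. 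Hence the whole term is at most $(q^*)^k\wt{\eta}^{q^*/5}$.

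The main obstacle is the unbounded coefficient $\prod_{i\in B}q_i$, which must be absorbed by the exponentially small expectation $\expect[E_i]\leq\wt{\eta}^{q_i/5}$. The hypothesis $\wt{\eta}<e^{-5k}$ is exactly what is needed: the required bound $(q^*)^k\wt{\eta}^{q^*/5}\leq\wt{\eta}^{1/5}$ is equivalent to $k\ln q^*\leq \tfrac{q^*-1}{5}\ln(1/\wt{\eta})$, and since $\ln(1/\wt{\eta})>5k$ the right-hand side is at least $(q^*-1)k$, reducing the inequality to $\ln q^*\leq q^*-1$, which holds for all $q^*\geq 1$. Combining the two cases bounds every nontrivial term by $\wt{\eta}^{1/5}$, and the lemma follows.
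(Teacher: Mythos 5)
Your proof is correct and follows essentially the same route as the paper: expand into $3^k$ monomials indexed by disjoint sets, bound the pure-$Z$ terms by $\wt{\eta}$, and bound the terms containing some $E_j$ by $(q^*)^k\wt{\eta}^{q^*/5}$ with $q^*$ the maximal exponent. The only cosmetic difference is that you verify $(q^*)^k\wt{\eta}^{q^*/5}\leq\wt{\eta}^{1/5}$ via the inequality $\ln q^*\leq q^*-1$, whereas the paper checks that the ratio $f(q+1)/f(q)$ is at most $1$ over the positive integers; both are valid.
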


\subsection{Proof of Lemma~\ref{lm:walk-distribution}}

We now prove Lemma~\ref{lm:walk-distribution}, restated here for convenience of the reader.

\walkdistribution*

\begin{proof}
  There are three sources of randomness in \wft\
  (Algorithm~\ref{alg:walk-from-template}): the stream $\sigma$, the reservoir
  sampling $r$ to find the path, and the rejection sampling at the end.  We
  first analyze the event $\Fc$ that a given walk $\mathbf{v}$ is ``collected'',
  meaning that is found but might be rejected in the final rejection sampling
  step. We use $F$ to denote the indicator variable associated with $\Fc$.

  For $j\in [k]$ let $e_j=(v_{j-1}, v_j)$.  Let
  \begin{equation}\label{eq:gamma-def}
    \Gamma:=\left\{\sigma \text{ a stream }: \forall j\in [k]\text{~such that~}\pi_j=j\text{~one
    has~}t_{e_j}\in [\eta (j-1), \eta\cdot j)\right\}.
  \end{equation}
  This is the set of streams $\sigma$ such that collecting $\mathbf{v}$ is
  possible: for every $\sigma\not \in \Gamma$ we have $F = 0$ always.
  Similarly, if $\vb$ does not conform to $\pi$ then $F = 0$.  For any fixed
  $\sigma \in \Gamma$ and $\vb$ that conforms to $\pi$, we have
  \begin{equation}\label{eq:43gh34gh}
    \Pr_r[\Fc]=\prod_{j\in [k]: \pi_j=j} \frac1{|H_j|},
  \end{equation}
  where (as in Algorithm~\ref{alg:walk-from-template}) $H_j$ is the
  set of edges that could could be taken in stage $j$.

  Define
  \begin{equation}
    p := \prod_{j=1}^k \gamma_j = \eta^{|\pi|} \cdot \prod_{j\in [k]: \pi_j=j} \frac1{|H_j|} = \eta^{|\pi|} \Pr_r[\Fc],
  \end{equation}
  where we let $|\pi|:=|\{j\in [k]: \pi_j=j\}|$.  We note that
  $|\pi|\leq k$ for every $\pi$.  This implies that, for any stream
  $\sigma \in \Gamma$ and $\pi$ that $\mathbf{v}$ conforms to,
  \[ 
    \E[r]{\frac{1}{p} F} = \eta^{-|\pi|}.
  \]
  Since $\Pr\brac{\sigma \in \Gamma} = \eta^{|\pi|}$, this means
  \begin{align}\label{eq:Ebound}
    \E[\sigma,r]{ \frac{1}{p} F} = 1.
  \end{align}
  Define $\wh{d}(v_j)$ to be the degree of $v_j$ in
  $\mathbf{v} \cup \sigma[\eta k, 1]$, so $\wh{d}_j = \wh{d}(v_j)$
  when $\Fc$ occurs, 
  and define
  $q := \eta^k\prod_{i=1}^k \frac{1}{\wh{d}(v_{j-1})}$ and
  $q^* := \eta^k\prod_{i=1}^k \frac{1}{d(v_{j-1})}$.  Note that, for any $\vb$,
  $q^*$ depends only on the graph and is independent of the stream order and
  the randomness of the algorithm.  If $\mathbf{v}$ is collected, then it is
  output with probability
  $\alpha = \prod_{j\in [k]} \min\paren*{\frac{\eta }{\gamma_j \wh{d}_{j-1}},
  1}$, which satisfies
  \begin{align}\label{eq:chanceoutput}
    \frac{q^*}{p} \leq \alpha \leq \frac{q}{p}.
  \end{align}
  The upper bound follows from ignoring the $\min(\cdot, 1)$ in the expression,
  and the lower bound follows from the fact that
  $\gamma_j = \frac{\eta}{\abs{H_j}} \geq \frac{\eta}{d(v_{j-1})}$ and $\wh{d}(v_j)
  \le d(v_j)$ for all $j$, so
  \[
    \min\paren*{\frac{\eta }{\gamma_j \wh{d}(v_{j-1})}, 1} \geq
    \min\paren*{\frac{\eta }{\gamma_j d(v_{j-1})}, 1} = \frac{\eta}{\gamma_j
    d(v_{j-1})}.
  \]
  We would like to bound the probability $\mathbf{v}$ is output over
  the streams and internal randomness, which is 
  \begin{align}\label{eq:outputprob}
    \E[\sigma,r]{ \alpha F}.
  \end{align}

  \paragraph{Lower bound.} By~\eqref{eq:chanceoutput} and~\eqref{eq:Ebound},
  \begin{align}
    \E[\sigma,r]{ \alpha F} = \E[\sigma,r]{ \alpha p \frac{1}{p} F} \geq
    \E[\sigma,r]{ q^* \frac{1}{p} F} = q^*
  \end{align}
  as $q^*$ is independent of the stream order and the randomness of the
  algorithm.

\paragraph{Upper bound.}  We have that
\begin{align*}
  \E[\sigma,r]{ \alpha F} \leq \E[\sigma,r]{ \frac{q}{p} F} = \E[\sigma]{q
  \cdot 1_{\sigma \in \Gamma} \cdot \E[r]{\frac{1}{p} F \middle| \sigma}} =
  \E[\sigma]{q \cdot 1_{\sigma \in \Gamma} \cdot \eta^{-|\pi|}} = \E[\sigma]{q
  \middle| \sigma \in \Gamma}.
\end{align*}
So it suffices to show that $q$ is not much bigger than $q^*$ on
average over streams in $\Gamma$.

For any given $i \in \{0, 1, \dotsc, k-1\}$, consider the distribution of the
degree estimate $\wh{d}(v_i)$ over $\sigma \in \Gamma$.  If $v_i$ has $\ell$
distinct incident edges among the walk $\mathbf{v}$, then those edges will
always count in $\wh{d}(v_i)$.  Every other edge will count if and only if its
timestamp is at least $k\eta$, which is an independent binary random variable
with expectation $1-k\eta$.  Let $Y_i$ denote the number of edges that do
\emph{not} count, so $\wh{d}(v_i) = d(v_i) - Y_i$ and $Y_i \sim B(d(v_i) - \ell, k
\eta)$.  We now analyze \[
  \E[\sigma]{\frac{q}{q^*} \middle| \sigma \in \Gamma} = \E[\sigma]{\prod_{i=0}^{k-1} \frac{d(v_i)}{\wh{d}(v_i)}\middle|\sigma \in \Gamma} = \E[\sigma]{\prod_{i=0}^{k-1} \frac{d(v_i)}{d(v_i) - Y_i} \middle| \sigma \in \Gamma}.
\]
Define $\Ic_i$ to be the event that $Y_i \geq d(v_i) / 2$.  By the Bennett
inequality corollary Lemma~\ref{lm:concentration}, if the constant in~\ref{p2}
is chosen to be large enough,
\[
  \Pr[\Ic_i] \leq (3 k \eta)^{d(v_i) / 5}.
\]
On the other hand, when $\Ic_i$ does not occur and $Y_i < d(v_i) / 2$ we
have
\[
  \frac{d(v_i)}{d(v_i) - Y_i}  = \paren*{\sum_{j=0}^\infty (Y_i/d(v_i))^j} \leq   1 + 2 Y_i / d(v_i).
\]
Let $Z_i = 2 Y_i / d(v_i)$ when $\overline{\Ic_i}$ holds and $0$ otherwise.  Since
$\wh{d}(v_i) \geq 1$, we always have $\frac{d(v_i)}{\wh{d}(v_i)} \leq d(v_i)$.
Therefore
\[
  \frac{d(v_i)}{\wh{d}(v_i)} \leq 1 + Z_i + d(v_i) I_i.
\]
where $I_i$ is the indicator for $\Ic_i$. We can now apply Lemma~\ref{lm:prod}
(if the constant in~\ref{p2} is chosen to be large enough) to $(I_0, \dotsc, I_{k-1}),
(Z_0, \dotsc, Z_{k-1})$, and $\wt{\eta} = 3k\eta$,
to say that 
\[
  \E[\sigma]{\frac{q}{q^*} \middle| \sigma \in \Gamma} =
  \E[\sigma]{\prod_{i=0}^{k-1} \frac{d(v_i)}{\wh{d}(v_i)} \middle| \sigma \in \Gamma}
  \leq 1 + 3^k (3k\eta)^{1/5}.
\] 
Therefore the probability we output $\mathbf{v}$ satisfies
\[
  \E[\sigma,r]{ \alpha F} \leq \E[\sigma]{q \middle| \sigma \in \Gamma} = q^*
  \E[\sigma]{\frac{q}{q^*} \middle| \sigma \in \Gamma}  \leq (1 + 3^k (3k \eta)^{1/5})q^*.
\]
For $\eta < 2^{-Ck}$ for sufficiently large $C$, this is at most
$(1 + \eta^{1/7})q^*$ as desired.
\end{proof}

\section{Near-Independence of Constructed Walks}

In this section we prove that the walks constructed by \walksreset($k, \e, s$) (Algorithm~\ref{alg:walks-reset}) are close to independent in total variation distance. We introduce a useful modified version of \walksreset{} and \wft, as well as some notation, before stating the key formal lemmas.

\paragraph{Hybrid \walksreset{} and \wft{} algorithms.}
As a first step we show that the output distribution of \walksreset{} (which
relies on \wft) is close in total variation distance to the output distribution
of a modified version \walksresetH{} (which in turn relies on a modified \wftH)
-- see Algorithm~\ref{alg:walks-reset-mod} and
Algorithm~\ref{alg:walk-from-template-mod} below. 

The main reason behind the introduction of this algorithm is that it will be easier to prove near-independence of several walks generated on the same stream by \walksresetH{} than to perform the same analysis directly on \walksreset.
The proof of closeness of the output of \walksreset{} and \walksresetH{} in distribution proceeds by the `hybrid argument', and to facilitate this argument the procedure \wftH{} takes a parameter $j^*$, and
changes its behavior relative to \wft{} in the first $j^*$
iterations of the outer loop. We note that \wftH{} is
not an actual algorithm that can be run on a stream (it uses information that is
not available as the stream comes in, such as exact vertex degrees and exact
vertex neighborhoods), and is only a useful construct that
facilitates analysis. 
\begin{algorithm}
	\caption{\walksresetH: simulate $s$ samples of $k$-step walk started from uniformly random vertices, with reset (i.e., allowing walks to fail)}  
	\label{alg:walks-reset-mod}

\begin{algorithmic}[1]
\Procedure{\walksresetH($k, \e, s, {\color{red} j^*}$)}{} 
\State \Comment{$k$ is the desired walk length, $\e\in (0, 1)$ is a precision parameter}
\State $\eta \gets \e^{8}\cdot 2^{-Ck}$\Comment{For a large enough constant $C>0$} 
\State $v^i_0\gets $ independent uniform sample from $V$ for $i\in [s]$
\For{$i\in [s]$}
\State Choose $\pi^i\sim \mathcal{U}(\Pi_k)$ \Comment{$\Pi_k$ is the set of walk templates of length $k$, see Definition~\ref{def:template}}
\State $v^i\gets \textsc{WalkFromTemplate{\color{red}Hybrid}}(v^i_0, \pi^i, k, \eta, {\color{red} j^*})$  \Comment{Run in parallel on the same stream}
\EndFor
\State \Return $(v^i)_{i\in [s]}$
\EndProcedure
\end{algorithmic}
\end{algorithm}

\begin{algorithm}
\caption{\walksH: : simulate $\nwalks$ samples of $k$-step walk started from
uniformly random vertices}  \label{alg:simulate-walks-mod}

\begin{algorithmic}[1]
\Procedure{\walksH($k, \e, \nwalks, {\color{red} j^*}$)}{} 
\State \Comment{$k$ is the desired walk length, $\e\in (0, 1)$ is a precision parameter}
\State $\eta \gets \e^{8}\cdot 2^{-Ck}$\Comment{For a large enough constant $C>0$} 
\State $s\gets \nwalks\cdot 100\eta^{-k}\cdot k!$ \Comment{Increase number of
starting nodes to account for failed walks}
\State $v^i_0\gets $ independent uniform sample from $V$ for $i\in [s]$ 
\State $(v^i)_{i\in [s]}\gets \walksresetH(k, \e, s, {\color{red} j^*})$ 
\If{at least $\nwalks$ of the walks $(v^i)_{i\in [s]}$ succeeded}
\State \Return the first $\nwalks$ successful walks
\Else
\State \Return FAIL
\EndIf
\EndProcedure
\end{algorithmic}
\end{algorithm}

We first define
\begin{equation}\label{eq:l-def}
L=\{u\in V: d(u)\leq n^{1/4}\}
\end{equation}
to be the set of `low degree' vertices in the graph (the threshold of $n^{1/4}$ is somewhat arbitrary, but in general this threshold cannot be too high; in particular, it needs to be bounded away from $n$ by at least an $s^2$ factor, where $s$ is the number of samples used in \walksreset). For the first $j^*$ iterations of the main loop \wftH{} (Algorithm~\ref{alg:walk-from-template-mod})
\begin{itemize}
\item if the current vertex $u$ is a high degree vertex (belongs to $V\setminus
L$), samples its next edge uniformly at random from $\delta(u)$, without using
the stream (see line~\ref{line:hd-sample})
\if 0 \item simply outputs FAIL when when taking a back edge (i.e., whenever the
template $\pi$ satisfies $\pi_j<j$) if the current vertex is high degree
(belongs to $V\setminus L$)
\fi
\item when estimating vertex degrees, uses (appropriately scaled) exact degrees
for high degree vertices (see line~\ref{line:deg-est-mod-p-high}), for all $j\leq j^*$. 
\end{itemize}
The changes introduced in \wft{} (Algorithm~\ref{alg:walk-from-template}) with respect to \wftH{} are highlighted in {\color{red} red} in Algorithm~\ref{alg:walk-from-template-mod}.

There is one other difference in the implementations of
\wft{} and \wftH, which
does not affect the output but is convenient for the analysis: in
\wftH, the rejection sampling is done
progressively.  The final acceptance probability $\alpha_k$ for a walk
equals the acceptance probability $\alpha$ in
$\wft$, and the intermediate acceptance
probabilities $\alpha_j$ let us show that the intermediate states are
spread out comparably to a uniform random walk (proved in
Lemma~\ref{lem:walkfromtemplatehybrid0-marginal}).

This means that the chance a random walk is still accepted after $j$ steps is
\[
  \alpha_j := \prod_{g=1}^j \min(\frac{\eta}{\gamma_g \wh{d}_{g-1}^j, 1})
\]

The following lemma shows that the output distribution of $\walksreset(k, \e, s)$ is close to $\walksresetH(k, \e, s, k)$ in total variation
distance: 
\begin{lemma}\label{lm:tvd-high-deg}
  Assuming~\ref{p1}, \ref{p2} and $s\leq n^{1/30}$, for every input
  graph $G$, we have that the output of $\walksreset(k, \e, s)$ is
  $O(s\cdot n^{-1/10})$-close in total variation distance to the
  output of $\walksresetH(k, \e, s, k)$.
\end{lemma}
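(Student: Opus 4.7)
The plan is a hybrid argument over the parameter $j^*$. Define $D_{j^*} := \walksresetH(k, \e, s, j^*)$ for $j^* = 0, 1, \ldots, k$. Because $\wftH$ with parameter $j^* = 0$ modifies no iteration, the distribution $D_0$ is distributionally identical to the output of $\walksreset(k, \e, s)$, while $D_k$ is the target. By the triangle inequality it therefore suffices to prove $d_{TV}(D_{j^*-1}, D_{j^*}) = O(s \cdot n^{-1/10}/k)$ for every $j^* \in [k]$ and sum over the $k$ hybrids; the $k^{-1}$ factor is absorbed by the bound $k = O(\log n / \log\log n)$ from \ref{p1}.

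For each such transition, the two algorithms differ only at iteration $j^*$, and only at walks whose current vertex $u_{j^*-1}$ lies in $V \setminus L$. In $D_{j^*-1}$ such a walk samples its next edge from the window $\delta(u_{j^*-1}) \cap \sigma[\eta(j^*-1), \eta j^*)$ and uses the stream-based estimate $\wh{d}_{j^*-1}$ in the rejection probability $\alpha$; in $D_{j^*}$ it samples uniformly from $\delta(u_{j^*-1})$ directly and uses the exact degree. The key single-walk observation is that the timestamps of edges in $\delta(u)$ are i.i.d.\ uniform on $[0,1]$, so conditional on the window being nonempty the stream-based sample is exactly distributed as $\Uc(\delta(u))$. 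The empty-window event has probability at most $(1-\eta)^{d(u)} \leq e^{-\eta n^{1/4}}$, which is $n^{-\omega(1)}$ under \ref{p2}, and the stream-based degree estimate is within a factor $1 \pm n^{-\Omega(1)}$ of $d(u)$ with overwhelming probability by a Chernoff bound applied to the indicators $\1bb[t_e \geq \eta k]$ over $e \in \delta(u)$. Both single-walk effects therefore contribute only subpolynomially to the TV.

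The substance of the proof, and the main obstacle, is controlling correlations across the $s$ walks. In $D_{j^*-1}$, two walks that visit the same high-degree vertex $u$ sample from the \emph{same} stream window and are positively correlated (they can even collide when the window is small), whereas in $D_{j^*}$ the samples are independent. A direct computation of the two joint distributions (comparing $\Uc(B)^{\otimes 2}$ averaged over the random subset $B \subseteq \delta(u)$ against $\Uc(\delta(u))^{\otimes 2}$) shows that the TV contribution of a single colliding pair at $u$ is $O(1/(\eta \cdot d(u))) \leq O(n^{-1/4+1/100})$ under \ref{p2}. To bound the expected number of colliding pairs I would use the marginal estimate $\Pr[\text{walk}~i~\text{is at}~u~\text{at step}~j^*-1] = O(d(u)/n)$, which follows from Claim~\ref{cl:average} together with the marginal-distribution analysis of $\wftH$ referenced in Lemma~\ref{lem:walkfromtemplatehybrid0-marginal}. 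Combining these with $s \leq n^{1/30}$ and a careful handling of $\sum_{u \in V \setminus L} d(u)$ via the cutoff $d(u) > n^{1/4}$ yields the per-hybrid bound $O(s \cdot n^{-1/10}/k)$. Walks at distinct but edge-adjacent high-degree vertices give strictly smaller contributions and are handled by the same calculation; the per-hybrid change in the degree estimate produces a multiplicative $1 \pm n^{-\Omega(1)}$ perturbation of $\alpha$, which is absorbed into the same bound.
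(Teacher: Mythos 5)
Your proposal follows essentially the same skeleton as the paper's proof: a hybrid over $j^*$ reducing to a per-step total variation bound (the paper's Lemmas~\ref{lm:high-deg-sample-step-1} and~\ref{lm:tvd-sampled-step-1}), the observation that conditioned on a nonempty window the reservoir sample is uniform on $\delta(u)$ (the paper phrases this as ``a uniform sample from a binomial subsample is uniform''), concentration of $|H_j|$ and of the degree estimate for $d(u)>n^{1/4}$, and the marginal bound $\Pr[v_{j-1}^i=u]\leq 2d(u)/n$ from Claim~\ref{cl:average} via Lemma~\ref{lem:walkfromtemplatehybrid0-marginal}. Where you genuinely diverge is in the treatment of two walks sitting at the \emph{same} high-degree vertex: you make the collision on the diagonal explicit and charge $O(1/(\eta d(u)))$ per colliding pair, whereas the paper sidesteps this by restricting both hybrids to sample outside the set $E_j^*$ of already-exposed edges and then asserting the resulting joint distributions coincide. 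Your route is more explicit about exactly the correlation that is the crux of the lemma, at the cost of an extra second-moment computation; the paper's route is shorter but leans on $E_j^*$ (note that $E_j^*$ also removes edges \emph{between} the current vertices of different walks, which is what makes the windows at distinct high-degree vertices disjoint and hence independent --- you will need the analogous step, and it is also where the conditioning on the history $\F_j$ is discharged: the at most $sk$ edges of $\delta(u)$ whose timestamps are already revealed cost an additional $O(s^2kn^{-1/4})$ in TV, a term your sketch omits).

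One step of your sketch would fail as written. Summing the per-pair collision cost as $\sum_{u\in V\setminus L} (d(u)/n)^2\cdot\frac{1}{\eta d(u)}=\frac{1}{\eta n^2}\sum_u d(u)=\frac{2m}{\eta n^2}$ and then ``handling $\sum_{u\in V\setminus L}d(u)$ via the cutoff'' gives only $O(1/\eta)$ when $m=\Theta(n^2)$, which is useless. The fix is to spend only \emph{one} factor of $2d(u)/n$ and use that the other factor sums to at most $1$ over $u$ (it is a sub-probability distribution): $\sum_{u}\Pr[i\text{ at }u]\cdot\frac{2d(u)}{n}\cdot\frac{1}{\eta d(u)}\leq\frac{2}{\eta n}$, giving $O(s^2/(\eta n))$ over all pairs, which is comfortably within the budget for $s\leq n^{1/30}$ and $\eta\geq n^{-1/100}$. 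With that correction (and a justification that the joint visiting probability can be bounded by a product of marginals, which, as in Lemma~\ref{lem:w-well-separated}, is obtained by first passing to the $j^*=0$ hybrid), your argument goes through.
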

The proof of Lemma~\ref{lm:tvd-high-deg} is presented in Section~\ref{sec:tvd-high-deg} below.

\begin{algorithm}
	\caption{\wftH: generate a random walk from starting vertex $u_0$ that conforms with a template $\pi$, modified on high degree vertices}  
	\label{alg:walk-from-template-mod}

\begin{algorithmic}[1]
\Procedure{\wftH($v_0, \pi, k, \eta, j^*$)}{} \Comment{$k$ is the desired walk length, $u_0$ is the starting vertex}
\For{$j=1$ to $k$}
\State $u\gets v_{j-1}$
\If{$\pi_j=j$}
\State {\color{red} {\bf If~} $u \in V \setminus L$ and $j\leq j^*$ {~\bf then~}} 
\State {\color{red}{~~~~~~~}$H_j\gets \delta(u)$ }\label{line:hd-sample}
\State {\color{red} {\bf else~}} 
\State ~~~~~~~$H_j\gets $ edges in $\delta(u)\cap \sigma[\eta \cdot (j-1), \eta \cdot j)$\label{line:hj-mod}
\State {\bf If~}{$H_j=\emptyset$} {\bf~then~} \Return FAIL \label{ln:firstfail}
\State $f_j \gets \mathcal{U}(H_j)$ \Comment{Implemented using reservoir sampling}\label{line:sampling-mod}
\State {\color{red} {\bf If~} $u\in V\setminus L$ and $j\leq j^*$ {~\bf then~} $\gamma_j\gets \frac1{d(u)}$ {\bf ~else~}} $\gamma_j\gets \frac1{|H_j|}\cdot \eta$
\Else
\State {\bf If~}{$f_{\pi_j}\not \in \delta(u)$} {\bf~then~} \Return FAIL \label{ln:scndfail}
\State $f_j\gets f_{\pi_j}$
\State $\gamma_j\gets 1$
\EndIf
\State $v_j\gets $ endpoint of $f_j$ other than $u$ \label{line:v-def}
\For{$g\in [j]$} \Comment{Compute degree estimates for vertices on the walk}
\State $\wh{d}^j_{g-1}\gets$ degree of $v_{g-1}$ in $\{f_1, \ldots, f_j\}\cup \sigma[\eta \cdot k, 1]$ \label{line:deg-est-mod}
\State {\color{red} {\bf If~}$v_{g-1}\in V\setminus L$ and $g\leq j^*$ {\bf~then~}}
\State ~~~{\color{red} $\wh{d}^j_{g-1}\gets (1-k\eta) d(v_{g-1})$}\label{line:deg-est-mod-p-high}
\EndFor
\State $\alpha_j\gets \prod_{g=1}^j \min\{\frac{\eta}{\gamma_g\cdot \wh{d}^j_{g-1}}, 1\}$\Comment{Observe that $\alpha_j \leq \alpha_{j-1}$}
\State \Return FAIL with probability $1-\alpha_j / \alpha_{j-1}$\label{line:rejection-sampling-mod}
\State \Comment{Note that the chance of returning FAIL by step $j$ is $1-\alpha_j$.}
\EndFor
\State \Return $(v_0, \ldots, v_k)$
\EndProcedure
\end{algorithmic}
\end{algorithm}

\subsection{Notation}

A length-$\ell$ ``partial'' walk
$\mathbf{v} = (v_0, \dotsc, v_\ell) \in (V \cup \{\bot\})^{\ell+1}$ consists
of at least 1 vertex of a random walk, followed (possibly) by a series
of $\bot$.  A collection of $s$ partial walks is
$\vv = (\mathbf{v}^i)_{i \in [s]}$.

\begin{definition}[Sampled and collected vertices]\label{def:sampled}
  We say that a partial walk $\mathbf{v}$ is ``collected'' by an
  execution of $\wftH(k, \e, \eta, \ell)$,
  if for each $j$, $v_j$ is the value set by the algorithm or $\bot$
  if the algorithm returns FAIL before setting $v_j$.

  We say that a partial walk $\mathbf{v}$ is ``sampled'' by an
  execution of $\wftH$ in the same
  situation, except that $v_j = \bot$ if the algorithm rejects $v_j$
  by returning FAIL in round $j$.

  For an invocation of \walksresetH$(k, \e, s, \ell)$, we say that
  $\vv$ is collected or sampled if
  $(\mathbf{v}^1, \dotsc, \mathbf{v}^s)$ are sampled or generated by
  the $s$ executions of $\wftH$, respectively.
\end{definition}
In other words, if the walk fails at the end of round $j$, the walk that is
``collected'' still includes $v_j$, while the walk that is sampled does not (if
the walk succeeds, or if the algorithm returns FAIL in line~\ref{ln:firstfail}
or~\ref{ln:scndfail}, they are identical).

\begin{definition}\label{def:psi}
For a length-$\ell$ partial walk $\mathbf{v}$ we define its neighborhood
\begin{equation*}
\begin{split}
\Psi(\mathbf{v})=\bigcup_{0\leq j \leq \ell, v_j\in L} \delta(v_j).
\end{split}
\end{equation*}
For a collection of walks $\vv$ we define
$$
\Psi(\vv)=\bigcup_{\mathbf{v}} \Psi(\mathbf{v}).
$$
\end{definition}

\begin{definition}\label{def:f}
For $j\in {0, 1, \dotsc, k}$ let $\F_j$ denote the following random variables:
\begin{enumerate}
\item The partial walks $\vv_{\leq j-1} := (v^i_{\leq j-1})_{i\in [s]}$ sampled by \walksresetH$(k, \e, s, j)$.
\item Timestamps of edges in $\Psi(\vv_{\leq j-1})$, i.e. 
\begin{equation*}
\begin{split}
\left\{(f, t_f): f\in \Psi(\vv_{\leq j-1})\right\}.
\end{split}
\end{equation*}
\item The internal randomness of $\walksresetH(k, \e, s, j)$ (Algorithm~\ref{alg:walks-reset-mod}) used up to step $j-1$.
\end{enumerate}
We let $\F_0 = \emptyset$.
\end{definition}

\begin{definition}\label{def:ef}
  For a collection of walks $\vv$ and $j\in [k]$ we define
  $E^*_j:=\Psi(\vv_{\leq j -1}) \cup \{(v^i_{j-1}, v^{i'}_{j-1}) \mid
  i, i' \in [s]\}$ to contain all edges out of low degree vertices in
  the first $j$ steps of any walk, combined with all edges between
  vertices visited at position $j-1$ in the different walks.
\end{definition}

\subsection{Proof of Lemma~\ref{lm:tvd-high-deg}} \label{sec:tvd-high-deg}
\begin{lemma}\label{lem:walkfromtemplatehybrid0-marginal}
  Let $v \in V$ uniformly at random and $\pi \sim \mathcal{U}(\Pi_k)$,
  and consider the execution of
  $\wftH(v, \pi, k, \eta, 0)$.  We have for
  any $u \in V$ and $j \in [k]$ that
  \[
    \Pr[v_j = u \mid v_j \neq \bot] \leq 2d(u) / n.
  \]
\end{lemma}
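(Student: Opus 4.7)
The plan is to bound the numerator $\Pr[v_j = u,\, v_j \neq \bot]$ and the denominator $\Pr[v_j \neq \bot]$ separately and take their ratio. Since $j^{\ast} = 0$, the procedure $\wftH$ contains no hybrid modifications and differs from $\wft$ only in that rejection sampling is performed progressively rather than at the end; in particular the probability that the algorithm has not failed by the end of step $j$, conditioned on $v_0$ and the template $\pi$, is exactly $\E[\sigma,r]{\alpha_j F_j}$ in the notation of the proof of Lemma~\ref{lm:walk-distribution}, where $F_j$ is the indicator that the first $j$ iterations collect the partial walk $(v_0,\dotsc,v_j)$.

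First I would replay the argument of Lemma~\ref{lm:walk-distribution} truncated to the first $j$ steps. The key observation is that iterations $1,\dotsc,j$ of $\wftH(v_0,\pi,k,\eta,0)$ depend on $\pi$ only through $\pi_{\leq j}\in\Pi_j$, and when $\pi\sim\Uc(\Pi_k)$ the marginal $\pi_{\leq j}$ is uniform on $\Pi_j$. Fixing the unique template $\pi^{\ast}\in\Pi_j$ with which a target partial walk $(v_0,\dotsc,v_j)$ conforms, exactly the same two-sided calculation as in Lemma~\ref{lm:walk-distribution}---using the event $\Gamma_j$ that the freshly-sampled edges $\{f_g : \pi^{\ast}_g=g,\, g\leq j\}$ have timestamps in the prescribed intervals, the identity $\E[\sigma,r]{F_j/p_j}=1$, and the bounds $q_j^{\ast}/p_j \leq \alpha_j \leq q_j/p_j$---yields
\begin{equation*}
\E[\sigma,r]{\alpha_j F_j \,\big|\, \pi_{\leq j}=\pi^{\ast}} \in \bigl[1,\, 1+O(\eta^{1/7})\bigr]\cdot \eta^j\prod_{g=1}^{j}\frac{1}{d(v_{g-1})}.
\end{equation*}
Marginalizing over $\pi\sim\Uc(\Pi_k)$ picks up an extra factor of $1/j!$ since the fraction of templates in $\Pi_k$ extending $\pi^{\ast}$ is exactly $1/j!$.

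Summing this estimate over all length-$j$ partial walks from $v_0$ ending at $u$ yields $p^j_{v_0}(u)$, and averaging over $v_0\sim\Uc(V)$ together with Claim~\ref{cl:average} gives
\begin{equation*}
\Pr\bigl[v_j=u,\, v_j\neq\bot\bigr] \leq \bigl(1+O(\eta^{1/7})\bigr)\cdot \frac{\eta^j}{j!}\cdot \frac{d(u)}{n}.
\end{equation*}
Applying the matching lower bound instead and summing over all $u\in V$, using $\sum_u p^j_{v_0}(u)=1$ for every $v_0$, yields $\Pr[v_j\neq\bot] \geq \eta^j/j!$. Taking the ratio gives $\Pr[v_j=u \mid v_j\neq\bot] \leq (1+O(\eta^{1/7}))\, d(u)/n \leq 2d(u)/n$ by the choice of $\eta$ in~\ref{p2}.

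The main obstacle is verifying that the Bennett-inequality step in Lemma~\ref{lm:walk-distribution} transfers to this truncated setting: the degree estimator $\wh{d}^j_{g-1}$ uses $\{f_1,\dotsc,f_j\}\cup\sigma[\eta k,1]$ rather than the full $\{f_1,\dotsc,f_k\}\cup\sigma[\eta k,1]$, so potentially more edges could fail to be counted. The check, however, is clean: conditional on $\sigma\in\Gamma_j$, the timestamps of all edges outside $\{f_g : \pi^{\ast}_g=g,\, g\leq j\}$ remain independent and uniform on $[0,1]$, so every other edge incident to $v_{g-1}$ is still missed independently with probability $k\eta$, and Lemma~\ref{lm:concentration} together with Lemma~\ref{lm:prod} applies verbatim to give the $1+O(\eta^{1/7})$ slack above.
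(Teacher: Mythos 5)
Your proposal is correct and follows essentially the same route as the paper: the paper likewise observes that with $j^*=0$ the first $j$ iterations of \wftH{} match \wft{} run for $j$ steps (modulo the $\sigma[\eta k,1]$ vs.\ $\sigma[\eta j,1]$ window, which it also notes is harmless), invokes the two-sided bound of Corollary~\ref{cor:walk-distribution} to get $\Pr[v_j=u]\in[1,1+O(\eta^{1/7})]\cdot\frac{\eta^j}{j!}p_v^j(u)$, lower-bounds $\Pr[v_j\neq\bot]$ by $\eta^j/j!$, and finishes with Claim~\ref{cl:average}. Your write-up just makes explicit the truncation-to-$j$-steps and marginalization-over-templates steps that the paper treats as immediate.
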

\begin{proof}
  Let $p_v^j(u)$ be the probability that a $j$-step random walk from
  $v$ ends at $u$.  By Claim~\ref{cl:average},
  $\E[v \sim U(V)]{p_v^j(u)} \leq d(u)/n$.
  
  First, consider $j = k$.  The distribution of $v_k$ is the same as
  for the last vertex $u_k$ output by \wft$(v, \pi, k, \eta)$.
  Therefore, per Corollary~\ref{cor:walk-distribution},
  \[
    \Pr[v_k = u] \in [1, 1 + O(\eta^{1/7})] \cdot \lambda \cdot p_v^k(u)
  \]
  for $\lambda = \frac{1}{k!} \eta^k$ independent of $u$ and $v$. Now,
  conditioning on $v_k \not= \bot$ increases this probability by at most a
  factor of $1/\lambda$, as $v_k \not= \bot$ will always hold if the sampled
  template corresponds to a valid walk out of $v_0$ (which happens with
  probability at least $1/k!$) and every edge on that walk is in the right
  length-$\eta$ window (which happens with probability $\eta^k$), and so
  $\Pb{v_k \not= \bot} \ge \lambda$. Therefore,
  \[
    \Pr[v_k = u \mid v_k \neq \bot] \leq (1 + O(\eta^{1/7})) p_v^k(u) \leq 2 d(u)/n.
  \]
  For $j < k$, we note that $v_j$ is distributed essentially the same
  as the last vertex $u_j$ output by \wft$(v, \pi, j, \eta)$.
  (There is one difference: it would be identical if the
  the $\wh{d}$ computed in Line~\ref{line:deg-est} of \wft{} used
  $\sigma[\eta j, 1]$ rather than $\sigma[\eta k, 1]$.  But this
  difference has no bearing on the proof of
  Corollary~\ref{cor:walk-distribution}, which just uses that this
  interval includes $[\eta^{99/100}, 1)$ but not $[0, \eta j]$.)  So we still have that
  $v_j$ is distributed within a $1 +O(\eta^{1/7})$ factor of being
  proportional to a true $j$-step random walk, and the result holds.
\end{proof}

\begin{claim}\label{cl:posterior}
  For $j\in [k]$ the posterior distribution of $(t_f)_{f\in E}$ given
  $\mathcal{F}_j$ is a product distribution.  For every
  $f\in E\setminus E^*_j$ the distribution of $t_f $ is uniform on
  $[0, 1]$.
\end{claim}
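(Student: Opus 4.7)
The strategy is to show that the algorithm $\walksresetH(k, \e, s, j)$, when producing $\vv_{\leq j-1}$, only consults edge timestamps for edges that end up in $\Psi(\vv_{\leq j-1})$. Once this is established, the product-distribution claim follows from Bayes' rule applied to the i.i.d.\ uniform prior on timestamps.

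First, I would argue the following: for each fixed walk-collection $\vv^*$ and fixed realization $R^*$ of the algorithm's internal randomness (starting vertices, templates, reservoir-sampling bits, and rejection-sampling bits used through the end of round $j-1$), the event $\{\vv_{\leq j-1} = \vv^*,\ R = R^*\}$ is determined by $R^*$ together with the timestamps $(t_f)_{f \in \Psi(\vv^*)}$, with no dependence on any other $t_f$. The reason: in each round $g \leq j-1$ of each walk's call to $\wftH$ (recall $j^* = j$, so $g \leq j^*$), either the current vertex $v^i_{g-1}$ is high-degree---in which case $f_g$ is drawn from $\delta(v^i_{g-1})$ with no reference to the stream and the degree estimate is set to $(1-k\eta)d(v^i_{g-1})$---or $v^i_{g-1}$ is low-degree, in which case every edge whose timestamp is consulted (for reservoir sampling within the window $[\eta(g-1),\eta g)$, or for the degree estimate via $\sigma[\eta k,1]$) lies in $\delta(v^i_{g-1}) \subseteq \Psi(\vv^*)$.

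Since the prior on $(t_f)_{f \in E}$ is i.i.d.\ uniform on $[0,1]$ and independent of $R$, Bayes' rule then yields that, conditional on $\{\vv_{\leq j-1} = \vv^*,\ R = R^*,\ (t_f)_{f \in \Psi(\vv^*)} = (t^*_f)_{f \in \Psi(\vv^*)}\}$, the joint distribution of the remaining timestamps $(t_f)_{f \in E \setminus \Psi(\vv^*)}$ is unchanged, i.e.\ still i.i.d.\ uniform. Combined with the point masses at $t^*_f$ for $f \in \Psi(\vv^*)$, this exhibits the posterior on $(t_f)_{f \in E}$ as a product distribution. Since $\Psi(\vv_{\leq j-1}) \subseteq E^*_j$ by Definition~\ref{def:ef}, we have $E \setminus E^*_j \subseteq E \setminus \Psi(\vv_{\leq j-1})$, so every $t_f$ with $f \notin E^*_j$ is uniform on $[0,1]$, giving both conclusions of the claim. (Incidentally, the edges in $E^*_j \setminus \Psi(\vv_{\leq j-1})$---those joining two high-degree endpoints---are also conditionally uniform by the same argument, but the statement is intentionally conservative on this set for use in later steps.)

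The main technical subtlety is making the first step precise: $\Psi(\vv^*)$ is a function of $\vv^*$, so the set of ``relevant'' timestamps depends on the very walks being conditioned on. I would handle this by enumerating over all possible $(\vv^*, R^*)$ and observing that, for each such pair, the indicator of ``the algorithm produces $(\vv^*, R^*)$'' is a measurable function of $R^*$ and $(t_f)_{f \in \Psi(\vv^*)}$ alone; the Bayes computation can then be carried out on this discrete decomposition without circularity.
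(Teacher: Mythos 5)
Your proof is correct and takes essentially the same approach as the paper, whose proof is a one-sentence version of the same observation: the only timestamps the algorithm consults through step $j-1$ are those of edges in $\Psi(\vv_{\leq j-1})$, all of which are conditioned on, so the i.i.d.\ uniform prior factors into point masses on $\Psi(\vv_{\leq j-1})$ and uniforms elsewhere. Your extra care about the circularity (decomposing over fixed $(\vv^*, R^*)$ pairs) is a legitimate tightening of the paper's terser argument, not a different route.
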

\begin{proof}
  The only edge timestamps that influence $\Fc_j$ are those of edges in
  $\Psi(\vv_{\leq j-1})$, and so as we are conditioning on the value of all of
  those, and the prior distribution of  $(t_f)_{f\in E}$ is a product
  distribution of uniform distributions, the result follows.
\end{proof}

\newcommand{\bv}{{\bf v}}


\begin{lemma}\label{lm:high-degree-sampling}
  For every $j \in [k]$, consider the execution of
  $\walksresetH(k, \eps, s, j-1)$.  Let $H^i_j$ be the value of $H_j$
  in invocation $i$ of $\wftH$.  For every
  $\F_j$, one has with probability at least $1-n^{-2}$ over timestamps
  of edges in $E\setminus E^*_j$ that, for every $i\in [s]$ such that
  $\pi^i_j = j$ and $v^i_{j-1}\in V\setminus L$, we have that:
  \[
    (1-n^{-1/9}) \eta \cdot d(v^i_{j-1})\leq |H^i_j|\leq (1+n^{-1/9}) \eta \cdot d(v^i_{j-1}),
  \]
  and
  \[
    (1-n^{-1/9}) \eta \cdot d(v^i_{j-1})\leq |H^i_j \setminus E_j^*|\leq (1+n^{-1/9}) \eta \cdot d(v^i_{j-1}).
  \]
  Finally, for every such $i \in [s]$ with
  $v^i_{j-1} \in V \setminus L$ and every $g \in [k]$, the degree estimate
  $\wh{d}^g_{j-1}$ in invocation $i$ satisfies
  \[
    (1-n^{-1/9}) (1-k\eta) \cdot d(v^i_{j-1})\leq \wh{d}(v^j_{j-1}) \leq (1+n^{-1/9}) (1-k\eta) \cdot d(v^i_{j-1}).
  \]
\end{lemma}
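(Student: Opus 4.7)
The plan is to apply Claim~\ref{cl:posterior} and reduce each statement to a multiplicative Chernoff bound on a binomial random variable, then take a union bound over the $s\leq n^{1/30}$ walks. Fix $\F_j$; by Claim~\ref{cl:posterior}, in the posterior the timestamps $(t_f)_{f\in E\setminus E^*_j}$ are independent uniform on $[0,1]$, while timestamps of edges in $E^*_j$ are determined.

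I would first show that $|\delta(v^i_{j-1})\cap E^*_j|$ is negligible compared to $\eta\cdot d(v^i_{j-1})$. The set $\Psi(\vv_{\leq j-1})$ consists of edges incident to the at most $sk$ low-degree vertex-occurrences across all walks in the first $j-1$ steps, contributing at most $sk$ edges touching $v^i_{j-1}$; the collision set $\{(v^i_{j-1}, v^{i'}_{j-1}) : i'\in[s]\}$ contributes at most $s$ more. Using $s\leq n^{1/30}$, $k \leq O(\log n/\log\log n)$, $d(v^i_{j-1})\geq n^{1/4}$, and $\eta\geq n^{-1/100}$, we get $|\delta(v^i_{j-1})\cap E^*_j| \leq 2sk = n^{-\Omega(1)}\cdot \eta\cdot d(v^i_{j-1})$, well inside the $n^{-1/9}$ slack.

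Next, let $d' := d(v^i_{j-1}) - |\delta(v^i_{j-1})\cap E^*_j|$. By Claim~\ref{cl:posterior}, $|H^i_j\setminus E^*_j|$ is distributed as $\bi(d', \eta)$ with mean $\eta d' = (1-o(1))\eta d(v^i_{j-1})$. A standard multiplicative Chernoff bound gives that $|H^i_j\setminus E^*_j|$ lies within $(1\pm n^{-1/9}/2)\cdot \eta d'$ except with probability $\exp(-\Omega(n^{-2/9}\eta d'))$. Since $\eta d' \geq \tfrac12 n^{1/4 - 1/100}$, this exponent is $n^{\Omega(1)}$ and the failure probability is $n^{-\omega(1)}$. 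Combining with the bound on $|\delta(v^i_{j-1})\cap E^*_j|$ yields the claimed bounds on both $|H^i_j|$ and $|H^i_j\setminus E^*_j|$. For the degree estimate $\wh{d}^g_{j-1}$ (which is defined for $g \geq j$), I would write it as $|\delta(v^i_{j-1})\cap\{f_1,\dotsc,f_g\}| + |\delta(v^i_{j-1})\cap\sigma[\eta k, 1]|$: the walk-edge count is at most $k$, the stream count restricted to $E\setminus E^*_j$ is distributed as $\bi(d', 1-\eta k)$, and the same Chernoff bound plus the same correction for $E^*_j$-edges give the stated $(1\pm n^{-1/9})(1-k\eta)\cdot d(v^i_{j-1})$ bound.

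A union bound over the $s\leq n^{1/30}$ walks and (for the degree estimate) the $k$ values of $g$ gives total failure probability at most $sk\cdot n^{-\omega(1)}\ll n^{-2}$. The main technical subtlety is choosing parameters consistently: we need $n^{-2/9}\eta d(v^i_{j-1})$ to be $n^{\Omega(1)}$ so the Chernoff bound yields $n^{-1/9}$ precision, while simultaneously $sk/d(v^i_{j-1})\ll n^{-1/9}$ so that the excluded $E^*_j$-edges cannot break the multiplicative bound. With $s\leq n^{1/30}$, $k = O(\log n/\log\log n)$, $\eta \geq n^{-1/100}$, and the $n^{1/4}$ threshold defining $L$, both conditions hold with room to spare.
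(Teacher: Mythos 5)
Your proposal is correct and follows essentially the same route as the paper's proof: condition on $\F_j$ so that timestamps outside $E^*_j$ are independent uniform (Claim~\ref{cl:posterior}), observe that $|H^i_j\setminus E^*_j|$ and the stream part of the degree estimate are binomial, apply a Chernoff bound using $d(v^i_{j-1})\geq n^{1/4}$ and $\eta\geq n^{-1/100}$, absorb the at most $O(sk)$ edges of $\delta(v^i_{j-1})\cap E^*_j$ into the $n^{-1/9}$ slack, and union bound over $i\in[s]$. The paper states its Chernoff bound additively (deviation $\sqrt{2\eta|S|\log 2n^3}$) rather than multiplicatively, but this is an immaterial difference.
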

\begin{proof}
  For each such $u = v^i_{j-1}$, the edges in
  $S := \delta(u) \setminus E_j^*$ have timestamps that are
  independent and uniform in $[0, 1]$.

  Therefore $|H^i_j \cap S|$ is distributed as the binomial variable
  $B(|S|, \eta)$, so by the Chernoff bound we have with
  $1 - \frac{1}{n^3}$ probability that
  \[
    \big| |H^i_j \cap S| - \eta |S| \big| \leq \sqrt{2\eta |S| \log 2n^3}.
  \]
  Suppose this happens.  Every remaining edge, in
  $E^*_j \cap \delta(u)$, is between $u$ and another vertex collected
  in some walk in $v$; hence there are at most $sk$ such edges.
  
  Then by the triangle inequality:
  \begin{align*}
    \big| |H^i_j| - \eta d(u) \big| &\leq \big| |H^i_j| - |H^i_j \cap S| \big| + \big| |H^i_j \cap S| - \eta |S| \big| + \big| \eta d(u) - \eta |S| \big|\\
                                    &\leq sk + \sqrt{2\eta d(u) \log 2n^3} + \eta sk\\
                                    &\leq \eta d(u) \cdot \left(\frac{2sk}{\eta d(u)} + \sqrt{\frac{2\log 2n^3}{\eta d(u)}}\right)\\
    &\leq n^{-1/9} \eta d(u).
  \end{align*}
  where the last step uses that $d(u) \geq n^{1/4}$,
  $\eta \geq n^{-1/100}$, $s \leq n^{1/30}$, $k \leq \log n$, and $n$
  is sufficiently large.  The bound on
  $|H^i_j \setminus E^*_j| = |H^i_j \cap S|$ is the same, omitting the
  first of the three terms in the triangle inequality.  We then union
  bound over $i \in [s]$.

  We bound $\wh{d}^g_{j-1}$ similarly: let $a$ be the number of edges
  in $S$ that lie in $\sigma[\eta k, 1]$, which is
  $B(|S|, (1-k \eta))$, and so with $1-\frac{1}{n^3}$ probability
  \[
    \big| a - (1 - k \eta)|S| \big| \leq \sqrt{2(1-k\eta) |S| \log
      2n^3}.
  \]
  Then since $\wh{d}^g_{j-1}$ only differs from $a$ on
  $E^*_j \cap \delta(u)$,
  \[
    \big|\wh{d}^g_{j-1} - (1 - k \eta)d(u)\big| \leq \sqrt{2(1-k\eta) d(u) \log
      2n^3} + sk + (1 - k \eta) sk \leq n^{-1/9}(1 - k \eta)d(u)
  \]
  and we again union bound over $i$.
\end{proof}

We first analyze the vertices \emph{collected} in each step, ignoring
the rejection probability.  We then include the rejection probability,
to analyze the vertices \emph{sampled} in each step.

\begin{lemma}\label{lm:high-deg-sample-step-1}
For $j\in [k]$, let $(u_j^i)_{i\in [s]}$ denote the vertices collected by
$\walksresetH(k, \e, s, j-1)$ at step $j$, and $(\wt{u}_j^i)_{i\in [s]}$ denote
the vertices collected by $\walksresetH(k, \e, s, j)$ at step $j$. For
every $\Fc_j$ the total variation distance between $(u_j^i)_{i\in [s]}$ and
$(\wt{u}_j^i)_{i\in [s]}$ conditioned on $\Fc_j$ is bounded by $O(s n^{-1/9})$.

\end{lemma}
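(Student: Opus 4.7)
The executions $\walksresetH(k, \eps, s, j-1)$ and $\walksresetH(k, \eps, s, j)$ coincide through step $j-1$ and at step $j$ differ only for indices in
\[
I^* := \set{i \in [s] : v^i_{j-1} \in V \setminus L \text{ and } \pi^i_j = j},
\]
since for $i \notin I^*$ the red-highlighted branches of Algorithm~\ref{alg:walk-from-template-mod} are not executed at round $j$. Coupling the two executions to share internal randomness outside $I^*$ gives $u^i_j = \wt{u}^i_j$ there, so it suffices to bound the conditional TV distance of $(u^i_j)_{i \in I^*}$ and $(\wt{u}^i_j)_{i \in I^*}$ given $\Fc_j$.

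By Claim~\ref{cl:posterior} the timestamps of edges in $E \setminus E^*_j$ are iid uniform given $\Fc_j$, so Lemma~\ref{lm:high-degree-sampling} implies that the event $\mathcal{G}$ on which $|H^i_j| \in (1 \pm n^{-1/9})\eta\, d(v^i_{j-1})$ for every $i \in I^*$ holds with probability $\geq 1 - n^{-2}$; its complement contributes at most $n^{-2}$ to the TV distance. Next I would partition $I^* = \bigsqcup_g I^*_g$ by current vertex, writing $u_g = v^i_{j-1}$ for $i \in I^*_g$. Since $E^*_j \supseteq \set{(v^i_{j-1}, v^{i'}_{j-1}) : i, i' \in [s]}$, for $u_g \neq u_{g'}$ the sets $\delta(u_g) \setminus E^*_j$ and $\delta(u_{g'}) \setminus E^*_j$ are disjoint, so the random sets $H^{(g)} := H^i_j$ (any $i \in I^*_g$) across different groups depend on disjoint blocks of iid-uniform timestamps and are independent. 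The joint TV distance thus factorizes across groups, reducing the task to bounding within-group TV distances.

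Within a single group with vertex $u$, degree $d$, and $s_g$ invocations, algorithm $B$ draws $s_g$ iid uniform endpoints of edges in $\delta(u)$, while algorithm $A$ (conditional on $H^{(g)}$) draws $s_g$ iid uniform endpoints of edges in $H^{(g)}$. For any tuple $(w_1, \dotsc, w_{s_g})$ whose edges $(u, w_l)$ are pairwise distinct and lie in $\delta(u) \setminus E^*_j$, the inclusion indicators $\set{1_{(u, w_l) \in H^{(g)}}}_l$ are iid $\text{Ber}(\eta)$ given $\Fc_j$, so
\[
    \Pr\nolimits_A\brac*{(u^i_j)_i = (w_l)_l \mid \Fc_j} = \eta^{s_g} \cdot \E{|H^{(g)}|^{-s_g} \bigm| \set{(u, w_l)}_l \subseteq H^{(g)}}.
\]
Under $\mathcal{G}$, the concentration $|H^{(g)}| = (1 \pm n^{-1/9})\eta d$ makes this equal to $(1 \pm O(s_g n^{-1/9}))\, d^{-s_g}$, matching $\Pr_B = d^{-s_g}$. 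Outcomes with repeated vertices in the tuple or some edge in $E^*_j$ carry total probability $O(s_g^2/(\eta d) + s_g\cdot sk/d)$ in each algorithm, which is $o(s_g n^{-1/9})$ in the regime $d \geq n^{1/4}$, $\eta \geq n^{-1/100}$, $s \leq n^{1/30}$, $k \leq \log n$. Summing $O(s_g n^{-1/9})$ over groups yields the claimed $O(s n^{-1/9})$ bound.

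\textbf{Main obstacle.} The joint within-group analysis. Although marginally each $u^i_j$ is $O(n^{-1/9})$-close to $\wt{u}^i_j$, algorithm $A$'s samples within a group share the random set $H^{(g)}$ and are correlated, while $B$'s samples are independent. The argument relies on the inclusions of distinct $E^*_j$-free edges being independent Bernoullis and on $|H^{(g)}|$ remaining concentrated after conditioning on these inclusions, so the joint likelihood ratio between $A$ and $B$ stays uniformly $(1 + O(s_g n^{-1/9}))$ on the dominant ``distinct-edge'' outcomes; bounding the remaining mass requires the extra parameter slack of the regime.
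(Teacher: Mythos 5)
Your proposal is correct in outline and reaches the right bound, but the decisive step is genuinely different from the paper's. You share the same scaffolding — reducing to the high-degree, $\pi^i_j=j$ indices, invoking Claim~\ref{cl:posterior} so that the free timestamps are product-uniform, using Lemma~\ref{lm:high-degree-sampling} for concentration of $|H^i_j|$, and quarantining the $O(sk)$ edges of $E^*_j$. Where you diverge is the comparison itself: the paper never computes a likelihood ratio. It instead makes two small TV moves (replace $\mathcal{U}(\delta(u))$ by $\mathcal{U}(\delta(u)\setminus E^*_j)$ in one algorithm, costing $O(s^2k\,n^{-1/4})$, and replace $\mathcal{U}(H^i_j)$ by $\mathcal{U}(H^i_j\setminus E^*_j)$ in the other, costing $O(sn^{-1/9})$ via Lemma~\ref{lm:high-degree-sampling}) and then observes that the two modified samplers are \emph{exactly} equidistributed, because a uniform draw from a binomial subsample of $S$ is, by symmetry, uniform on $S$ conditioned on the subsample being nonempty. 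Your route buys something real: the paper's final step asserts independence across $i\in\I$, which is only transparent when no two walks sit at the same high-degree vertex; your partition into groups sharing a current vertex and your joint computation of $\Pr_A[(w_l)_l]=\eta^{s_g}\,\E\bigl[|H^{(g)}|^{-s_g}\mid \text{all }(u,w_l)\in H^{(g)}\bigr]$ confronts that correlation head-on. The price is the conditional moment bound you correctly flag as the main obstacle: for $s_g$ as large as $s$, the event $|H^{(g)}|\notin(1\pm n^{-1/9})\eta d$ is amplified by the $-s_g$ power, so the $1-n^{-2}$ guarantee of Lemma~\ref{lm:high-degree-sampling} is not enough per tuple; you need the Chernoff tail $\exp(-\Omega(n^{-2/9}\eta d))$, which survives conditioning on the $s_g\ll n^{-1/9}\eta d$ forced inclusions, so that the bad-event contribution summed over tuples is $(\eta d)^{-s_g}d^{s_g}\eta^{s_g}e^{-\Omega(n^{0.01})}=o(n^{-1/9})$. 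Two smaller patches: the groups are not exactly independent (edges between two high-degree current vertices lie in $E^*_j$ but have free timestamps shared by both $\delta(u_g)$ and $\delta(u_{g'})$), so you should first condition on those $O(s^2)$ timestamps before factorizing; and your bound on the mass of tuples hitting $E^*_j$ under algorithm $A$ should carry a $1/(\eta d)$ rather than $1/d$ denominator, which is still comfortably $o(n^{-1/9})$ in the stated parameter regime.
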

\begin{proof}
Define
$$
\I=\{i\in [s]: v^i_{j-1}\in V\setminus L\},
$$
where for every $i\in [s]$, $v^i_{j-1}\in V\cup \{\bot\}$ is the
$(j-1)^{\text{th}}$ vertex collected on the $i^{\text{th}}$ walk, where we let
$v^i_{j-1}=\bot$ if the $i^{\text{th}}$ walk terminated before the $(j-1)^{\text{th}}$
step. Note that $\I\subseteq [s]$, and is quite possibly a proper
subset: $v^i_j$ could be a low degree vertex, and we may also have
$v^i_j=\bot$ for some $i\in [s]$. Also note that
$(v^i_{<j})_{i\in [s]}$, the set of the first $j-1$ vertices traversed
by the constructed walks, is a function of $\mathcal{F}_j$ (see
Definition~\ref{def:f}), and in particular $\I$ also is.

We now modify the sampling of edges incident on high degree vertices
in the invocation of $\walksresetH(k, \e, s, j)$ to avoid $E^*_j$, and
bound the corresponding loss in total variation distance.  Observe
that for every choice of $\F_j$ for every $i\in \I$,
$\abs{\delta(v^i_{j-1})\cap E^*_j} \leq sk$, so the total variation
distance between the uniform distribution over $\delta(v^i_{j-1})$ and
the uniform distribution over $\delta(v^i_{j-1})\setminus E^*_j$ is
bounded by $sk/n^{1/4}$.  Define $(x_{j}^i)_{i\in [s]}$ to match
$\wt{u}_{j}^i$ for $i \notin \I$, and for $i \in \I$ to be sampled from
$\mathcal{U}(\delta(v^i_{j-1})\setminus E^*_j$ as opposed to
$\mathcal{U}(\delta(v^i_{j-1}))$ in line~\ref{line:sampling-mod} in
the $j^{\text{th}}$ step of $\walksresetH(k, \e, s, j)$, so
$TV((\wt{u}_j^i)_{i\in [s]}, (x_{j-1}^i)_{i\in [s]}) \leq O(s^2k \cdot
n^{-1/4})$.

We now perform a similar modification to the edges sampled by high
degree vertices at the $j^{\text{th}}$ step in the invocation of
$\walksresetH(k, \e, s, j-1)$.  Let $\wt{H}^i_j=H^i_j\setminus E^*_j$.

By Lemma~\ref{lm:high-degree-sampling}, with $1 - n^{-2}$ probability
we have both
\begin{equation}\label{eq:834g8g34t8gDF}
(1-n^{-1/9}) \eta \cdot d(v^i_{j-1})\leq |H^i_j|\leq (1+n^{-1/9}) \eta \cdot d(v^i_{j-1}).
\end{equation}
and
\begin{equation}\label{eq:834g8g34t8gDFa}
(1-n^{-1/9}) \eta \cdot d(v^i_{j-1})\leq |\wt{H}^i_j|\leq (1+n^{-1/9}) \eta \cdot d(v^i_{j-1}).
\end{equation}
for every $i \in \I$.

Conditioned on this high probability event $E_H$, we have
using~\eqref{eq:834g8g34t8gDF} and~\eqref{eq:834g8g34t8gDFa} that for every choice of $\F_j$ for
every $i\in \I$ the total variation distance between the uniform
distribution over $H^i_j$ and the uniform distribution over
$\wt{H}^i_j$ is bounded by $O(n^{-1/9})$.  Therefore we can define
$(y_{j}^i)_{i\in [s]}$ to match $u_j^i$ for $i \notin \I$, and for
$i \in \I$ to sample from $\mathcal{U}(\wt{H}^i_j)$ as opposed to
$\mathcal{U}(H^i_j)$ in line~\ref{line:sampling-mod} in the $j^{\text{th}}$
step of $\walksresetH(k, \e, s, j-1)$.  This satisfies
$d_{TV}((u_j^i)_{i\in [s]}, (y_{j}^i)_{i\in [s]}) \leq O( sn^{-1/9})$.

Now, $(x_{j}^i)_{i\in [s]}$ and $(y_{j}^i)_{i\in [s]}$ are
identically distributed conditioned on $\F_j$ and $E_H$.  To see
this, note that $\walksresetH(k, \e, s, j-1)$ and
$\walksresetH(k, \e, s, j)$ behave identically for
$i \notin \I$.  For $i \in \I$,
$x_{j}^i \sim \mathcal{U}(\delta(v_{j-1}^i)\setminus E_j^*)$ and
$y_{j}^i \sim \mathcal{U}(\wt{H}^i_j)$, both independently of the
algorithms' behavior on $i' \neq i$.  Since $\wt{H}^i_j$ is a random
binomial sample of $\delta(v_{j-1}^i)\setminus E_j^*$, $y_{j}^i$ is
also uniform over $\delta(v_{j-1}^i)\setminus E_j^*$ conditioned on
$|\wt{H}^i_j|$, as long as $\wt{H}^i_j \neq \emptyset$; thus it is
conditioned on $E_H$.

As a result, the outputs of the unmodified calls to
$\walksresetH(k, \e, s, j-1)$ and
$\walksresetH(k, \e, s, j)$ are $O(s^2k n^{-1/4} + n^{-2} + s n^{-1/9}) = O(s n^{-1/9})$
close in total variation distance.
\end{proof}

\begin{lemma}\label{lm:tvd-sampled-step-1}
  For $j\in [k]$, if $(v_j^i)_{i\in [s]}$ denote the vertices
  sampled by \walksresetH$(k, \e, s, j-1)$ at step $j$, and
  $(\wt{v}_j^i)_{i\in [s]}$ denote the vertices sampled by
  \walksresetH$(k, \e, s, j)$ at step $j$, then for every $\F_j$ the
  total variation distance between $(v_j^i)_{i\in [s]}$ and
  $(\wt{v}_j^i)_{i\in [s]}$ is bounded by $O(s n^{-1/9})$.
\end{lemma}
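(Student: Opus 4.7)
The plan is to bootstrap Lemma~\ref{lm:high-deg-sample-step-1}, which controls the TV distance between the \emph{collected} vertices at round $j$, and then account separately for the single round-$j$ rejection performed inside \wftH. Using the coupling constructed in the proof of Lemma~\ref{lm:high-deg-sample-step-1}, I first couple the two executions so that their collected sequences $(u_j^i)_{i\in [s]}$ and $(\wt u_j^i)_{i\in [s]}$ agree with probability $1-O(sn^{-1/9})$ conditional on $\F_j$; since the graph is simple, this also pins down the chosen edges $f_j^i$. It therefore suffices to bound, conditional on $\F_j$ and on the common collected walks, the TV distance between the independent Bernoulli decisions ``FAIL with probability $1-\alpha_j/\alpha_{j-1}$'' taken at line~\ref{line:rejection-sampling-mod} in each of the $s$ parallel invocations.

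The key observation is that these two acceptance probabilities can differ only through the round-$j$ factor associated with the current vertex $v_{j-1}^i$, and only when $v_{j-1}^i \in V\setminus L$. Indeed, for every $g<j$ the factor $\min(\eta/(\gamma_g \wh d^j_{g-1}),1)$ agrees in both algorithms: since $g\le j-1\le j^*$ in both the $j^*=j-1$ and $j^*=j$ runs, the substitution $\wh d^j_{g-1}=(1-k\eta)d(v_{g-1})$ applies in both whenever $v_{g-1}$ is high-degree, and for low-degree $v_{g-1}$ the stream-based value of $\wh d^j_{g-1}$ is a deterministic function of $f_j^i$, which is common to the coupling; similarly, $\gamma_g$ for $g<j$ is already encoded in $\F_j$. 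For low-degree $v_{j-1}^i$ the round-$j$ factor is also computed identically in both algorithms, so the only possible discrepancy arises at walks $i\in [s]$ with $v_{j-1}^i\in V\setminus L$.

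For such high-degree $i$, I invoke Lemma~\ref{lm:high-degree-sampling} on its probability-$(1-n^{-2})$ event that $|H_j^i|\in(1\pm n^{-1/9})\eta d(v_{j-1}^i)$ and $\wh d^j_{j-1}\in(1\pm n^{-1/9})(1-k\eta)d(v_{j-1}^i)$, which gives
\[
\gamma_j \wh d^j_{j-1} \;=\; \frac{\eta \wh d^j_{j-1}}{|H_j^i|} \;\in\; (1\pm O(n^{-1/9}))(1-k\eta)
\]
in the $j^*=j-1$ run, while this product equals $(1-k\eta)$ exactly in the $j^*=j$ run. Consequently $\min(\eta/(\gamma_j\wh d^j_{j-1}),1)$ agrees up to a multiplicative $1\pm O(n^{-1/9})$ factor, contributing at most $O(n^{-1/9})$ per walk to the TV distance between the rejection decisions. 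A union bound over $i\in [s]$, combined with the coupling error from Lemma~\ref{lm:high-deg-sample-step-1} and the $n^{-2}$-probability failure event of Lemma~\ref{lm:high-degree-sampling}, yields the claimed $O(sn^{-1/9})$ bound.

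The main obstacle is ensuring that these concentration estimates remain valid after conditioning on $\F_j$ and on the coupled collected walks, since both depend on the stream. This is handled by Claim~\ref{cl:posterior}: the timestamps of edges outside $E_j^*$ remain independent and uniform conditional on $\F_j$, so Lemma~\ref{lm:high-degree-sampling} applies in the coupled setting without any additional loss in parameters.
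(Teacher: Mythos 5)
Your proposal is correct and follows essentially the same route as the paper: invoke Lemma~\ref{lm:high-deg-sample-step-1} to handle the collected vertices, observe that the two executions' acceptance probabilities $\alpha_j/\alpha_{j-1}$ differ only in the round-$j$ factor at high-degree current vertices, and use the concentration bounds of Lemma~\ref{lm:high-degree-sampling} to show that factor agrees up to $1\pm O(n^{-1/9})$, then union bound over the $s$ walks. Your explicit justification that the $g<j$ factors coincide under the coupling is a slightly more careful rendering of what the paper leaves implicit, but the argument is the same.
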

\begin{proof}
  By Lemma~\ref{lm:high-deg-sample-step-1}, the distribution of
  vertices \emph{collected} in step $j$ is $O(s n^{-1/9})$ close in
  total variation distance.  Therefore it suffices to show that the
  rejection sampling is similarly close.
  
  We condition on $\F_j$.  Let $(u_j^i)_{i\in [s]}$ denote the
  vertices collected by \walksresetH$(k, \e, s, j-1)$ at step $j$, and
  $(\wt{u}_j^i)_{i\in [s]}$ denote the vertices collected by
  \walksresetH$(k, \e, s, j)$ at step $j$. By
  Lemma~\ref{lm:high-deg-sample-step-1} the total variation distance
  between $(u_j^i)_{i\in [s]}$ and $(\wt{u}_j^i)_{i\in [s]}$ is
  bounded by $O(s n^{-1/9})$. In what follows we analyze the rejection
  sampling step in line~\ref{line:rejection-sampling-mod} of
  \wftH. We show that the fact that the
  degrees of vertices in $V\setminus L$ are estimated using their
  actual degrees (in line~\ref{line:deg-est-mod-p-high} of
  \wftH) as opposed to using the stream (in
  line~\ref{line:deg-est} of \wft) leads to only
  $O(s\cdot n^{-1/9})$ contribution to total variation distance.

  We note that degree estimates computed for vertices in $L$ (i.e.,
  low degree vertices) are the same in both invocations of
  \walksresetH, and consider vertices in $V\setminus L$, i.e. high
  degree vertices. Let $u=u^i_{j-1}$ for some $i\in [s]$, and suppose
  that $u\in V\setminus L$.

  Define
  \[
    \lambda = \prod_{g=1}^{j-1} \frac{\min\paren*{\frac{\eta}{\gamma_g \wh{d}^j_{g-1}}, 1}}{\min\paren*{\frac{\eta}{\gamma_g \wh{d}^{j-1}_{g-1}}, 1}}.
  \]
  Since $\wh{d}^{j-1}_{g-1} \leq \wh{d}^{j}_{g-1}$, $\lambda \leq 1$.
  The probability that $u$ is accepted by the rejection sampling is
  \[
    \frac{\alpha_{j}}{\alpha_{j-1}} = \lambda \min\paren*{\frac{\eta}{\gamma_j
    \wh{d}^j_{j-1}}, 1}.
  \]
  The only difference in the rejection sampling between the two
  executions is that $\gamma_j$ and $\wh{d}^j_{j-1}$ are different.
  Let $\gamma_j$, $\wt{d}^j_{j-1}$ be the values in the $j^* = j-1$
  case, and $\wt{\gamma}_j = \frac{1}{d(u)}$,
  $\wh{\wt{d}}^j_{j-1} = (1-k\eta)d(u)$ be the values these take
  in the $j^* = j$ case.  The total variation incurred by this change
  in rejection sampling probability is thus
  \begin{align*}
    \delta &\leq \abs{ \lambda \min\paren*{\frac{\eta}{\gamma_j \wh{d}^j_{j-1}}, 1} - \lambda \min\paren*{\frac{\eta}{\wt{\gamma}_j \wh{\wt{d}}^j_{j-1}}, 1}}\\
    &\leq \lambda \abs{ \frac{\eta}{\gamma_j \wh{d}^j_{j-1}} - \frac{\eta}{(1-\eta k)}}\\
     &\leq \abs{ \frac{\abs{H_j}}{\wh{d}^j_{j-1}} - \frac{\eta}{(1-\eta k)}}.
  \end{align*}
  Now, by Lemma~\ref{lm:high-degree-sampling}, conditioned on
  $\mathcal{F}_j$ with at least $1 - n^{-2}$ probability we have
  \[
    |H_j| \in (1 \pm n^{-1/9}) \eta d(u)
  \]
  and
  \[
    \wh{d}^j_{j-1} \in (1 \pm n^{-1/9})(1-k\eta) d(u)
  \]
  so that
  \[
    \delta \leq \frac{\eta}{(1-\eta k)} \cdot \left(\frac{1 + n^{-1/9}}{1 - n^{-1/9}} - 1\right) = O(\eta/n^{1/9}).
  \]
  The net result is that every high degree vertex $u$ has a rejection
  sampling step that is $O(n^{-1/9})$ close in the two cases.  This
  means that all $s$ rejection sampling steps are $O(s n^{-1/9})$
  close in the two cases.  Combining with with the collection being
  close (Lemma~\ref{lm:high-deg-sample-step-1}) gives the result.
\end{proof}

\begin{proofof}{Lemma~\ref{lm:tvd-high-deg}}
  We use induction on Lemma~\ref{lm:tvd-sampled-step-1}. The result
  follows by noting that \walksreset$(k, \e, s)$ is the same as
  \walksresetH$(k, \e, s, 0)$ and applying triangle inequality for
  total variation distance, to get a total variation distance of
  $O(sk n^{-1/9}) \leq O(s n^{-1/10})$ for $k = O(\log n)$.
\end{proofof}

\subsection{Near-Independence of Hybrid Algorithm}\label{sec:tvd-low}
To establish the near-independence of our walks, we compare the output
distribution of \walksresetH$(k, \e, s, k)$ run on a random order
stream $\sigma$ to the output of an auxiliary version of
\walksresetH$(k, \e, s, k)$, in which the invocations of
\wftH$(v^i_0, \pi^i, k, \eta, k)$ are run on
independent streams $\wt{\sigma}^i$.

Let $\vv$ denote the $s$ partial walks sampled (see
Definition~\ref{def:sampled}) by
\wftH$(v^i_0, \pi^i, k, \eta, k)$ on stream
$\sigma$, and let $\ww$ denote the $s$ partial walks that would be
sampled if the $s$ invocations each used their own independent random
streams ($\wt{\sigma}_1, \dotsc, \wt{\sigma}_n$).

\begin{definition}
  We say a collection of (partial) walks $\ww$ is ``well-separated''
  if for all $\mathbf{w}, \mathbf{w}' \in \ww$ we have
  $\Psi(\mathbf{w}) \cap \Psi(\mathbf{w}') = \emptyset$.
\end{definition}

Well-separatedness is equivalent to the conditions:
  \begin{itemize}
  \item No vertex in $L$ is visited in more than one walk.
  \item No pair of adjacent vertices in $L$ is visited in more than one walk.
  \end{itemize}

\begin{lemma}\label{lem:w-well-separated}
  $\ww$ is well-separated with $1 - O(k \cdot n^{-1/11})$ probability.
\end{lemma}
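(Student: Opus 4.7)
The plan is to reduce well-separatedness to a union bound over pairs of walks in $\ww$, exploiting the fact that the walks are mutually independent (each $\mathbf{w}^i$ runs on its own stream $\wt{\sigma}_i$). First I would establish a marginal bound: for each walk $\mathbf{w}^i$, each position $j \in \{0, 1, \ldots, k\}$, and each vertex $u \in V$, one has $\Pr[w^i_j = u] \leq 2 d(u)/n$. I expect this to follow by essentially the same argument as Lemma~\ref{lem:walkfromtemplatehybrid0-marginal}, but applied to \wftH$(v, \pi, k, \eta, k)$ instead of the $j^* = 0$ version: setting $j^* = k$ only modifies how high-degree vertices are handled (sampling uniformly from $\delta(u)$, which exactly matches a true random walk step), so the pointwise multiplicative approximation of Corollary~\ref{cor:walk-distribution} still holds, and combined with Claim~\ref{cl:average} yields the bound.

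Next I would bound, for a fixed ordered pair $(i, i')$ with $i \neq i'$, the probability of each of the two failure modes equivalent to well-separatedness failing: (a) some $u \in L$ appears in both $\mathbf{w}^i$ and $\mathbf{w}^{i'}$, and (b) some edge $(u, v) \in E$ with $u, v \in L$ has $u \in \mathbf{w}^i$ and $v \in \mathbf{w}^{i'}$. I would condition on $\mathbf{w}^i$ and use independence of $\mathbf{w}^{i'}$, then sum the marginal bound: for (a), over the at most $k+1$ vertices in $L \cap \mathbf{w}^i$; for (b), over the at most $(k+1) \cdot n^{1/4}$ pairs $(u, v)$ with $u \in L \cap \mathbf{w}^i$ and $v \in N(u) \cap L$, using $d(u) \leq n^{1/4}$ for $u \in L$. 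Combined with the marginal bound giving $\Pr[v \in \mathbf{w}^{i'}] \leq 2(k+1) n^{1/4}/n$ for any $v \in L$ (via a union bound over $j$), this produces $\Pr[(a)] \leq O(k^2 / n^{3/4})$ and $\Pr[(b)] \leq O(k^2 / n^{1/2})$.

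Finally, I would union bound over all $\binom{s}{2}$ pairs and both failure modes to obtain a total failure probability of $O(s^2 k^2 / n^{1/2})$. Using $s \leq n^{1/30}$ and $k \leq O(\log n)$ from \ref{p1}, this is $O(k^2 n^{-13/30})$, comfortably within the claimed $O(k \cdot n^{-1/11})$. The main obstacle I anticipate is formalizing the marginal bound in the first step for the $j^* = k$ variant of \wftH, since Lemma~\ref{lem:walkfromtemplatehybrid0-marginal} is stated only for $j^* = 0$; however, since replacing stream sampling by uniform sampling on high-degree vertices only brings the single-walk distribution closer to a true random walk, re-running the proof of Lemma~\ref{lem:walkfromtemplatehybrid0-marginal} should work with only cosmetic changes.
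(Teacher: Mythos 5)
Your overall structure --- a marginal bound of the form $\Pr[w^i_j = u] \le O(d(u)/n)$ for each walk, followed by a union bound over pairs of walks and over the two failure modes (shared low-degree vertex; adjacent low-degree vertices) --- is exactly the paper's argument, and your arithmetic in the second and third steps is correct and in fact gives a slightly stronger final bound, $O(s^2k^2 n^{-1/2})$, than the paper records.

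The one place where you diverge, and where your proposal is under-justified, is the first step. The paper does \emph{not} re-derive the marginal bound for the $j^*=k$ hybrid. Instead it first applies Lemma~\ref{lm:tvd-sampled-step-1} inductively with $s=1$ to each independent stream, concluding that the $j^*=k$ and $j^*=0$ outputs are $O(kn^{-1/9})$-close per walk in total variation; this reduces the whole lemma to the $j^*=0$ case, where Lemma~\ref{lem:walkfromtemplatehybrid0-marginal} applies verbatim (since $\wftH$ with $j^*=0$ coincides with $\wft$). Your alternative --- proving the marginal bound directly for $j^*=k$ --- is plausible, but calling it ``cosmetic'' understates the work: Lemma~\ref{lem:walkfromtemplatehybrid0-marginal} is proved by quoting Corollary~\ref{cor:walk-distribution}, which rests on the full analysis of Lemma~\ref{lm:walk-distribution} for $\wft$; for $j^*=k$ you would have to redo that analysis with a different conditioning event $\Gamma$ (timestamps of edges taken at high-degree steps are no longer constrained to a window), different $\gamma_j$'s, and the exact-degree substitution in the rejection step. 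The intuition that these changes only make each high-degree step \emph{exactly} correct (contributing $\frac{1}{d(u)}\cdot\frac{\eta}{1-k\eta}$ per step) is sound, so the route is salvageable, but as written it is an appeal to an unproved variant of a technical lemma. Since the paper has already built the TV-closeness machinery precisely to avoid this, the cleaner fix is to invoke Lemma~\ref{lm:tvd-sampled-step-1} (iterated over $j$, with $s=1$ per independent stream) to pass to $j^*=0$ before applying the marginal bound, at the cost of an additive $O(skn^{-1/9})$ in the failure probability.
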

\begin{proof}
  By induction on Lemma~\ref{lm:tvd-sampled-step-1}, the distribution
  of walks generated by \walksresetH$(k, \e, 1, k)$ and
  \walksresetH$(k, \e, 1, 0)$ are $O(k n^{-1/9})$ close in total
  variation distance.  Therefore it suffices to show this lemma for
  for $\ww$ drawn from $s$ calls to
  \wftH$(v^i_0, \pi^i, k, \eta, 0)$ on
  independent random streams, i.e. $j^* = 0$ rather than $j^* = k$.

  Let $p_j(u) = \Pr[\mathbf{w}_j^i = u]$ be the marginal probability
  of sampling vertex $u$ in step $j$, which is independent and
  identically distributed over $i \in [s]$.  By
  Lemma~\ref{lem:walkfromtemplatehybrid0-marginal},
  $p_j(u) \leq d(u)/n$ for all $u$ and $j$.

  Consider any $j, j' \in [k]$ and distinct $i, i'\in [s]$.  For any
  fixed $u \in L$, the probability that $w^i_j = u =  w^{i'}_{j'}$ is
  $p_j(u)^2 \leq d(u)^2/n^2 \leq 1/n^{3/2}$.  Taking a union
  bound over $(sk)^2$ choices of $(i,j,i',j')$ and $n$ of $u$, the
  chance that any vertex in $L$ is visited in more than one walk is at
  most $(sk)^2/n^{1/2}$.

  For any fixed edge $(u, v) \in (L \times L) \cap E$, the probability
  that $w^i_j=u$ and $w^{i'}_{j'} = v$ is at most
  $p_j(u) p_j(v) \leq 1/n^{3/2}$.  Taking a union bound over the
  $n \cdot n^{1/4}$ total edges out of vertices in $L$, the chance
  this ever happens is at most $(sk)^2/n^{1/4}$.
\end{proof}

When $\ww$ is well-separated, the edges whose timestamps are looked at
by \wftH$(v^i_0, \pi^i, k, \eta, k)$ are
different across each invocation $i \in [s]$.  This lets us couple the
independent streams case to the dependent streams case.

\begin{lemma}\label{lm:tvd-low} Assuming~\ref{p1}, \ref{p2} and $s\leq n^{1/30}$, for every input graph $G$
$$
d_{TV}\left(\ww, \vv\right)\leq n^{-1/12}
$$
\end{lemma}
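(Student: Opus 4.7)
The plan is to reduce the TV distance between $\vv$ and $\ww$ to the probability that the walks are not well-separated. By Lemma~\ref{lem:w-well-separated}, $\ww$ is well-separated except with probability $O(k\cdot n^{-1/11})$, which under~\ref{p1} is comfortably below $n^{-1/12}$; so the strategy is to show that $\vv$ and $\ww$ put the same mass on every well-separated outcome.

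The first and most important step is a \emph{locality} observation about $\wftH(v_0^i,\pi^i,k,\eta,k)$: because $j^{*}=k$, whenever the current vertex $u$ lies in $V\setminus L$ the edge on line~\ref{line:hd-sample} is drawn from the full neighborhood using only the invocation's internal randomness and the degree estimate on line~\ref{line:deg-est-mod-p-high} is replaced by its exact value. The only stream-dependent operations happen when the current vertex $u$ lies in $L$, and each such operation only consults timestamps of edges in $\delta(u)$. Thus the sampled walk $\mathbf{v}^i$ is a deterministic function of $(v_0^i,\pi^i,r^i)$ and of the timestamps $(t_f)_{f\in\Psi(\mathbf{v}^i)}$ alone.

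Next, I will fix the tuples $(v_0^i,\pi^i,r^i)_{i\in[s]}$ (which are independent across $i$ by construction of Algorithm~\ref{alg:walks-reset-mod}), turning each invocation into a deterministic map $F_i$ of the timestamps $(t_f)_{f\in E}$. For any well-separated target $\vv_0=(\mathbf{v}_0^i)_{i\in[s]}$, the locality observation will yield that the event $\{F_i(\cdot)=\mathbf{v}_0^i\}$ is measurable with respect to $(t_f)_{f\in\Psi(\mathbf{v}_0^i)}$: if $F_i((t_f))=\mathbf{v}_0^i$ then the algorithm only consulted timestamps inside $\Psi(\mathbf{v}_0^i)$, so perturbing any timestamp outside this set leaves the execution trace and output unchanged. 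Well-separatedness of $\vv_0$ means the sets $\Psi(\mathbf{v}_0^i)$ are pairwise disjoint, and since the timestamps $(t_f)_{f\in E}$ are mutually independent under the shared stream, the events $\{F_i=\mathbf{v}_0^i\}$ are mutually independent. Averaging over $(v_0^i,\pi^i,r^i)$ then gives
\[
  \Pr[\vv=\vv_0]=\prod_{i=1}^{s}\Pr[\mathbf{w}^i=\mathbf{v}_0^i]=\Pr[\ww=\vv_0].
\]

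Finally, summing this pointwise identity over all well-separated $\vv_0$ gives $\Pr[\vv\text{ well-sep}]=\Pr[\ww\text{ well-sep}]$, and hence
\[
  d_{TV}(\vv,\ww)\le\Pr[\ww\text{ not well-separated}]\le O(k\cdot n^{-1/11})\le n^{-1/12}
\]
by Lemma~\ref{lem:w-well-separated} together with the bound $k\le c\log n/\log\log n$ from~\ref{p1}. The principal obstacle is the locality step --- arguing carefully that the algorithm's execution trace on a given timestamp vector cannot change if we perturb timestamps outside $\Psi(\mathbf{v}_0^i)$; once this is established, the factorization across $i$ is automatic from independence of timestamps and internal randomness, and the TV bound is immediate.
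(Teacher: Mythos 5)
Your proposal is correct and rests on exactly the same two ingredients as the paper's proof: the locality observation that an invocation of \wftH{} with $j^*=k$ only ever reads timestamps of edges in $\Psi(\mathbf{v}^i)$, and the well-separatedness bound of Lemma~\ref{lem:w-well-separated}. The only difference is cosmetic and in the final step: the paper constructs an explicit coupling (building the shared stream by giving each covered edge the timestamp from the first walk covering it), whereas you derive pointwise equality of the two output distributions on well-separated outcomes from disjointness of the sets $\Psi(\mathbf{v}_0^i)$ and independence of the timestamps --- two equivalent packagings of the same fact, both yielding $d_{TV}(\vv,\ww)\le \Pr[\ww \text{ not well-separated}]$.
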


\begin{proof} We partition the internal randomness used by \walksresetH{} into
  $s$ disjoint independent random strings $R^1,\ldots, R^s$,
  such that for every $i\in [s]$ the $i^{\text{th}}$ invocation of \wftH{} uses
  string $R^i$.

  Note that the behavior of the $i^{\text{th}}$ invocation of
  \wftH{} is fully determined by its
  randomness $R^i$ and by the timestamps of edges in
  $\Psi(\mathbf{w}^i)$ (the low-degree vertices; see
  Definition~\ref{def:psi}).  Given these timestamps,
  \wftH{} is invariant under other changes to
  the stream.

  Let $\Ev$ denote the event that the independent walk case $\ww$ is
  ``well-separated'', meaning the $\Psi(\mathbf{w}^i)$ do not overlap.
  Per Lemma~\ref{lem:w-well-separated}, $\Ev$ holds with
  $1 - O(kn^{-1/11})$ probability.  When $\Ev$ holds, we can ``couple''
  the independent stream result $\ww$ to the single stream result
  $\vv$ because each element $\vv^i$ depends on disjoint edges in the
  stream. That is, we use the timestamps used by the independent walk
  algorithms to construct a single stream that is both correctly distributed
  (all of its timestamps uniform and independent) such that, if every instance
  of \wftH{} was run on this stream, their executions would be identical to the
  executions of the original independent copies of \wftH{} whenever $\Ev$ holds.

  We now give the details of coupling $\ww$ and $\vv$ under $\Ev$.  We
  construct a distribution $\mathcal{D}$ of a single stream and internal
  randomness $(\sigma, R)$ from $(\wt{\sigma}^i, R^i)_{i\in[s]})$, such that
  the marginal distribution of $(\sigma, R)$ is that of the stream and
  randomness underlying $\vv$, and so that, if the stream and randomness
  underling $\vv$ is set to be $(\sigma, R)$, $\ww = \vv$ whenever $\Ev$ holds.

  Given the independent streams $(\wt{\sigma}^i)_{i\in[s]}$, we can construct
  the dependent instance $(\sigma, R)$ as follows.  We set $R = \wt{R}$, and
  just need to fix the timestamps of edges in $\sigma$.  For each edge $f \in
  E$, we say that an edge $f\in E$ is {\em covered} by $i\in [s]$ if $f\in
  \Psi(\mathbf{w}^i)$. For $f\in E$, let $t_f$ be the timestamp
  of $f$ in $\sigma$ and $\wt{t}_f^i$ be the timestamp of $f$ in
  $\wt{\sigma}^i$.  We set
\begin{equation}
t_f=\left\{
\begin{array}{ll}
\wt{t}^i_f&\text{~if~}i\text{~is the smallest that covers~}f\\
\mathcal{U}[0, 1]&\text{~if $f$ is not covered by any $i\in [s]$.}
\end{array}
\right.
\end{equation}
Note that the resulting distribution of $(t_f)_{f\in E}$ is indeed a
product of uniform distributions over $[0, 1]$. Indeed, one can think
of $(t_f)_{f\in E}$ using the principle of deferred decisions:
starting with $v_0^1$, we run \wftH{} from
$v_0^1$, sampling timestamps of edges as soon as they are needed, and
then proceeding for $v^2_0, v^3_0,\ldots, v^s_0$. The first time the
edge is covered by a walk, its timestamp is sampled from the uniform
distribution, independently of the other timestamps, as required.

It remains to note that conditioned on $\Ev$ for every $f\in E$ there exists at
most one $i\in [s]$ such that $f$ is covered by $i$. Therefore, when $\Ev$
holds, the timestamps of $\Psi(\mathbf{w}^i)$ are the same in $\sigma$ and
$\wt{\sigma}^i$.  Since the randomness is also the same, this means
$\mathbf{v}^i = \mathbf{w}^i$ for each $i$, or $\vv = \ww$.  Since $\Ev$ occurs
with probability $1-kn^{-1/11}$, we have that the two outputs match with
probability $1-kn^{-1/11} > 1 - n^{-1/12}$, as desired. 
\end{proof}

\section{Proof of Theorem~\ref{thm:main}}
We now give 

\begin{proofof}{Theorem~\ref{thm:main}}

\paragraph{Correctness.} Let $S=(v^1_0, \ldots, v^s_0)$, where $v^i_0\sim \mathcal{U}(V)$ are chosen uniformly at random with replacement. We consider four distribution on walks in our proof, which we define below. 

\paragraph{Walks generated by \walksreset{} on independent streams.} Let $\sigma_1,\ldots, \sigma_s$ be independent random order streams. Let $\vv_*=(\mathbf{v}_*^i)_{i\in [s]}$ denote walks generated by \walksreset$(k, \eps, s)$ (Algorithm~\ref{alg:walks-reset}), where the $i^{\text{th}}$ walk is generated on $\sigma^i$. 

\paragraph{Walks generated by \walksresetH{} on independent streams.}
Let $\sigma_1,\ldots, \sigma_s$ be independent random order streams. Let $\vec{\mathbf{x}}=(\mathbf{x}^i)_{i\in [s]}$ denote walks generated by \walksresetH{}$(k, \eps, s, k)$ (Algorithm~\ref{alg:walks-reset-mod}), where the $i^{\text{th}}$ walk is generated on $\sigma^i$.  

\paragraph{Walks generated by~\walksresetH{} on a joint stream.} Let $\vec{\mathbf{y}}=(\mathbf{y}^i)_{i\in [s]}$ denote walks generated by \walksresetH{}$(k, \eps, s, k)$ (Algorithm~\ref{alg:walks-reset-mod}), where every walk is generate on the same stream $\sigma$.

\paragraph{Walks generated by~\walksreset{} on a joint stream.} Let $\vec{\mathbf{v}}=(\mathbf{v}^i)_{i\in [s]}$ denote walks generated by \walksreset{}$(k, \eps, s)$ (Algorithm~\ref{alg:walks-reset}), where every walk is generate on the same stream $\sigma$.

\paragraph{Proof outline.} Note that $\vv$ is the random variable that
we obtain from our random order streaming algorithm~\walksreset. We
first show that it is close in distribution to $\vv_*$, and then show
that the distribution of $\vv_*$ is such that \walks{} has the
desired property.

\paragraph{Step 1: showing that $\vv_* \approx \vv$.}
By Lemma~\ref{lm:tvd-high-deg},
\[
d_{TV}(\vv, \vec{\mathbf{y}})\leq O(s\cdot n^{-1/10})
\]
Second, by Lemma~\ref{lm:tvd-low},
\[
  d_{TV}(\vec{\mathbf{y}}, \vec{\mathbf{x}})\leq n^{-1/12}.
\]
Third, by Lemma~\ref{lm:tvd-high-deg} invoked with $s=1$ applied to
each independent stream, together with triangle inequality for total
variation distance, we have
\[
  d_{TV}(\vec{\mathbf{x}}, \vv_*)\leq \sum_{i=1}^sd_{TV}(\mathbf{x}^i, \mathbf{v}^i_*) \leq  O(s\cdot n^{-1/10}).
\]
Indeed, running \walksreset~(respectively \walksresetH) with every
internal invocation of \wft{} using a separate stream is equivalent to a concatenation of independent instances of \walksreset~(respectively \walksresetH) with $s=1$.  Combining these three equations, the triangle inequality gives
\[
  d_{TV}(\vv, \vv_*) \leq O(s n^{-1/10}) + n^{-1/12} < n^{-1/100}
\]
for sufficiently large $n$, which it will be when $c'$ is sufficiently small,
as $\sqrt{c' \log n} \ge 1$.

\paragraph{Step 2: verifying preconditions of Lemma~\ref{lm:tvd-high-deg} and Lemma~\ref{lm:tvd-low}.}
We now verify that~\ref{p1} and \ref{p2} hold. Let the constants $c, C>0$ be the ones chosen in Lemma~\ref{lm:walk-distribution}.
First,~\ref{p1} is satisfied since 
$$
k\leq \sqrt{c'\log n}\leq c\log n/\log\log n
$$
for a sufficiently small constant $c'>0$ by assumption.
The upper bound in~\ref{p2} is satisfied since
$$
\eta=\e^8\cdot 2^{-Ck}\leq 2^{-Ck}
$$
by the choice of $\eta$ in Algorithm~\ref{alg:walks-reset}.
The lower bound in ~\ref{p2} is satisfied since 
$$
\eta=\e^8\cdot 2^{-Ck}\geq n^{-1/101}\cdot 2^{-Ck}=n^{-1/101-o(1)}\geq n^{-1/100},
$$
since $\e\geq n^{-1/1000}$ and $2^{Ck}=2^{O(\sqrt{\log n})}=n^{o(1)}$ by the assumption on $k$. Furthermore, we have 
$$
 s=\nwalks\cdot (1/\e)^{O(k)}\cdot 2^{O(k^2)}\leq n^{1/100}\cdot 2^{O(c'\log n)}\leq n^{1/30}
$$
since $\nwalks\leq n^{1/100}$ by assumption of the theorem and 
$$
k\leq \min\left\{\frac{c'\log n}{\log (1/\e)}, \sqrt{c'\log n}\right\}
$$
 for a sufficiently small $c'>0$ by assumption of the theorem. 

 \paragraph{Step 3: showing that $\vv_*$ leads to the required distribution.}
 The output of \walks{} (Algorithm~\ref{alg:simulate-walks}) is the
 first $\nwalks$ successful walks from \walksreset{}, which is (per step 1)
 $n^{-1/100}$-close to the first $\nwalks$ successful walks from $\vv_*$,
 for $s = \nwalks \cdot 100 \eta^{-k} \cdot k!$.

 For any fixed $k$-step walk $\mathbf{w}$, let
 \[
   p_*(\mathbf{w}) := \frac{1}{n} \prod_{i=0}^{k-1} \frac{1}{d(w_i)}
 \]
 be the true probability that a $k$-step random walk from a uniform
 vertex equals $\mathbf{w}$.  By taking the average of
 Corollary~\ref{cor:walk-distribution} over uniform initial vertices
 $v$, we have for any $i \in [s]$ that
 \[
   \Pr[\mathbf{v}_*^i = \mathbf{w}] \in [1, 1 + O(\eta^{1/7})] \cdot \frac{\eta^k}{k!} \cdot p_*(\mathbf{w}).
 \]
 As a result,
 \begin{align}\label{eq:vstar-dist}
   \Pr[\mathbf{v}_*^i = \mathbf{w} \mid \mathbf{v}_*^i \text{ succeeds}] \in [1, 1 + O(\eta^{1/7})] \cdot  p_*(\mathbf{w})
 \end{align}
 and
 \[
   \Pr[\mathbf{v}_*^i \text{ succeeds}] \geq \frac{\eta^k}{k!}.
 \]
 By construction, the $\mathbf{v}_*^i$ are independent across
 $i \in [s]$.  We expect at least $\frac{\eta^k}{k!} s = 100\nwalks$ repetitions to succeed, so by a
 Chernoff bound at least $a$ will succeed with at least $1 - 2^{-\nwalks}$
 probability.  Therefore the output of \walks{} is
 $(n^{-1/100} + 2^{-\nwalks})$-close to the distribution of $a$ independent
 samples from
 $(\mathbf{v}_*^i = \mathbf{w} \mid \mathbf{v}_*^i \text{ succeeds})$.
 By~\eqref{eq:vstar-dist}, each such sample is
 $O(\eta^{1/7}) < \eps$-close to a uniform random walk.
\paragraph{Step 4: space complexity.} The expected space complexity of a single
invocation of \wft{} (Algorithm~\ref{alg:walk-from-template}) is $O(k)$, so the
overall space complexity is $(1/\e)^{O(k)} 2^{O(k^2)}\nwalks$, as required.
\end{proofof}

\section{Digraph Lower Bound}
\label{sec:dglb}
In this section, we prove that both sampling random walks and approximating the
PageRank of a given vertex set are hard in \emph{directed} graph streams.
\dgrwlb*
\dgprlb*
We will prove these by a reduction from $\ind_n$. In this one-way
communication problem Alice has a string $x \in \bool^n$ while Bob has an index
$I \in \brac{n}$. Alice must send Bob a message such that he can determine
$x_I$. We will show that a \emph{uniform} instance of this problem can be
converted into a random graph stream such that approximating the random walk
distribution or the PageRank vector allows solving indexing.

The following is a well-known consequence of information theory---for
completeness, we include a proof in Appendix~\ref{app:indlb}.
\begin{restatable}{lemma}{indistlb}
\label{lm:indistlb}
Let $\varepsilon > 0$ be any constant. Any protocol that solves $\ind_n$ on a
uniform input with probability $1/2 + \varepsilon$ requires $\bOm{n}$
communication in expectation.
\end{restatable}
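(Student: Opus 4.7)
\begin{proofof}{Lemma~\ref{lm:indistlb} proposal}
The plan is a standard information-theoretic reduction. Fix any one-way protocol in which Alice, on input $x \in \bool^n$ (drawn uniformly) and private randomness $r_A$, sends a message $M = M(x, r_A)$, and Bob, on input $I \in [n]$ (drawn uniformly and independently of $x$) together with randomness $r_B$, outputs a guess $\wh{x}_I = f(M, I, r_B)$. Assume $\Pb{\wh{x}_I = x_I} \ge 1/2 + \eps$. For $i \in [n]$ let $p_i := \Pb{\wh{x}_I \neq x_I \mid I = i}$, so that $\frac{1}{n}\sum_i p_i \le 1/2 - \eps$.

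The first step is to apply Fano's inequality coordinate-wise. Since $(x, M, r_B)$ is independent of $I$ and $\wh{x}_I$ is a function of $(M, I, r_B)$, Fano gives, for each fixed $i$,
\[
  H(x_i \mid M) = H(x_i \mid M, I = i) = H(x_i \mid M, r_B, I = i) \le H_2(p_i),
\]
where $H_2$ is binary entropy and the middle equality uses that $r_B$ is independent of $(x_i, M, I)$. Since $x_i$ is uniform, $I(x_i; M) = 1 - H(x_i \mid M) \ge 1 - H_2(p_i)$.

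Next I would aggregate over $i$. By concavity of $H_2$ and the bound on the average of the $p_i$,
\[
  \frac{1}{n}\sum_{i=1}^n H_2(p_i) \le H_2\!\left(\frac{1}{n}\sum_{i=1}^n p_i\right) \le H_2(1/2 - \eps),
\]
so $\sum_i I(x_i; M) \ge n(1 - H_2(1/2 - \eps)) =: c(\eps)\cdot n$, with $c(\eps) > 0$. Because the coordinates $x_1,\ldots,x_n$ are independent, subadditivity of conditional entropy gives $H(x \mid M) \le \sum_i H(x_i \mid M)$, hence $I(x; M) \ge \sum_i I(x_i; M) \ge c(\eps)\cdot n$.

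Finally I would convert mutual information to expected communication. We have $I(x; M) \le H(M)$, and for any random binary string $M$ (viewed together with its length, or encoded via a prefix-free code) one has $H(M) \le \expect[|M|] + O(\log \expect[|M|])$. Thus $\expect[|M|] = \Omega(c(\eps)\cdot n) = \Omega(n)$, completing the proof. The only mild subtlety is step four -- relating $H(M)$ to the expected bit length for protocols that do not a priori produce prefix-free messages -- but this is handled by the standard trick of encoding $(|M|, M)$ with an Elias-type prefix code, and the resulting $O(\log n)$ slack is absorbed into the $\Omega(n)$ bound.
\end{proofof}
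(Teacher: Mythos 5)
Your proposal is correct and follows essentially the same route as the paper's proof: Fano's inequality applied coordinate-wise to bound $H(x_i \mid M)$ by the binary entropy of the per-index error, concavity of binary entropy to aggregate over the uniform index, independence of the coordinates to lower-bound $I(x;M)$ by the sum of the $I(x_i;M)$, and finally $I(x;M)\le H(M)$ to conclude $\Omega(n)$ expected communication. Your closing remark about prefix-free encoding of variable-length messages is a small point of extra care that the paper glosses over, but it does not change the argument.
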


\subsection{Graph Distribution}
\label{sec:harddigraphdist}
We start by defining a distribution on length-$\bO{n}$ directed graph streams
that Alice and Bob can construct (with a prefix belonging to Alice and a
postfix belonging to Bob, and interleaved edges that they construct using
shared randomness that are independent of their input) using their inputs to
$\ind_n$.

We will show that the graph streams correspond to randomly choosing a graph and
then uniformly permuting its edges, and any algorithm that generates a
distribution that is $\varepsilon$-close to either the random walk
distribution or the PageRank distribution for some constant $\varepsilon < 1/4$
will be able to use this to solve $\ind_n$.

\paragraph{Vertices.} There will be $\beta n$ vertices $(a_i)_{i=1}^{\beta n}$
where $\beta \in \Nbb_{>0}$ is a constant depending on $\varepsilon$, a single
vertex $b$, $n$ vertices $(c_i)_{i=1}^n$, and two pairs of vertices $\set{d_0,
e_0}$, $\set{d_1, e_1}$. Every vertex in the graph will have a path to either
$\set{d_0,e_0}$ or $\set{d_1,e_1}$, which will function as ``sink'' sets.

\paragraph{Fixed Edges.} These edges will not depend on either player's input.
The players can use shared randomness to insert them uniformly at random among
their other edges.

They are $(a_ib)_{i=1}^{\beta n}$, i.e.\ a star of $\beta n$ edges pointing
into $b$ and the four edges $d_0e_0$, $e_0d_0$, $d_1e_1$, and $e_1d_1$ (i.e.
meaning that each of $\set{d_0, e_0}$ and $\set{d_1, e_1}$ is a 2-vertex loop).

\paragraph{Alice's edges.} Let $\pi$ be a uniformly random permutation of
$\brac{n}$, and let $J$ be drawn uniformly from $\set{0, \dots, n}$.  These
will be used to define the boundary between Bob's edges and Alice's edges.

Recall that Alice's input is a string $x \in \bool^n$. For each $i \in
\brac{J}$, Alice has the edge $c_{\pi(i)}d_{x_{\pi(i)}}$, with
$c_{\pi(1)}d_{x_{\pi(1)}}$ first, $c_{\pi(2)}d_{x_{\pi(2)}}$ second, and so on.

\paragraph{Bob's edges} Recall that Bob has the index $I$. His first edge will
be $bc_I$. Then, for each $i \in \set{J+1, \dots n}$, he has the edge
$c_{\pi(i)}d_{y_{\pi(i)}}$, where $y$ is a random $n$-bit string.

\begin{lemma}
\label{lm:digraphisrand}
The graph stream described above is a uniformly random order graph stream.
\end{lemma}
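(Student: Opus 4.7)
The plan is to decompose the verification into two parts: first, identifying the distribution of the underlying random graph; second, showing that, conditioned on that graph, the arrival order of the edges is uniform over all permutations.

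For the graph distribution, I would fix $\pi$ and $J$ and examine the edge multiset. The star edges $(a_i b)_i$ and the four loop edges are deterministic, and $bc_I$ depends only on $I$. For each $i \in [n]$, the graph contains exactly one of $c_i d_0$ or $c_i d_1$: namely $c_i d_{x_i}$ when $i \in \pi([J])$, and $c_i d_{y_i}$ otherwise. Since $x_i$, $y_i$, and $I$ are mutually independent uniform random variables and are all independent of $\pi$ and $J$, the endpoint bit at $c_i$ is an iid uniform bit across $i$ and is independent of $I$. Hence the graph distribution is exactly: fixed edges, plus $bc_I$ with $I$ uniform on $[n]$, plus $c_i d_{b_i}$ for independent uniform bits $(b_i)_{i=1}^n$.

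For the arrival order, I would consider the relative order of the $n+1$ non-fixed edges (the $n$ c-edges and $bc_I$). By construction, the c-edges arrive in the order induced by $\pi$, and $bc_I$ is inserted between positions $J$ and $J+1$. The key step is verifying that conditioning on the realized graph does not alter the distribution of $(\pi, J)$: for any fixed $G$ in the support, the number of $(\pi, J, x, y)$ tuples consistent with $G$ equals $2^n$ regardless of $(\pi, J)$, because once $(\pi, J)$ is fixed, $G$ forces the $J$ bits of $x$ indexed by $\pi([J])$ and the $n-J$ bits of $y$ on the complementary indices, while the remaining $n$ bits of $x, y$ are free. Hence $(\pi, J)$ remains uniform on $S_n \times \{0, \ldots, n\}$ conditionally on $G$, producing all $n! \cdot (n+1) = (n+1)!$ orderings of the $n+1$ non-fixed edges with equal probability. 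Finally, the star and loop edges are inserted at uniformly random positions by shared randomness independent of everything else, which extends the uniform order to the full stream.

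The main obstacle is the conditioning argument in the second step---showing that $(\pi, J)$ remains uniform after conditioning on $G$. This turns on the symmetric roles of $x$ and $y$: each choice of $(\pi, J)$ forces exactly $n$ of the $2n$ input bits (the same count regardless of $(\pi, J)$), so every $(\pi, J)$ value is consistent with precisely $2^n$ input pairs $(x, y)$ for each graph $G$ in the support, yielding the claimed uniformity.
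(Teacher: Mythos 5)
Your proof is correct and rests on the same key observation as the paper's: because $x$ and $y$ are independent and uniform, the effective bit string determining the $c_i$-edges is uniform and independent of $(\pi,J)$, so the realized graph carries no information about the ordering. The paper argues this by directly identifying the stream's distribution with ``draw $z\in\bool^n$ uniformly, order the edges $(c_id_{z_i})$ randomly, and insert $bc_I$ and the fixed edges at random positions,'' whereas you make the same symmetry explicit via a Bayes/counting argument showing $(\pi,J)$ stays uniform conditioned on $G$; your version just spells out a step the paper leaves implicit.
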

\begin{proof}
Fix any value of Bob's index $I$. If we were to randomly draw a string $z \in
\bool^n$, randomly order edges $(c_id_{z_i})_{i=1}^n$, and then insert $bc_I$
and then our fixed edges randomly in this stream, this would give a uniformly
random order graph stream (corresponding to drawing a graph from a fixed
distribution and then randomly permuting its edges).

But this would also be identically distributed to our graph stream, as the
strings $x$ and $y$ are both drawn uniformly at random from $\bool^n$. So we
have a uniformly random graph stream.
\end{proof}

\begin{lemma}
\label{lm:digraphcorr}
With probability $3/4$, every vertex in $(a_i)_{i=1}^{\beta n}$ has a
length-$3$ path from it to $\set{d_{x_I},e_{x_I}}$, and (up to prefixes) that
is the only
path out of that vertex.
\end{lemma}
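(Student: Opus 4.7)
The plan is to trace the unique forward path out of any $a_i$ and show that its length-$3$ endpoint lies in $\set{d_{x_I}, e_{x_I}}$ with probability exactly $3/4$ over the joint distribution of Alice's string, Bob's random string $y$, the permutation $\pi$, and the split point $J$.

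First I would observe that the out-degree of every relevant vertex on a path starting at some $a_i$ is exactly one: each $a_i$ has only the fixed edge $a_i b$; the vertex $b$ has only the single outgoing edge $bc_I$ given by Bob; and each $c_j$ has precisely one outgoing edge, namely $c_j d_{x_j}$ if $\pi^{-1}(j) \le J$ (contributed by Alice) or $c_j d_{y_j}$ if $\pi^{-1}(j) > J$ (contributed by Bob). Consequently, the only length-$3$ walk out of $a_i$ is the deterministic path $a_i \to b \to c_I \to d_{z}$, where $z \in \bool$ is whichever bit labels the outgoing edge of $c_I$; any longer walk simply continues to oscillate within $\set{d_z, e_z}$. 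This also establishes the ``only path out'' claim, up to prefixes, for every $a_i$ simultaneously, since the same shared path is used by all $\beta n$ of them.

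Next I would compute the probability that $z = x_I$, i.e.\ that the walk lands in $\set{d_{x_I}, e_{x_I}}$. Condition on the event $A := \set{\pi^{-1}(I) \le J}$ that Alice's edge for $c_I$ is included. On $A$ the outgoing edge of $c_I$ is exactly $c_I d_{x_I}$, so correctness holds with probability $1$. On the complement $\bar A$, Bob's edge $c_I d_{y_I}$ is used, and since $y$ is a uniformly random $n$-bit string independent of $(x, \pi, J, I)$, we have $y_I = x_I$ with probability $1/2$. A short calculation, using that $\pi^{-1}(I)$ is uniform on $[n]$ and independent of $J$ (which is uniform on $\set{0, \ldots, n}$), gives $\Pb{A} = \tfrac{1}{n(n+1)} \sum_{k=1}^{n} (n - k + 1) = 1/2$. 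Combining,
\[
\Pb{z = x_I} \;=\; \Pb{A} + \Pb{\bar A} \cdot \tfrac{1}{2} \;=\; \tfrac{1}{2} + \tfrac{1}{2}\cdot \tfrac{1}{2} \;=\; \tfrac{3}{4},
\]
which is exactly the desired bound.

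I do not anticipate a serious obstacle here: the whole argument is structural (there is a single candidate path) plus an elementary probability computation. The only place to be careful is confirming that the success event is the \emph{same} event for all $a_i$, so no union bound is needed and the $3/4$ bound is tight. I would also briefly remark that when $z = x_I$ the length-$3$ walk lands in $d_{x_I}$ and all longer continuations remain in $\set{d_{x_I}, e_{x_I}}$ by the loop edges $d_{x_I} e_{x_I}$ and $e_{x_I} d_{x_I}$, which justifies the ``up to prefixes'' phrasing for paths of any length.
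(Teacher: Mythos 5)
Your proposal is correct and follows essentially the same argument as the paper: the unique out-edge structure forces every walk from an $a_i$ through $b \to c_I$, Alice's edge is used with probability $1/2$ (giving $d_{x_I}$ for sure), and otherwise Bob's random guess $y_I$ is correct with probability $1/2$, for a total of $3/4$. Your version is simply more detailed — in particular you correctly write the condition as $\pi^{-1}(I) \le J$ (the paper's $\pi(I) \le J$ is a harmless slip) and you note that the success event is shared across all $a_i$, so no union bound is needed.
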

\begin{proof}
Each has an edge to $b$, which has a single edge to $c_J$. With probability
$1/2$, $\pi(I) \le J$, and so the unique edge out of $c_J$ points to $d_{x_I}$.
Otherwise, it points to $d_{y_I}$, which is $d_{x_I}$ with probability $1/2$,
as $y$ is drawn at random.
\end{proof}

We will now prove that both the random walk distribution and the PageRank
vector on this graph can be used to determine the answer to $\ind_n$. Note that
there are no vertices with out-degree $0$ in this graph, so we do not need to
concern ourselves with what a random walk or the PageRank walk should do
when encountering such a vertex.
\subsection{Random Walks}
\begin{lemma}
\label{lm:rwtoprot}
Let $\Ac$ be any algorithm using $S$ space such that, when given a sample from
the distribution on graph streams above, the output of $\Ac$ is
$\varepsilon$-close in total variation distance to sampling a $k$-step random
walk from a random vertex in the graph given by the stream, where $k \ge 3$.
Then there is a protocol for $\ind_n$ on uniform inputs that uses $S$ space and
succeeds with probability $3/4 - \varepsilon - 6/\beta$.
\end{lemma}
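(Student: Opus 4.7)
The plan is to reduce $\ind_n$ on uniform inputs to the random-walk sampling problem by running $\mathcal{A}$ on the graph streams described in Section~\ref{sec:harddigraphdist}. Using shared randomness, Alice and Bob jointly sample the permutation $\pi$, the split point $J$, the guess string $y$, and the insertion positions of the input-independent fixed edges. Alice constructs her input-dependent edges $c_{\pi(i)}d_{x_{\pi(i)}}$ for $i \in [J]$, interleaves them in the proper stream order with whichever fixed edges lie in her prefix, feeds the result to $\mathcal{A}$, and transmits the resulting state (at most $S$ bits) to Bob. Bob continues running $\mathcal{A}$ on the postfix: his edge $bc_I$, his guess edges $c_{\pi(i)}d_{y_{\pi(i)}}$ for $i > J$, and the remaining fixed edges in their shared-randomness-determined positions. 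He then reads off the walk $(v_0, v_1, \ldots, v_k)$ that $\mathcal{A}$ samples and outputs $0$ if $v_k \in \set{d_0, e_0}$, $1$ if $v_k \in \set{d_1, e_1}$, and an independent uniform bit otherwise.

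For correctness, the key point is to combine three independent sources of loss. By Lemma~\ref{lm:digraphisrand}, the concatenated stream is distributed as a uniformly random-ordered stream from the hard family, so the hypothesis on $\mathcal{A}$ applies and its output can be coupled to a true $k$-step random walk $W$ from a uniformly random vertex with agreement probability at least $1 - \varepsilon$. Next, by Lemma~\ref{lm:digraphcorr}, with probability at least $3/4$ over the construction every $a_i$ has as its unique forward path the length-$3$ walk $a_i \to b \to c_{\pi(J)} \to d_{x_I}$, after which the walk is trapped in the $2$-vertex loop $\set{d_{x_I}, e_{x_I}}$; since $k \geq 3$, any true random walk starting at some $a_i$ in such a graph satisfies $v_k \in \set{d_{x_I}, e_{x_I}}$, making Bob's output exactly $x_I$. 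Finally, a uniformly random start vertex lies in $(a_i)_{i=1}^{\beta n}$ except with probability $(n+5)/((\beta+1)n+5) \leq 6/\beta$.

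A union bound over these three failure events---the coupling failing ($\leq \varepsilon$), the construction failing the conclusion of Lemma~\ref{lm:digraphcorr} ($\leq 1/4$), and the starting vertex missing $(a_i)_i$ ($\leq 6/\beta$)---gives success probability at least $3/4 - \varepsilon - 6/\beta$ using only a single $S$-bit message, as claimed. The one mildly subtle step is making sure that the concatenation of Alice's and Bob's segments is genuinely distributed as a uniformly random-order stream from the hard family; this is exactly what Lemma~\ref{lm:digraphisrand} provides, because the shared randomness fixes both the order of the $c$--$d$ edges (via $\pi$) and the insertion positions of the fixed edges, so each player can independently reconstruct their portion of the stream in the correct global order without any extra communication.
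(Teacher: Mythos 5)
Your proposal is correct and follows essentially the same route as the paper's proof: the same reduction via the shared-randomness stream construction, the same decision rule based on where the sampled walk ends, and the same three-way accounting of losses ($\varepsilon$ from the TV guarantee, $1/4$ from Lemma~\ref{lm:digraphcorr}, and $6/\beta$ from the start vertex missing $(a_i)_{i=1}^{\beta n}$). The only detail the paper makes explicit that you leave implicit is the conversion from a public-coin to a private-coin protocol by fixing the best setting of the shared randomness over the fixed input distribution, which is routine.
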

\begin{proof}
The protocol will be as follows:
\begin{enumerate}
\item Alice and Bob use their $\ind$ input to construct a corresponding random
graph stream.
\item Using $S$ bits of communication from Alice to Bob, they run $\Ac$ on the
stream.
\item If the walk output by $\Ac$ ends in $\set{d_z,e_z}$ for $z \in \bool$,
Bob answers $z$. Otherwise he answers arbitrarily.
\end{enumerate}
Now, by Lemma~\ref{lm:digraphcorr}, with probability $3/4$ any length-$k$
random walk starting from a vertex in $(a_i)_{i=1}^{\beta n}$ will reach
$\set{d_{x_I}, e_{x_I}}$, and as that set has no out-edges it will stay there.
The probability of starting at one of these vertices is
\begin{align*}
\frac{\beta n}{\beta n + 1 + n + 4} &= \frac{1}{1 + 1/\beta + 5/n\beta}\\
&= \frac{1}{1 + 6/\beta}\\
&\le 1 - 6/\beta
\end{align*}
and so the probability that the distribution given by $\Ac$ ends in
$\set{d_{x_I}, e_{x_I}}$ is at least \[
3/4 - \varepsilon - 6/\beta
\]
proving the correctness of the protocol. The construction of the stream itself
is done entirely with public randomness, so this gives an $S$ bit public
randomness protocol. As we are working with a fixed input distribution, this
means that there is also an $S$ bit private randomness protocol, by fixing
whichever set of public random bits maximizes the probability of success over
the uniform distribution.
\end{proof}

We are now ready to prove Theorem~\ref{thm:dgrwlb}.
\dgrwlb*
\begin{proof}
Set $\beta = \frac{12}{1/4 - \varepsilon}$. Then, for each $n$, we can
construct the family given by the stream distribution described in
section~\ref{sec:harddigraphdist} from $\ind_{n'}$, where $n' = \bOm{n}$ while
keeping the total number of vertices and edges below $n$. By
Lemma~\ref{lm:rwtoprot}, any algorithm that $\varepsilon$-approximates the
distribution of length-$k$ random walks on graphs drawn from this family gives
a protocol for $\ind_{n'}$ that succeeds with probability \[
3/4 - \varepsilon - 6/\beta = 1/2 + \frac{1/4 - \varepsilon}{2}
\]
and so by Lemma~\ref{lm:indistlb} it uses $\bOm{n'} = \bOm{n}$ space, 
where the constant depends only on $\varepsilon$ and $\beta$. So as $\beta$
depends only on $\varepsilon$ the result follows.
\end{proof}
\subsection{PageRank}
\begin{lemma}
\label{lm:prweightcr}
With probability at least $3/4$ over the construction of the graph stream, the
PageRank vector with reset probability $\alpha$ has support at least $(1 -
6/\beta)(1 - \alpha)^3$ on $\set{d_{x_I},e_{x_I}}$.
\end{lemma}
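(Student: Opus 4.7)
The plan is to use the expansion of the PageRank vector as a mixture over walk lengths drawn from a geometric distribution and then leverage the structural result of Lemma~\ref{lm:digraphcorr}. Recall that
\[
p_\alpha = \sum_{k \geq 0} \alpha (1-\alpha)^k M^k \cdot \frac{\mathbbm{1}}{n},
\]
so that $p_\alpha(T) = \frac{1}{N} \sum_{u \in V} \Pb{X_K \in T \mid X_0 = u}$ where $N = \beta n + 1 + n + 4$ is the total number of vertices, $K \sim \text{Geom}(\alpha)$ (with $\Pb{K = k} = \alpha(1-\alpha)^k$), and $X_0, X_1, \ldots$ is a simple random walk on the digraph.

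I would condition on the event $\Ev$ of Lemma~\ref{lm:digraphcorr}, which has probability at least $3/4$. Under $\Ev$, for every $i \in [\beta n]$ the unique length-$3$ walk from $a_i$ is $a_i \to b \to c_I \to d_{x_I}$; and since the fixed edges make $\{d_{x_I}, e_{x_I}\}$ a closed $2$-cycle with no out-edges to the rest of the graph, the walk remains in $\{d_{x_I}, e_{x_I}\}$ for every $k \geq 3$. Therefore
\[
\Pb{X_K \in \{d_{x_I}, e_{x_I}\} \mid X_0 = a_i} \;\geq\; \Pb{K \geq 3} \;=\; (1-\alpha)^3.
\]

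Summing only over the $\beta n$ starting vertices in $\{a_i\}$ and discarding all other contributions gives
\[
p_\alpha(\{d_{x_I}, e_{x_I}\}) \;\geq\; \frac{\beta n}{N} (1-\alpha)^3 \;=\; \frac{1}{1 + 1/\beta + 5/(\beta n)} (1-\alpha)^3 \;\geq\; (1 - 6/\beta)(1-\alpha)^3,
\]
where the last inequality mirrors the bound computed in the proof of Lemma~\ref{lm:rwtoprot}. This yields the conclusion conditional on $\Ev$, and $\Pb{\Ev} \geq 3/4$ finishes the statement.

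There is no real obstacle here; the only thing requiring mild care is the indexing of the geometric distribution (i.e.\ that $\Pb{K \geq 3} = (1-\alpha)^3$ rather than, say, $(1-\alpha)^2$), which matches the convention used implicitly in the expansion $\sum_{k \geq 0} \alpha(1-\alpha)^k M^k$. Everything else is a direct transcription of the path-absorption argument already established for the random-walk version into the Geom($\alpha$)-mixture formulation of PageRank.
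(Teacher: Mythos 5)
Your proof is correct and rests on the same two pillars as the paper's: Lemma~\ref{lm:digraphcorr} for the structural absorption into $\set{d_{x_I},e_{x_I}}$ after three steps, and the count $\beta n/(\beta n + n + 5) \ge 1 - 6/\beta$ for the fraction of good starting vertices. The only difference is in how you access PageRank: you use the geometric-mixture identity $p_\alpha = \sum_{k\ge 0}\alpha(1-\alpha)^k M^k \mathbbm{1}/N$ (which the paper derives in the introduction) and bound $\Pr[K \ge 3] = (1-\alpha)^3$ directly, whereas the paper works with the stationary distribution of the reset Markov chain, conditions on ``a jump occurred, but not in the last three steps, and landed in $(a_i)$,'' and takes a $k\to\infty$ limit. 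The two are equivalent, and your version is arguably the cleaner of the two since it avoids the limiting argument; your bookkeeping of the geometric indexing and of the vertex count $N = \beta n + 1 + n + 4$ is also accurate (the paper's proof has a harmless typo writing $+2$ there).
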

\begin{proof}
Recall that the PageRank vector is the stationary distribution of the Markov
chain in which each step is a random walk step with probability $(1 - \alpha)$
and a jump to a uniformly random vertex with probability $\alpha$. By
Lemma~\ref{lm:digraphcorr}, with probability $3/4$ over the graph generation
process, every length-$3$ or greater walk from a vertex in $(a_i)_{i=1}^{\beta
n}$ reaches $\set{d_{x_I},e_{x_I}}$, and as this set has no out-edges, it stays
there.

Suppose this holds. Then starting from any point, after $k \ge 3$ steps on the chain, 
sufficient criteria to be in $\set{d_{x_I}, e_{x_I}}$ are that
\begin{enumerate}[label=\textbf{(\arabic*)}]
\item There has been at least one jump, but not in the last $3$ steps. 
\label{en:prjumphap}
\item That jump went to a vector in $(a_i)_{i=1}^{\beta n}$.
\label{en:prjumpright}
\end{enumerate}
\noindent
Criterion~\ref{en:prjumphap} will hold with probability at least \[
(1 - (1 - \alpha)^{k-3})(1 - \alpha)^3
\]
which converges to $(1 - \alpha)^3$ as $k \rightarrow \infty$.
Conditioned on this, criterion~\ref{en:prjumpright} will hold with probability
$\frac{\beta n}{ \beta n + 1 + n + 2} \ge 1 - 6/\beta$. So as $k \rightarrow
\infty$, the probability that a walk on the chain is in $\set{d_{x_I},
e_{x_I}}$ converges to at least
\[
(1 - 6/\beta)(1 - \alpha)^3
\] completing the proof.
\end{proof}
\begin{lemma}
\label{lm:prtoprot}
Let $\Ac$ be any algorithm using $S$ space such that, when given a sample from
the distribution on graph streams above, it with probability $1 - \delta$
outputs an $\varepsilon$-additive approximation to the PageRank of the set
$\set{d_0,e_0}$ with reset probability $\alpha$, where $\varepsilon < (1 -
6/\beta)(1 - \alpha)^3 - 1/2$.  Then there is a protocol for $\ind_n$ on
uniform inputs that uses $S$ space and succeeds with probability $3/4 -
\delta$.
\end{lemma}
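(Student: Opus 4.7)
\begin{proofof}{Lemma~\ref{lm:prtoprot} (proof sketch)}
The plan is to mirror the proof of Lemma~\ref{lm:rwtoprot}, but replace the use of the walk endpoint with a threshold test on the estimated PageRank. Specifically, Alice and Bob first use their $\ind_n$ input together with shared randomness to construct the random-order graph stream from Section~\ref{sec:harddigraphdist}, which by Lemma~\ref{lm:digraphisrand} is correctly distributed. They then simulate $\Ac$ on this stream; Alice processes her prefix of the stream and forwards the state of $\Ac$ (at most $S$ bits) to Bob, who continues the simulation on his suffix, obtaining an estimate $\wh{p}$ of the PageRank mass on $\set{d_0, e_0}$. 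Bob then outputs $0$ if $\wh{p} > 1/2$ and $1$ otherwise.

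For correctness, let $\Ec$ be the joint event that (i) the conclusion of Lemma~\ref{lm:prweightcr} holds, so that the PageRank mass on $\set{d_{x_I}, e_{x_I}}$ is at least $(1-6/\beta)(1-\alpha)^3$, and (ii) $\Ac$ succeeds in producing an $\varepsilon$-additive approximation. By a union bound, $\Pb{\Ec} \geq 3/4 - \delta$. Under $\Ec$, if $x_I = 0$ then $\wh{p} \geq (1-6/\beta)(1-\alpha)^3 - \varepsilon > 1/2$ by the assumed upper bound on $\varepsilon$, and Bob answers $0$ correctly. If $x_I = 1$ then the PageRank on $\set{d_0, e_0}$ is at most $1 - (1-6/\beta)(1-\alpha)^3$, so $\wh{p} \leq 1 - (1-6/\beta)(1-\alpha)^3 + \varepsilon < 1/2$ (again using $\varepsilon < (1-6/\beta)(1-\alpha)^3 - 1/2$), and Bob answers $1$ correctly. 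Thus Bob is correct on $\Ec$, giving success probability at least $3/4 - \delta$.

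The only subtlety is that the construction of the stream uses public randomness (the permutation $\pi$, the split $J$, the random string $y$, and the positions of the fixed edges), but since we work against a fixed (uniform) input distribution for $\ind_n$, we can fix the public coins to the best setting and obtain a private-coin protocol of the same communication cost, just as in the proof of Lemma~\ref{lm:rwtoprot}. The main ``obstacle,'' which is really just bookkeeping, is verifying that Alice's portion of the stream is fully determined by $x$ together with the public randomness and that Bob's portion is fully determined by $I$ together with the public randomness and his private string $y$, so that the entire simulation requires only a single $S$-bit message from Alice to Bob; this follows directly from the construction in Section~\ref{sec:harddigraphdist}.
\end{proofof}
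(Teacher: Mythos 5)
Your proposal is correct and follows essentially the same route as the paper: construct the stream, simulate $\Ac$ with a single $S$-bit message, threshold the PageRank estimate at $1/2$, and fix the public coins at the end. Your case analysis for $x_I=0$ versus $x_I=1$ is in fact slightly more explicit than the paper's, which simply cites Lemma~\ref{lm:prweightcr} and the bound on $\varepsilon$, but the argument is the same.
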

\begin{proof}
The protocol will be as follows:
\begin{enumerate}
\item Alice and Bob use their $\ind$ input to construct a corresponding random
graph stream.
\item Using $S$ bits of communication from Alice to Bob, they run $\Ac$ on the
stream, estimating the PageRank of $\set{d_0,e_0}$.
\item If the sample output by $\Ac$ is at least $1/2$, Bob outputs $0$.
Otherwise he outputs $1$.
\end{enumerate}
Now, by Lemma~\ref{lm:prweightcr}, with probability $3/4$, the PageRank vector
of the graph has support at least \[
(1 - 6/\beta)(1 - \alpha)^3
\]
on $\set{d_{x_I},e_{x_I}}$ and so with probability $3/4 - \delta$, the algorithm
will report the correct answer.

The construction of the stream itself is done entirely with public randomness,
so this gives an $S$ bit public randomness protocol. As we are working with a
fixed input distribution, this means that there is also an $S$ bit private
randomness protocol, by fixing whichever set of public random bits maximizes
the probability of success over the uniform distribution.
\end{proof}

We are now ready to prove Theorem~\ref{thm:dgprlb}.
\dgprlb*
\begin{proof}
Set $\beta = \frac{12}{(1 - \alpha)^3 - \frac{1}{2} - \varepsilon}$.  Then, for
each $n$, we can construct the family given by the stream distribution
described in section~\ref{sec:harddigraphdist} from $\ind_{n'}$, where $n' =
\bOm{n}$ while keeping the total number of vertices and edges below $n$. By
Lemma~\ref{lm:prtoprot}, as \[
(1 - 6/\beta)(1 - \alpha)^3 - 1/2 \ge (1 - \alpha)^3 - 6/\beta - 1/2  \ge
\frac{1}{2}((1 - \alpha)^3 - 1/2 + \varepsilon) > \varepsilon
\]
this gives a protocol for $\ind_{n'}$ that succeeds with probability \[
3/4 - \delta > 1/2
\]
and so by Lemma~\ref{lm:indistlb} it uses $\bOm{n'} = \bOm{n}$ space, 
where the constant depends only on $\alpha$, $\varepsilon$, and $\beta$. So as $\beta$
depends only on $\alpha$ and $\varepsilon$ the result follows.
\end{proof}

\section*{Acknowledgements}
Michael Kapralov was supported by ERC Starting Grant 759471.

John Kallaugher and Eric Price were supported by NSF Award CCF-1751040
(CAREER). 

John was also supported by Laboratory Directed Research and Development program
at Sandia National Laboratories, a multimission laboratory managed and operated
by National Technology and Engineering Solutions of Sandia, LLC., a wholly
owned subsidiary of Honeywell International, Inc., for the U.S. Department of
Energy’s National Nuclear Security Administration under contract DE-NA-0003525.
Also supported by the U.S. Department of Energy, Office of Science, Office of
Advanced Scientific Computing Research, Accelerated Research in Quantum
Computing program.

\bibliographystyle{alpha}
\bibliography{paper}

\newpage
\begin{appendix}
\section{Omitted Proofs of Technical Results}
\label{app:technicalproofs}
\average*
\begin{proof}
The base case is given by $k=0$: we have $\expect_{v\sim \mathcal{U}(V)} [p^k_v(u)]=1/n\leq d(u)/n$. 
The inductive step is given by:
\begin{equation*}
\begin{split}
\E[v\sim \Uc(V)] {p^{k+1}_v(u)}&= \sum_{w \in \delta(u)} \frac{1}{d(w)} \E[v \sim
\Uc(v)]{p_v^k(w)}\\
&\leq \sum_{w \in \delta(u)} \frac{1}{d(w)} \cdot  \frac{d(w)}{n}\\
&= \frac{d(u)}{n}.
\end{split}
\end{equation*}
\end{proof}

We will require Bennett's inequality.
\begin{theorem}[Bennett's inequality, Theorem~2.9 in~\cite{DBLP:books/daglib/0035704}]\label{thm:bennett}
Let $X_1,\ldots, X_n$ be independent random variables with finite variance such that $X_i\leq b$ for some $b>0$ almost surely for all $i\in [n]$. Let 
$$
S=\sum_{i\in [n]}\paren*{X_i-\expect[X_i]}
$$
and let $v=\sum_{i\in [n]} \expect[X_i^2]$. Then for any $t>0$ 
$$
\Pr[S\geq t]\leq \exp\left(-\frac{v}{b^2}h\left(\frac{bt}{v}\right)\right),
$$
where $h(u)=(1+u)\ln (1+u)-u$ for $u>0$.
\end{theorem}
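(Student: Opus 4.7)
The plan is to prove Bennett's inequality using the classical Cram\'er--Chernoff exponential moment method. As a first step, I would fix an arbitrary $\lambda>0$ and apply Markov's inequality to $e^{\lambda S}$ to obtain
\[
\Pr[S\geq t]\leq e^{-\lambda t}\cdot \E{e^{\lambda S}}.
\]
By independence of the $X_i$ (hence of the centered variables $Y_i:=X_i-\E{X_i}$), the moment generating function factorizes as $\E{e^{\lambda S}}=\prod_{i=1}^n \E{e^{\lambda Y_i}}$, so the work reduces to controlling each factor in terms of $\E{X_i^2}$.

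The main technical step is to establish the per-variable MGF bound: for any $Y$ with $\E{Y}=0$ and $Y\leq b$ almost surely,
\[
\E{e^{\lambda Y}}\leq \exp\!\paren*{(e^{\lambda b}-1-\lambda b)\cdot \E{Y^2}/b^2}.
\]
I would derive this from the pointwise inequality $e^{\lambda y}\leq 1+\lambda y+(e^{\lambda b}-1-\lambda b)(y/b)^2$ valid for $y\leq b$, which follows from the observation that $z\mapsto (e^z-1-z)/z^2$ is nondecreasing (its power series $\sum_{k\geq 0}\tfrac{z^k}{(k+2)!}$ has only nonnegative coefficients). Taking expectations, using $\E{Y}=0$, and then the bound $1+x\leq e^x$ gives the claimed inequality; since $\E{Y_i^2}=\var(X_i)\leq \E{X_i^2}$, summing over $i$ yields
\[
\log \E{e^{\lambda S}}\leq (e^{\lambda b}-1-\lambda b)\cdot \frac{v}{b^2}.
\]

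The remaining step is to optimize over $\lambda>0$. Combining with the initial Markov bound,
\[
\Pr[S\geq t]\leq \exp\!\paren*{-\lambda t+(e^{\lambda b}-1-\lambda b)\cdot v/b^2},
\]
and a routine differentiation shows that the minimizing choice is $\lambda^*=\tfrac{1}{b}\log(1+bt/v)$. Substituting $\lambda^*$ back collapses the exponent to exactly $-\tfrac{v}{b^2}\,h\!\paren*{\tfrac{bt}{v}}$ with $h(u)=(1+u)\log(1+u)-u$, which gives Bennett's inequality.

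The main obstacle is the per-variable MGF bound: once the pointwise inequality $(e^{\lambda y}-1-\lambda y)/y^2\leq (e^{\lambda b}-1-\lambda b)/b^2$ for $y\leq b$ is in hand, everything else is bookkeeping and a one-variable optimization. Establishing this inequality cleanly via the monotonicity of $(e^z-1-z)/z^2$ (equivalently, the positivity of the Taylor coefficients of that function) is the one place where some care is needed; after that, the proof is standard.
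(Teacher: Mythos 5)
First, a point of comparison: the paper does not prove this statement at all---it is quoted verbatim as Theorem~2.9 of Boucheron--Lugosi--Massart and used as a black box. Your proposal is the standard Cram\'er--Chernoff proof of Bennett's inequality, i.e.\ essentially the textbook argument the paper is citing, and the skeleton (Markov on the MGF, a per-variable exponential-moment bound, optimization at $\lambda^*=\tfrac1b\log(1+bt/v)$) is right.

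There is, however, one genuine gap in the execution: the centering step. You apply the per-variable lemma to $Y_i:=X_i-\expect[X_i]$ under the hypothesis ``$Y\le b$ a.s.'', but the theorem only assumes $X_i\le b$, so $Y_i\le b-\expect[X_i]$, which can exceed $b$ whenever $\expect[X_i]<0$ (e.g.\ $X_i$ uniform on $\{-10,0\}$ with $b=1$ gives $Y_i=5$ with positive probability). As written, the key MGF bound is invoked on variables that need not satisfy its hypotheses. The standard repair---and the reason the theorem is naturally stated with $v=\sum_i\expect[X_i^2]$ rather than with variances---is to apply the pointwise inequality to $\lambda X_i$ itself: with $\phi(z)=e^z-1-z$, one gets $\expect[e^{\lambda X_i}]\le 1+\lambda\expect[X_i]+\phi(\lambda b)\expect[X_i^2]/b^2\le \exp\paren*{\lambda\expect[X_i]+\phi(\lambda b)\expect[X_i^2]/b^2}$, and then multiplying by $e^{-\lambda\expect[X_i]}$ gives $\expect[e^{\lambda Y_i}]\le\exp\paren*{\phi(\lambda b)\expect[X_i^2]/b^2}$ directly, with no constraint on the sign of $\expect[X_i]$ and no detour through $\var(X_i)\le\expect[X_i^2]$. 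A second, more minor issue: you justify the monotonicity of $z\mapsto(e^z-1-z)/z^2$ by the nonnegativity of its Taylor coefficients, which only yields monotonicity on $[0,\infty)$; since the relevant arguments $\lambda y$ (or $\lambda X_i$) can be negative, you need monotonicity on all of $\Rbb$, which is cleanest via the representation $(e^z-1-z)/z^2=\int_0^1(1-s)e^{sz}\,ds$. With these two fixes the argument is complete and matches the cited source.
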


In applying Bennett's inequality, we will need the following two properties of $h$:
\begin{claim}\label{cl:h-mon}
The function $v\cdot h\left(\frac{1}{v}\right)$ is monotone decreasing in $v$, where $h(u)=(1+u)\ln (1+u)-u$ for $u>0$.
\end{claim}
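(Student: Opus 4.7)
The plan is to substitute $u = 1/v$ into the definition of $h$ and simplify, then differentiate directly. Specifically, I would first write
\[
v \cdot h(1/v) \;=\; v\left[(1 + 1/v)\ln(1 + 1/v) - 1/v\right] \;=\; (v+1)\ln\!\left(1 + \tfrac{1}{v}\right) - 1.
\]
Call this $f(v)$; the goal is then to show $f'(v) < 0$ for all $v > 0$.

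Next, I would compute the derivative. Since $\frac{d}{dv}\ln(1 + 1/v) = -\frac{1}{v(v+1)}$, we get
\[
f'(v) \;=\; \ln\!\left(1 + \tfrac{1}{v}\right) + (v+1)\cdot\frac{-1}{v(v+1)} \;=\; \ln\!\left(1 + \tfrac{1}{v}\right) - \frac{1}{v}.
\]
The final step is to invoke the elementary inequality $\ln(1+x) < x$ for all $x > 0$ (which follows from concavity of $\ln$, or equivalently from the Taylor expansion $\ln(1+x) = x - x^2/2 + \cdots$). Applying this with $x = 1/v > 0$ yields $f'(v) < 0$, completing the proof.

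There is essentially no obstacle here: the result reduces to a direct derivative computation combined with a standard logarithmic inequality. The only minor care needed is in simplifying the derivative so that the awkward $(v+1)$ factor cancels cleanly, which happens automatically because of the $\frac{1}{v(v+1)}$ factor arising from differentiating $\ln(1 + 1/v)$.
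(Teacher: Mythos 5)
Your proposal is correct and matches the paper's proof essentially verbatim: both simplify $v\cdot h(1/v)$ to $(v+1)\ln(1+1/v)-1$, differentiate to obtain $\ln(1+1/v)-1/v$, and conclude via the inequality $\ln(1+x)\leq x$. No issues.
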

\begin{proof}
\begin{equation*}
\begin{split}
\frac{d}{dv}\left(v\cdot h\left(\frac{1}{v}\right)\right)&=\frac{d}{dv}\left(v\cdot \left[\left(1+\frac{1}{v}\right)\ln \left(1+\frac{1}{v}\right)-\frac{1}{v}\right]\right)\\
&=\frac{d}{dv}\left((v+1)\ln \left(1+\frac{1}{v}\right)-1\right)\\
&=\frac{d}{dv}\left((v+1)(\ln (v+1)-\ln v)-1\right)\\
&=\ln \left(v+1\right)-\ln v + 1-1-1/v\\
&=\ln \left(1+1/v\right)-1/v\\
&\leq 0
\end{split}
\end{equation*}
since $\ln (1+x)\leq x$ for all $x\in (-1, +\infty)$.
\end{proof}

\begin{claim}\label{cl:h-lb}
For every $u\geq 0$ one has $h(u)=(1+u)\ln (1+u)-u\geq \frac{1}{2} u\ln u$.
\end{claim}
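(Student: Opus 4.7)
The plan is to define
\[
  f(u) := h(u) - \tfrac{1}{2} u \ln u = (1+u)\ln(1+u) - u - \tfrac{1}{2} u \ln u
\]
for $u > 0$, extend it continuously to $u = 0$, and show $f(u) \geq 0$ on $[0, \infty)$ by elementary calculus. For the boundary, since $u \ln u \to 0$ as $u \to 0^+$ and $h(0) = 0$, we have $\lim_{u \to 0^+} f(u) = 0$; so it suffices to prove that $f$ is non-decreasing on $(0, \infty)$.

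Differentiating, for $u > 0$ we have
\[
  f'(u) \;=\; \ln(1+u) \;-\; \tfrac{1}{2}\ln u \;-\; \tfrac{1}{2} \;=\; \tfrac{1}{2}\ln\frac{(1+u)^2}{e\,u}.
\]
Thus $f'(u) \geq 0$ is equivalent to the pointwise inequality $(1+u)^2 \geq eu$ for all $u > 0$. Dividing through by $u$, this is
\[
  u + 2 + \tfrac{1}{u} \;\geq\; e,
\]
which follows from AM--GM: $u + \tfrac{1}{u} \geq 2$, so the left-hand side is at least $4 > e$. Hence $f'(u) > 0$ on $(0, \infty)$, so $f$ is strictly increasing there, and combined with $f(0^+) = 0$ we get $f(u) \geq 0$ for all $u \geq 0$, which is exactly the claim.

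I do not anticipate any real obstacle; the statement is an inequality between two explicit smooth functions that agree in the limit $u \to 0^+$. The only mild subtlety worth noting is that for $u \in (0,1)$ the right-hand side $\tfrac{1}{2} u \ln u$ is negative while $h(u) \geq 0$ (since $h$ is convex on $[0,\infty)$ with $h(0) = h'(0) = 0$), so the binding regime is really $u \geq 1$; but the monotonicity argument above handles both ranges uniformly without requiring a case split.
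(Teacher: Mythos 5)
Your proof is correct and follows essentially the same route as the paper: both define $f(u)=h(u)-\tfrac12 u\ln u$, note that $f$ vanishes at $u=0$, and reduce the claim to showing $f'(u)=\ln(1+u)-\tfrac12\ln u-\tfrac12\ge 0$ on $(0,\infty)$. Your finish of that last step is cleaner — rewriting $f'$ as $\tfrac12\ln\frac{(1+u)^2}{eu}$ and noting $u+2+\tfrac1u\ge 4>e$ by AM--GM replaces the paper's second differentiation and critical-point analysis (which locates the minimum of $f'$ at $u=1$ and checks $\ln 2-\tfrac12>0$).
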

\begin{proof}
When $u = 0$ both sides are equal, so it will suffice to verify that
$$
\frac{d}{du}\left((1+u)\ln (1+u)-u-\frac{1}{2} u\ln u\right)=-\frac1{2}-\frac1{2}\ln u+\ln(1+u),
$$ 
is nonnegative for all $u\geq 0$. The latter claim can be verified by observing that the function on the rhs above goes to $+\infty$ as $u\to 0+$ and as $u\to +\infty$, and its derivative
$$
\frac{d}{du}\left(-\frac1{2}-\frac1{2}\ln u+\ln(1+u)\right)=-\frac1{2u}+\frac1{1+u}
$$
has exactly one zero at $u=1$, where $h(u)$ is positive:
$\left.-\frac{1}{2}-\frac{1}{2}\ln u+\ln(1+u)\right|_{u=1}=-\frac{1}{2}+\ln 2>0$.
\end{proof}

\concentration*
\begin{proof}
We have $\alpha_i X_i\leq 1$ with probability $1$, and 
$$
v=\sum_i \expect[(\alpha_i X_i)^2]\leq  \sum_i \expect[\alpha_i X_i]=\eta d.
$$
Noting that the function $v h(\frac{x}{v})$ is monotone decreasing in $v$ for
any $x \geq 0$ (by applying Claim~\ref{cl:h-mon} after rescaling $v$), we get,
letting $t=(1/2-\eta) d$ in Bennett's inequality 
(Theorem~\ref{thm:bennett}) with $b=1$, $t = (1/2-\eta)d$, and $v$,
\begin{equation*}
\begin{split}
\Pr\left[\sum_i \alpha_i X_i\geq d/2\right]&= \Pr\left[\sum_i \alpha_i
X_i-\expect[\alpha_i X_i]\geq (1/2-\eta)d\right]\\
&\leq \exp\paren*{-v\cdot h\paren*{\frac{(1/2-\eta) d
}{v}}}\\
&\leq \exp\paren*{-\eta d\cdot h\paren*{\frac{(1/2-\eta) d
}{\eta d}}}\\
&= \exp\left(-\eta d\cdot h\left(\frac{1/2-\eta}{\eta}\right)\right)\\
\end{split}
\end{equation*}
Noting that $h(u)\geq \frac{1}{2}u\ln u$ for all $u\geq 0$ by Claim~\ref{cl:h-lb}, we get
\begin{equation*}
\begin{split}
\Pr\left[Y\geq d/2\right]&\leq \exp\left(-\frac{1}{2}\eta d\cdot
\frac{1/2-\eta}{\eta}\cdot \ln\left((1/2-\eta) /\eta\right)\right)\\
&\leq \exp\left(-(d/5) \cdot \ln(1/3\eta)\right)\\
&\leq (3\eta)^{d/5}\\
\end{split}
\end{equation*}

\end{proof}

\productlm*
\begin{proof}
  We expand the product and show that each monomial is small in
  expectation.  There are $3^k$ terms, each of the form
  \[
    \expect\left[\prod_{i \in S_1} Z_i \prod_{j \in S_2} q_j E_j \right]
  \]
  for disjoint sets $S_1, S_2 \subseteq [k]$. First, $S_1 = S_2 = \emptyset$ corresponds to the leading $1$ term. Next, when $S_1 \neq \emptyset$ but $S_2 = \emptyset$, we have for any
  $i^* \in S_1$ that
  \[
    \expect\left[Z_{i^*} \cdot \prod_{i \in S_1 \setminus \{i^*\}} Z_i\right] \leq \expect[Z_{i^*}] = \wt{\eta}.
  \]
  Finally, if $S_2 \neq \emptyset$, let $j^* \in S_2$ be of maximal
  $q_{j^*}$, and define $q = q_{j^*}$.  Then
  \[
    \expect\left[\prod_{i \in S_1} Z_i \prod_{j \in S_2} q_j E_j \right] \leq
    \left(\prod_{j \in S_2} q_j\right) \Pr[E_{j^*}] \leq q^k \cdot
    \wt{\eta}^{q/5}
  \]
We now upper bound the last term on the rhs. 
The function
\begin{align*}
  f(q) =  q^k\cdot \wt{\eta}^{q/5}
\end{align*}
has, for all positive integers $q$,
\begin{align*}
  f(q+1)/f(q) = (1 + 1/q)^k \wt{\eta}^{1/5} \leq 2^k \wt{\eta}^{1/5} < (e^{5k} \wt{\eta})^{1/5} \leq 1.
\end{align*}
Therefore $f(q)$ is maximized (over positive integers) when $q = 1$ and 
$f(q) = \wt{\eta}^{1/5}$.  Therefore every term other than the leading $1$ is
at most $\wt{\eta}^{1/5}$.  There are $3^k$ terms, giving the result.
\end{proof}

\section{Generating Timestamps in the Stream}
\label{app:streamstamps}
In this section we show how an algorithm can generate $n$ timestamps in a
streaming manner, corresponding to drawing $n$ uniform random variables from
$(0,1)$ and then presenting each in order with $\poly(1/n)$ precision, using
$\bO{\log n}$ bits of space.

Let $(\Xb_i)_{i=1}^n$ denote $n$ variables drawn independently from
$\Uc(0,1)$ and then ordered so that $\Xb_i \le \Xb_{i+1}$ for all $i \in
\brac{n-1}$. By standard results on the order statistics (see
e.g.\ page 17 of~\cite{DN03}), the distribution of $(\Xb_i)_{i=j+1}^n$ depends
only on $\Xb_j$, and in particular they are distributed as drawing $(n-j)$
samples from $(\Xb_j, 1)$.

So then, to generate $(\Xb_i)_{i=1}^n$ with $\poly(1/n)$ precision in the
stream it will suffice to, at each step $i+1$, use $\Xb_{i}$ to generate
$\Xb_{i+1}$ (as sampling from the minimum of $k$ random variables to
$\poly(1/n)$ precision can be done in $\bO{\log n}$ space). We will only need
to store one previous variable at a time, to $\poly(1/n)$ precision, and so
this algorithm will require only $\bO{\log n}$ space.

\section{Proofs of Corollary~\ref{cor:rp} and Corollary~\ref{cor:pagerank}}\label{app:pagerank}

\begin{proofof}{Corollary~\ref{cor:rp}} 
 By Theorem~\ref{thm:main} the output of \walksreset{} is $2^{-\Omega(D)}$-close in total variation distance to $a$ samples of $\e$-approximate samples of the $k$-step walk in $G$. We first analyze the algorithm assuming full independence of $(v^i)_{i\in [a]}$. Under this assumption we have
$$
(1-O(\e))\cdot \text{rp}(G)\leq \expect_{\mathbf{v}^i}[v^i_k=v^i_0]\leq (1+O(\e))\cdot \text{rp}(G), 
$$
and get by Chebyshev's inequality, using that $a=\frac{D}{\e^2}$ for a sufficiently large constant $D>0$,  
$$
\Pr\left[\left|\frac1{a}\sum_{i\in [a]} \mathbb{I}[v^i_k=v^i_0]-\text{rp}(G)\right|>\e\right]\leq 99/100.
$$
Since the true $(v^i)_{i\in [a]}$ are $2^{-\Omega(D)}$-close to independent, the success probability is lower bounded by $9/10$ as long as $D$ is larger than an absolute constant, as required.
\end{proofof}

\begin{proofof}{Corollary~\ref{cor:pagerank}}
We assume that the walks output by the invocation of \approxpagerank (Algorithm~\ref{alg:approx-pagerank}) are independent, and account for the $2^{-\Omega(D)}$ additional term in the failure probability due to the total variation distance to independence at the end of the proof.

Define the truncated PageRank vector $\bar p_\alpha$ by 
\begin{equation*}
\begin{split}
\bar p_{\alpha}&=\sum_{k=0}^{\lceil \frac{2}{\alpha}\log 1/\e\rceil} \alpha (1-\alpha)^k M^k\cdot \frac{\mathbbm{1}}{n},
\end{split}
\end{equation*}
and note that under the full independence assumption of the walks we get that the estimator $\wh{p}$ computed by Algorithm~\ref{alg:approx-pagerank} satisfies
$$
\expect[\wh{p}]=\bar p_\alpha(T).
$$

We thus get by Chebyshev's inequality, using that $a=\frac{D}{\e^2}$ for a sufficiently large constant $D>0$,  
$$
\Pr\left[\left|\wh{p}-\bar p_\alpha(T)\right|>\e/2\right]\leq 99/100.
$$

We now note that
\begin{equation*}
\begin{split}
|p_\alpha(T)-\bar p_\alpha(T)|&\leq \|p_\alpha-\bar p_\alpha\|_1\\
&=\left\|\sum_{k>\frac{2}{\alpha}\log 1/\e} \alpha (1-\alpha)^k M^k\cdot \frac{\mathbbm{1}}{n}\right\|_1 \\
&\leq (1-\alpha)^{\frac{2}{\alpha}\log 1/\e}\\
&\leq \e/2\\
\end{split}
\end{equation*}
since $\e<1/2$ by assumption of the corollary. 

It remains to note that since the walks output by the invocation of \approxpagerank are $2^{-\Omega(D)}$-close to independent, we have by combining the above two bounds that $|\wh{p}-p_\alpha(T)|\leq \e$ with probability at least $99/100-2^{-\Omega(D)}\geq 9/10$ as long as $D$ is larger than an absolute constant, as required.

Finally, we verify that the preconditions of Theorem~\ref{thm:main} are satisfied. We invoke Theorem~\ref{thm:main} with $k=\left\lceil\frac{2}{\alpha} \log(1/\e)\right\rceil$. Letting $c'$ be the constant from Theorem~\ref{thm:main}, we get, using the assumption that $\e=2^{-o(\sqrt{\log n})}$, that 
$$
\min\left\{\frac{c'\log n}{\log (1/\e)}, \sqrt{c'\log n}\right\}=\sqrt{c'\log n}.
$$
Thus,  using the assumption of the corollary that $\frac1{\alpha}\leq \frac{\sqrt{\log n}}{4\log (1/\e)}$, we get
$$
k=\left\lceil \frac{2}{\alpha} \log(1/\e)\right\rceil\leq \sqrt{c'\log n}, 
$$
as required.
\end{proofof}
\section{Proof of Lemma~\ref{lm:indistlb}}
\label{app:indlb}
In this section we give a proof of Lemma~\ref{lm:indistlb} through a standard
information-theoretic argument. We will need the following definitions and
results from information theory (see, e.g.~\cite{CT91}).

For random variables $X$, $Y$, $Z$, with $p_x = \Pb{X = x}$, define
\begin{description}[font=\normalfont\itshape]
\item[Entropy] $H(X) = \E[X]{-\log p_X}$. For $p \in \brac{0,1}$ we write
$H(p)$ for the entropy of a Bernoulli random variable with parameter $p$.
\item[Conditional Entropy] $H(X|Y) = \E[Y]{\E[X]{-\log p_X | Y}}$.
\item[Mutual Information] $I(X;Y) = H(X) - H(X|Y) = H(Y) - H(Y|X)$
\item[Conditional Mutual Information] $I(X;Y|Z) = H(X|Z) - H(X|Y,Z)$
\end{description}
We will use the fact that $H(p)$ is concave and uniquely maximized at $p =
1/2$, the fact that $I((X_i)_{i=1}^n; Y) = \sum_{i=1}^nI(X_i;Y|(X_j)_{j=1}^i)$,
and Fano's data processing inequality.
\begin{theorem}[Fano's inequality]
Let $X$, $Y$ be discrete random variables, and $P_e = \Pb{g(Y) \not= X}$, where
$g$ is some function on the support of $Y$ and $X$ is supported on
$\mathscr{X}$. Then \[
H(P_e) + P_e(\log\abs{\mathscr{X}} - 1) \ge H(X|Y)
\]
\end{theorem}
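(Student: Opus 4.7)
The plan is to prove a standard information-theoretic lower bound on $\ind_n$ under the uniform distribution, using the tools listed at the end of the appendix (mutual information, chain rule, and Fano's inequality). Fix a (possibly randomized) protocol that succeeds with probability $1/2+\varepsilon$, let $X=(X_1,\dots,X_n)$ be Alice's uniformly random input, $I$ be Bob's uniformly random index (independent of $X$ and of the protocol's private randomness), and $M$ be Alice's message. The goal is to show $\E[|M|]=\Omega(n)$.

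First I would apply Fano's inequality to the predictor for $X_I$ from $(M,I)$. Bob's guess is a function of $(M,I)$, and its error probability is at most $1/2-\varepsilon$; since $X_I$ is binary, the $\log(|\mathscr{X}|-1)$ term in Fano vanishes, and we get
\[
H(X_I\mid M,I)\le H(1/2-\varepsilon).
\]
Since $X_I$ is a uniform bit marginally, $H(X_I\mid I)=1$, so
\[
I(X_I;M\mid I)\ \ge\ 1-H(1/2-\varepsilon)\ =:\ \delta(\varepsilon),
\]
where $\delta(\varepsilon)>0$ is a constant depending only on $\varepsilon$.

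Next I would unpack the average over $I$. Because $I$ is independent of $(X,M)$, we have $I(X_I;M\mid I)=\tfrac1n\sum_{i=1}^n I(X_i;M)$, so
\[
\sum_{i=1}^n I(X_i;M)\ \ge\ n\,\delta(\varepsilon).
\]
Now I would use the independence of the coordinates of $X$ together with the chain rule to pass to $I(X;M)$: conditioning on $X_{<i}$ does not decrease the mutual information $I(X_i;M)$ when $X_i$ is independent of $X_{<i}$, hence $I(X_i;M)\le I(X_i;M\mid X_{<i})$, and summing gives $\sum_i I(X_i;M)\le I(X;M)$. Finally, $I(X;M)\le H(M)\le \E[|M|]$, since the binary-length of $M$ upper bounds its entropy.

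Combining these inequalities yields $\E[|M|]\ge n\,\delta(\varepsilon)=\Omega(n)$, which is exactly the claim. There is no real obstacle here; the only place one has to be careful is the independence step $I(X_i;M\mid I=i)=I(X_i;M)$ and the monotonicity $I(X_i;M)\le I(X_i;M\mid X_{<i})$, both of which follow directly from the assumed independence of $I$ from $(X,M)$ and of the $X_i$'s from each other. The argument also holds for public-coin protocols by averaging over the shared randomness, since the distributional lower bound on $\E[|M|]$ passes through any fixing of public coins.
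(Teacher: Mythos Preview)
Your proposal does not prove the stated theorem. The statement you were given is Fano's inequality itself, but your argument is a proof of Lemma~\ref{lm:indistlb} (the $\Omega(n)$ lower bound for $\ind_n$ under the uniform distribution), which \emph{uses} Fano's inequality as a black box. The paper does not prove Fano's inequality at all; it is quoted from~\cite{CT91} as a standard fact and then applied in the proof of Lemma~\ref{lm:indistlb}. So there is no ``paper's own proof'' of Fano to compare against, and in any case your write-up nowhere establishes the claimed bound $H(P_e)+P_e(\log|\mathscr{X}|-1)\ge H(X\mid Y)$ for general discrete $X,Y$ and $g$.

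If your intent was actually to prove Lemma~\ref{lm:indistlb}, then your argument is essentially correct and is a mild reorganization of the paper's proof. The paper applies Fano separately for each index $i$ to get $H(x_i\mid M)\le H(p_i)$, sums, and then invokes concavity of $H$ to collapse $\sum_i H(p_i)\le nH(1/2+\varepsilon)$. You instead apply Fano once to the pair $(M,I)$ to obtain $I(X_I;M\mid I)\ge 1-H(1/2-\varepsilon)$ directly, avoiding the concavity step, and then use the chain rule plus independence to pass to $I(X;M)$. Both routes yield the same $n(1-H(1/2+\varepsilon))$ bound. One small point to tighten: when you say ``Bob's guess is a function of $(M,I)$'' you are implicitly assuming Bob is deterministic; for a randomized Bob, first fix his private coins to their best value by averaging, or carry his randomness $R_B$ through the conditioning (noting $R_B$ is independent of $X$).
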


\indistlb*
\begin{proof}
Suppose we had such a protocol. Then for each $i \in \brac{n}$, with $X$ as
$x_i$, the $i^\text{th}$ bit of Alice's input, and $Y$ as the message $M$ she
sends, and $P_e$ as the probability that Bob gives the wrong answer when $I = i$, we have \[
H(P_e) \ge H(X|Y)
\]
as $X_i$ is supported on two elements. So for each $i \in \brac{n}$, if $p_i$
is the probability Bob \emph{succeeds} when $I = i$, we have \[
H(x_i | M) \le H(1 - p_i) = H(p_i)
\]

Then, as the protocol succeeds with probability $1/2 + \varepsilon$, and Bob's
index $I$ is uniform on $\brac{n}$, $1/2 + \varepsilon = \sum_{i=1}^n p_i$, so
\begin{align*}
I(x; M) &= \sum_{i=1}^n I(x_i; M | (x_j)_{j=1}^i)\\
&= \sum_{i=1}^n \paren*{H(x_i | (x_j)_{j=1}^i) - H(x_i | M, (x_j)_{j=1}^i))}\\
&= \sum_{i=1}^n \paren*{1 - H(x_i | M, (x_j)_{j=1}^i)} & \mbox{as the $x_i$ are independent}\\
&\ge n -  \sum_{i=1}^nH(x_i | M)\\
&\ge n - \sum_{i=1}^nH(p_i)\\
&\ge n\paren*{1 - H\paren*{\sum_{i=1}^np_i}} & \mbox{by concavity}\\
&= n(1 - H(1/2 + \varepsilon))\\
&= \bOm{n}
\end{align*}
as $\varepsilon$ is a non-zero constant. So in particular $H(M) = \bOm{n}$ and
so $M$ is at least $n$ bits on average.

\end{proof}

\section{Lower Bound for Chosen Vertices}
\label{app:chosenvertexlb}
In this appendix, we prove that, if instead of sampling a walk from a randomly
chosen vertex we need to sample a walk from a chosen vertex, any algorithm that
gives better than a $1/2 - 2^{-2-\floor{\frac{k}{2}}}$-approximation to the
distribution of length $k\ge2$ random walks on \emph{undirected} graphs needs
at least $\bOm{n}$ space.
\begin{theorem}
\label{thm:chosenvertexlb}
For any constant $\varepsilon < 1/4 - 2^{-2-\floor{\frac{k}{2}}}$, the following
holds for all $k \ge 2$ and all $n$: there is a family of (undirected) graphs
with $n$ vertices and edges such that any algorithm that, when given a
specified vertex $v$ and then the graph as a random order stream,
$\varepsilon$-approximates the distribution of length-$k$ random walks on
graphs drawn from the family uses $\bOm{n}$ bits of space, with a constant
factor depending only on $\varepsilon$.
\end{theorem}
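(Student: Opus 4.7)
The plan is to reduce from the one-way communication complexity of Indexing (Lemma~\ref{lm:indistlb}) using an undirected analogue of the directed construction of Section~\ref{sec:dglb}, guided by the two-edge-path intuition described in the footnote. Given Alice's string $x \in \{0,1\}^{n'}$ and Bob's index $I \in [n']$ with $n' = \Theta(n)$ chosen so the graph fits inside $n$ vertices and $n$ edges, define $G_{x,I}$ on vertices $u,\ v_1,\dots,v_{n'},\ w_j^0,\ w_j^1$ $(j \in [n'])$, with Bob's single edge $(u,v_I)$ and Alice's $n'$ edges $(v_j,w_j^{x_j})$. The algorithm is always invoked at the fixed start vertex $u$, whose only neighbor is $v_I$; $v_I$ has degree exactly $2$ with neighbors $\{u,w_I^{x_I}\}$, and $w_I^{x_I}$ has degree $1$. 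On this graph a length-$k$ random walk from $u$ is deterministically at $v_I$ on every odd step and makes $\lfloor k/2\rfloor$ independent uniform choices between $u$ and $w_I^{x_I}$ on the even steps, so it visits $w_I^{x_I}$ at least once with probability exactly $1-2^{-\lfloor k/2\rfloor}$.

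Alice and Bob assemble a random-order stream jointly: with shared randomness they sample a uniform permutation of the $n'+1$ edges and a uniform guess string $y \in \{0,1\}^{n'}$. Alice emits $(v_j,w_j^{x_j})$ at its scheduled position whenever that position precedes Bob's edge; Bob emits $(u,v_I)$ at its position and, for every Alice-edge slot after Bob's edge, instead fills in $(v_j,w_j^{y_j})$, exactly analogous to the directed construction in Section~\ref{sec:harddigraphdist}. The resulting stream is a uniform random-order stream of $G_{\hat{x},I}$, where $\hat{x}_I = x_I$ holds iff Alice's $I$-edge precedes Bob's edge (probability $\tfrac{1}{2}$) or it comes later and $y_I = x_I$ (probability $\tfrac{1}{2}\cdot\tfrac{1}{2}$), so $\Pr[\hat{x}_I=x_I]=\tfrac{3}{4}$. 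Only $\hat{x}_I$ affects the walk distribution, since only $v_I$ is attached to $u$.

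Alice simulates the algorithm $\Ac$ on her portion of the stream and sends Bob the $S$-bit state; Bob completes the simulation and inspects the output walk $W$. If $W$ contains some $w_I^c$ Bob outputs $c$, and otherwise he outputs a fresh fair coin. Because $\mathrm{TV}$-closeness perturbs the expectation of any $[0,1]$-valued functional of the output distribution by at most $\varepsilon$, in the good case (probability $\tfrac{3}{4}$) this decoder succeeds with probability at least $(1-2^{-\lfloor k/2\rfloor})+2^{-\lfloor k/2\rfloor}\cdot\tfrac{1}{2}-\varepsilon = 1-2^{-\lfloor k/2\rfloor-1}-\varepsilon$, and in the bad case (probability $\tfrac{1}{4}$) with probability at least $2^{-\lfloor k/2\rfloor-1}-\varepsilon$. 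Combining,
\[
\Pr[\text{Bob outputs } x_I] \;\geq\; \tfrac{3}{4} - 2^{-\lfloor k/2 \rfloor - 2} - \varepsilon,
\]
which for any constant $\varepsilon < \tfrac{1}{4}-2^{-2-\lfloor k/2\rfloor}$ exceeds $\tfrac{1}{2}$ by a positive constant. Lemma~\ref{lm:indistlb} then forces $S = \Omega(n') = \Omega(n)$.

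The main obstacle is the walk-distribution calculation: its clean closed form $1-2^{-\lfloor k/2\rfloor}$ relies crucially on $u$ having only the single neighbor $v_I$, so that the walk reduces to a simple random walk on the two-edge path $u\text{--}v_I\text{--}w_I^{x_I}$, and this is what produces the $\lfloor k/2\rfloor$ exponent uniformly for both parities of $k$ and matches the theorem's $2^{-2-\lfloor k/2\rfloor}$ term exactly. A smaller care point is that Alice's edge positions in the shared permutation are correlated, but the analysis only needs the (correct) marginal probability that Alice's $I$-edge precedes Bob's, since only $\hat{x}_I$ enters the walk distribution.
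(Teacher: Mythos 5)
Your proposal is correct and follows essentially the same route as the paper's proof in Appendix~E: a reduction from uniform Indexing in which Bob's edge attaches the start vertex to a degree-$2$ vertex, Bob guesses the suffix of Alice's edges so the encoding is correct with probability $3/4$, and the two-edge-path walk analysis yields success probability $3/4 - 2^{-2-\floor{k/2}} - \varepsilon$. The only (cosmetic) difference is that you use a separate pair of label vertices $w_j^0, w_j^1$ per index while the paper shares two label vertices $c_0, c_1$ across all indices, which changes nothing in the argument.
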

The proof will be a simplified version of the hardness proof for digraphs given
in Section~\ref{sec:dglb}. We will prove that any algorithm giving such an
approximation gives a protocol solving $\ind_{n-3}$ on a random
instance with probability a constant factor greater than $1/4$. We restate the
lower bound on $\ind_n$ here for convenience. Recall that for $\ind_n$ Alice's
input is a string $x \in \bool^n$, Bob's an index $I \in \brac{n}$, and their
objective is for Alice to send Bob a message such that Bob can determine $x_I$.

\indistlb*

\subsection{Graph Distribution}
\label{sec:hardchosenvertdist}
We start by defining a distribution on length-$(n-3)$ undirected graph streams
that Alice and Bob can construct (with a prefix belonging to Alice and a
postfix belonging to Bob) using their inputs to $\ind_{n-3}$.

We will show that the graph streams correspond to randomly choosing a graph and
then uniformly permuting its edges, and any algorithm that generates a
distribution that is $\varepsilon$-close to the distribution of walks starting
at a specified vertex for some constant $\varepsilon < 1/4 -
2^{-2-\floor{\frac{k}{2}}}$ will be able to use this to solve $\ind_{n-3}$.

\paragraph{Vertices.} There will be $1$ vertex $a$, $n-3$ vertices
$(b_i)_{i=1}^{n-3}$, and two vertices $c_0,
c_1$. $a$ will be in a $2$-edge component with one of $c_0, c_1$.

\paragraph{Alice's edges.} Let $\pi$ be a uniformly random permutation of
$\brac{n}$, and let $J$ be drawn uniformly from $\set{0, \dots, n}$.  These
will be used to define the boundary between Bob's edges and Alice's edges.

Recall that Alice's input is a string $x \in \bool^n$. For each $i \in
\brac{J}$, Alice has the edge $b_{\pi(i)}c_{x_{\pi(i)}}$, with
$b_{\pi(1)}c_{x_{\pi(1)}}$ first, $b_{\pi(2)}c_{x_{\pi(2)}}$ second, and so on.

\paragraph{Bob's edges} Recall that Bob has the index $I$. His first edge will
be $ab_I$. Then, for each $i \in \set{J+1, \dots n}$, he has the edge
$b_{\pi(i)}c_{y_{\pi(i)}}$, where $y$ is a random $n$-bit string.

\begin{lemma}
\label{lm:chosevertexisrand}
The graph stream described above is a uniformly random order graph stream.
\end{lemma}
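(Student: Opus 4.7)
The proof will mirror Lemma~\ref{lm:digraphisrand} almost verbatim: introduce a ``reference'' distribution over graph streams that is manifestly a uniformly random-order stream (draw a graph from a fixed distribution, then apply a uniformly random permutation to its edge sequence), and then exhibit an equality in distribution with the stream constructed by Alice and Bob. Concretely, fix Bob's index $I$; the reference distribution samples $z \in \{0,1\}^{n-3}$ uniformly, forms the graph with edge set $\{a b_I\} \cup \{b_i c_{z_i}\}_{i=1}^{n-3}$, and presents its $n-2$ edges in a uniformly random order. By construction this is a random-order stream.

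Next, I would define the coupling. Given $(x,\pi,J,y)$ drawn from Alice/Bob's joint randomness (with $x \sim \mathcal{U}(\{0,1\}^{n-3})$, because the lemma is stated about $\ind_{n-3}$ on a \emph{uniform} input), define
\[
  z_i := \begin{cases} x_i & \text{if } \pi^{-1}(i) \leq J, \\ y_i & \text{if } \pi^{-1}(i) > J. \end{cases}
\]
Since $x$ and $y$ are independent uniform strings on $\{0,1\}^{n-3}$, conditional on any fixed $(\pi,J)$ each coordinate $z_i$ is an independent uniform bit, so $z$ is uniform on $\{0,1\}^{n-3}$ and independent of $(\pi,J)$. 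The edge set produced by the construction is exactly $\{ab_I\} \cup \{b_{\pi(i)}c_{z_{\pi(i)}}\}_{i=1}^{n-3} = \{ab_I\} \cup \{b_i c_{z_i}\}_{i=1}^{n-3}$, matching the reference graph.

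Finally I would check that the order in which the edges appear is a uniformly random permutation and is independent of the graph. The edge $ab_I$ appears in position $J+1$, and since $J \sim \mathcal{U}(\{0,\ldots,n-3\})$, this position is uniform in $[n-2]$ and independent of $z$. Given the position of $ab_I$, the remaining $n-3$ edges appear in the order $b_{\pi(1)}c_{z_{\pi(1)}}, b_{\pi(2)}c_{z_{\pi(2)}}, \ldots, b_{\pi(n-3)}c_{z_{\pi(n-3)}}$, and since $\pi$ is an independent uniformly random permutation of $[n-3]$, this is a uniformly random ordering of the non-$ab_I$ edges, independent of $z$. Combining these two observations gives the same joint distribution as the reference, completing the proof.

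There is no real obstacle here: the only thing to be careful about is that the lemma implicitly requires $x$ to be uniform (inherited from the ``uniform input'' assumption on $\ind_{n-3}$), because otherwise $z$ would not be uniform. I would mention this explicitly to avoid any ambiguity, and otherwise the argument is a direct transcription of the proof of Lemma~\ref{lm:digraphisrand}.
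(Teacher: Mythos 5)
Your proposal is correct and is essentially the paper's own argument, just written out with the coupling $z_i = x_i$ or $y_i$ made explicit rather than left implicit; the paper likewise compares against the reference distribution of a uniformly drawn string $z$ with randomly ordered edges and the edge $ab_I$ inserted at a uniform position, and likewise invokes the uniformity of $x$ and $y$. No issues.
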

\begin{proof}
Fix any value of Bob's index $I$. If we were to randomly draw a string $z \in
\bool^n$, randomly order edges $(b_ic_{z_i})_{i=1}^n$, and then insert $ab_I$
randomly in this stream, this would give a uniformly random order graph stream
(corresponding to drawing a graph from a fixed distribution and then randomly
permuting its edges).

But this would also be identically distributed to our graph stream, as the
strings $x$ and $y$ are both drawn uniformly at random from $\bool^n$. So we
have a uniformly random graph stream.
\end{proof}

\begin{lemma}
\label{lm:chosenvertexcorr}
With probability $3/4$, $a$ at one end of a $2$-edge path with
$c_{x_I}$ at the other end. Otherwise it is at one end of a $2$-edge path with
$c_{\overline{x_I}}$ at the other end.
\end{lemma}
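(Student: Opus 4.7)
The plan is to reduce the statement to a single probability calculation concerning the bit used in the unique edge incident to $b_I$ of the form $b_I c_\cdot$. First I would observe that by the edge list in Section~\ref{sec:hardchosenvertdist}, the only edge in the graph incident to $a$ is Bob's opening edge $ab_I$: no ``fixed'' edges touch $a$, and neither Alice's nor Bob's remaining edges do either. So $a$ automatically sits at one end of a two-edge path whose middle vertex is $b_I$, and it remains only to identify the other endpoint of that path, namely the unique edge of the form $b_I c_\cdot$ in the stream. That edge is the one with permutation index $i^* := \pi^{-1}(I)$; its far endpoint is $c_{z}$, where $z = x_I$ if $i^* \le J$ (so Alice supplied it honestly) and $z = y_I$ otherwise (so Bob supplied it using his random guess). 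The lemma thus reduces to showing $\Pb{z = x_I} = 3/4$.

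Next I would do a two-case split. Let $A := \{\pi^{-1}(I) \le J\}$. Since $\pi$ is a uniformly random permutation of $\brac{n}$ and $J$ is independent and uniform on $\{0, 1, \dots, n\}$, $\pi^{-1}(I)$ is uniform on $\brac{n}$ independent of $J$, and a short computation gives
$$
\Pb{A} \;=\; \frac{1}{n(n+1)}\sum_{j=0}^{n} j \;=\; \tfrac{1}{2}.
$$
On $A$, $z = x_I$ deterministically; on $\overline A$, $z = y_I$ is a fresh independent uniform bit, so $\Pb{z = x_I \mid \overline A} = \tfrac12$. Combining, $\Pb{z = x_I} = \tfrac12 + \tfrac12 \cdot \tfrac12 = \tfrac34$, and on the complementary event the far end of the $2$-edge path is $c_{\overline{x_I}}$, which is exactly the statement of the lemma.

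I do not anticipate any substantive obstacle. The only mild care needed is (i) confirming that $a$ has exactly one incident edge (immediate from the edge list, though worth checking in case a stray edge involving $a$ is hidden in the ``Vertices'' paragraph), and (ii) tracking the independence of $\pi$, $J$, $x$, and $y$ so that the computation of $\Pb{A}$ and the claim that $y_I$ remains an independent uniform bit conditional on $\overline A$ are valid. Both are transparent from the construction, so the entire proof is essentially the case split above.
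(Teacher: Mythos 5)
Your proposal is correct and follows essentially the same argument as the paper: condition on whether the edge incident to $b_I$ of the form $b_Ic_\cdot$ was supplied by Alice (probability $1/2$, in which case it points to $c_{x_I}$) or guessed by Bob (in which case it matches with probability $1/2$), giving $3/4$ overall. Your version is slightly more careful than the paper's -- you compute $\Pb{\pi^{-1}(I)\le J}=1/2$ explicitly and use the correct index $\pi^{-1}(I)$ where the paper writes $\pi(I)$ (immaterial, since both are uniform on $\brac{n}$) -- but the route is the same.
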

\begin{proof}
There is an edge between $a$ and $b_J$, and with probability $1/2$, $\pi(I) \le
J$, and so the only other edge incident to $b_J$ is incident to $c_{x_I}$.
Otherwise, it is incident to $c_{y_I}$, which is $c_{x_I}$ with probability
$1/2$, as $y$ is drawn at random.
\end{proof}

We will now prove that the random walk distribution from $a$ can be used to
determine the answer to $\ind_{n-3}$. 
\begin{lemma}
\label{lm:rwcvprot}
Let $\Ac$ be any algorithm using $S$ space such that, when given a sample from
the distribution on graph streams above, the output of $\Ac$ is
$\varepsilon$-close in total variation distance to sampling a $k$-step random
walk from $a$, where $k \ge 2$.
Then there is a protocol for $\ind_{n-3}$ on uniform inputs that uses $S$ space and
succeeds with probability $\frac{3}{4} - 2^{-2-\floor{\frac{k}{2}}} -
\varepsilon$. 
\end{lemma}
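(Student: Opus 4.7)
The plan is to specialize the template of Lemma~\ref{lm:rwtoprot} to this graph family. Alice and Bob use shared randomness to realize the permutation $\pi$, the split point $J$, and the interleaving order of the stream, and build the stream of Section~\ref{sec:hardchosenvertdist} from their $\ind_{n-3}$ input (with Bob using private random bits $y$ for the ``guess'' edges $b_{\pi(i)} c_{y_{\pi(i)}}$, $i > J$). Alice sends Bob the $S$-bit state of $\Ac$ after her prefix; Bob completes the execution of $\Ac$ over his suffix and reads off a walk. Bob's decoding rule is: output $0$ if the walk visits $c_0$, output $1$ if it visits $c_1$, and otherwise output a uniformly random bit. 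By Lemma~\ref{lm:chosevertexisrand} the simulated stream is genuinely random-order, so $\Ac$'s walk is $\varepsilon$-close in TV to a true $k$-step random walk from $a$.

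For correctness I first observe that the walk visits at most one of $c_0, c_1$, and that one must be $c_Z$, where $Z$ is the value of the bit encoded by the edge out of $b_I$ in the constructed graph. Indeed, every $b_j$ with $j \neq I$ has degree $1$ and $b_I$'s only incidences are $a$ and $c_Z$; the connected component of $a$ is therefore disjoint from $c_{\overline Z}$. Lemma~\ref{lm:chosenvertexcorr} then gives $\Pr[Z = x_I] = 3/4$, so conditional on the walk reaching any $c$ vertex the decoding is correct with probability $3/4$, and when it fails to reach either $c_z$ Bob's random guess is correct with probability $1/2$.

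The only remaining piece is the probability that an ideal length-$k$ walk from $a$ ever reaches $c_Z$. Because $d(a)=1$ and $b_I$ has exactly the two neighbors $\{a, c_Z\}$, the walk stays in the oscillation $a \leftrightarrow b_I$ until some even step escapes to $c_Z$, and each such step is an independent fair coin flip. Thus the walk avoids $c_Z$ throughout $k$ steps with probability exactly $2^{-\lfloor k/2\rfloor}$, and the ideal success probability of the decoder is
\[
\left(1 - 2^{-\lfloor k/2\rfloor}\right)\cdot \tfrac{3}{4} + 2^{-\lfloor k/2\rfloor}\cdot \tfrac{1}{2} = \tfrac{3}{4} - 2^{-2-\lfloor k/2\rfloor}.
\]
Since the total variation distance between $\Ac$'s walk and the ideal walk is at most $\varepsilon$, the actual protocol succeeds with probability at least $\tfrac{3}{4} - 2^{-2-\lfloor k/2\rfloor} - \varepsilon$. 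Fixing the best shared-randomness string converts the public-coin protocol to a private-coin one with the same $S$-bit communication bound on uniform inputs, exactly as in Lemma~\ref{lm:rwtoprot}. No step here poses a real obstacle; the ``hardest'' piece is only verifying the clean $2^{-\lfloor k/2\rfloor}$ escape probability, which is immediate from the bipartite two-vertex structure around $b_I$.
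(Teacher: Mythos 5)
Your proof is correct and follows essentially the same route as the paper's: the same encoding, the same decoding rule (visit $c_b$ implies answer $b$, else guess), the same $\bigl(1-2^{-\floor{k/2}}\bigr)\cdot\frac34 + 2^{-\floor{k/2}}\cdot\frac12$ accounting, and the same public-to-private coin fixing. You are in fact a bit more careful than the paper in two places it leaves implicit --- that the component of $a$ excludes $c_{\overline{Z}}$ so the walk can never reach the wrong sink, and that the escape probability is exactly $2^{-\floor{k/2}}$ via the $a\leftrightarrow b_I$ oscillation --- but these are details, not a different argument.
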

\begin{proof}
The protocol will be as follows:
\begin{enumerate}
\item Alice and Bob use their $\ind$ input to construct a corresponding random
graph stream.
\item Using $S$ bits of communication from Alice to Bob, they run $\Ac$ on the
stream.
\item If the walk output by $\Ac$ ever reaches $c_b$ for some $b \in \bool$,
output $i$. Otherwise output a random $b \in \bool$.
\end{enumerate}
Now, by Lemma~\ref{lm:chosenvertexcorr}, $a$ is at one end of a $2$-edge path
with one of $c_0, c_1$ at the other end, and there is a $3/4$ chance it is
$c_{x_I}$. So there is a \[
1 - 2^{-\floor{\frac{k}{2}}}
\]
probability any walk from $a$ reaches a vertex in $c_0, c_1$. If the walk
output by $\Ac$ does this the protocol succeeds with probability $3/4$ and
otherwise it succeeds with probability $1/2$. So the correct answer is returned
with probability at least
\[
(1 - 2^{-\floor{\frac{k}{2}}})\frac{3}{4} +
(2^{-\floor{\frac{k}{2}}})\frac{1}{2} - \varepsilon = \frac{3}{4} -
2^{-2-\floor{\frac{k}{2}}} - \varepsilon
\]
proving the correctness of the protocol. The construction of the stream itself
is done entirely with public randomness, so this gives an $S$ bit public
randomness protocol. As we are working with a fixed input distribution, this
means that there is also an $S$ bit private randomness protocol, by fixing
whichever set of public random bits maximizes the probability of success over
the uniform distribution.
\end{proof}

This then proves Theorem~\ref{thm:chosenvertexlb}, as any algorithm that can
output a constant $\varepsilon < 1/4 - 2^{-2-\floor{\frac{k}{2}}}$ approximation to
the random walk distribution will give a protocol for $\ind_{n-3}$ that works
with a probability at least a constant greater than $1/2$.

\end{appendix}

\end{document}